\theoremstyle{remark}
\newtheorem{exmpl}{Example}
\newcommand{\algoIfrule}{\SetKwIF{Ifrule}{ElseIfrule}{Else}{if}{then}{else if}{else}{endif}}
\DeclareMathOperator{\upper}{upper}
\DeclareMathOperator{\lowerop}{lower}
\DeclareMathOperator{\coeff}{coeff}
\DeclareMathOperator{\val}{val}
\DeclareMathOperator{\bound}{bound}
\DeclareMathOperator{\improves}{improves}
\DeclareMathOperator{\tight}{tight}
\DeclareMathOperator{\resolve}{resolve}
\DeclareMathOperator{\divpart}{div-part}
\DeclareMathOperator{\divderive}{div-derive}
\DeclareMathOperator{\cooper}{cooper}
\DeclareMathOperator{\lcm}{lcm}
\DeclareMathOperator{\divsolve}{div-solve}
\DeclareMathOperator{\prefix}{prefix}
\DeclareMathOperator{\topop}{top}
\DeclareMathOperator{\cores}{cores}
\DeclareMathOperator{\unblocked}{woSR}
\DeclareMathOperator{\level}{level}
\DeclareMathOperator{\Bsubseq}{B-subseq}
\DeclareMathOperator{\weight}{weight}
\DeclareMathOperator{\WCElimI}{CombDivs}
\DeclareMathOperator{\CombineDivs}{CombDivs}
\DeclareMathOperator{\vars}{vars}
\DeclareMathOperator{\chelp}{c}
\DeclareMathOperator{\lexhelp}{lex}
\DeclareMathOperator{\ishelp}{is}
\DeclareMathOperator{\guardedhelp}{guarded}
\newcommand{\cweight}{\weight_{\chelp}}
\newcommand{\CSArrow}{\Longrightarrow_{\mbox{\tiny{CS}}}}
\newcommand{\TightArrow}{\Longrightarrow_{\mbox{\tiny{tight}}}}
\newcommand{\TightSt}[3]{\langle #1, #2 \oplus #3 \rangle}
\newcommand{\SearchSt}[2]{\langle #1, #2 \rangle}
\newcommand{\ConflictSt}[3]{\langle #1, #2 \rangle \vdash #3}
\newcommand{\UnsatSt}{\mbox{unsat}}
\newcommand{\SatSt}[1]{\SearchSt{\upsilon[#1]}{\mbox{sat}}}
\newcommand{\decBnd}[3]{#1 #2 #3}
\newcommand{\propBnd}[4]{#1 #2_{#4} #3}
\newcommand{\ceilfun}[1]{\left\lceil #1 \right\rceil}
\newcommand{\floorfun}[1]{\left\lfloor #1 \right\rfloor}
\newcommand{\concseq}[1]{[\![ #1 ]\!]}
\newcommand{\emptyseq}{[\![ ]\!]}
\newcommand\FunUE[4]{#1 &#2 &#3 \quad &\mbox{ \textbf{if} } #4}
\newcommand\CSRuleName[1]{\shortintertext{\textbf{#1}}}
\newcommand\CSRuleUE[3]{\FunUE{#1}{\CSArrow}{#2}{#3}}
\newcommand\CSRuleME[3]{\FunUE{#1}{\CSArrow}{#2}{\left\lbrace \begin{array}{l} #3 \end{array} \right.}}
\newenvironment{CSRules}{\begin{array}{l l l l}}{\end{array}}
\newcommand\TightRule[4]{
\shortintertext{\textbf{#1}}
&#2 &\TightArrow &#3,\\
\shortintertext{#4}}
\newcommand\TightRuleWO[3]{
\shortintertext{\textbf{#1}}
&#2 &\TightArrow &#3}
\title{Linear Integer Arithmetic Revisited}
  \author{Martin Bromberger \and Thomas Sturm \and Christoph Weidenbach}
  \institute{Max Planck Institute for Informatics, Saarbr\"ucken, Germany   \email{\{mbromber,sturm,weidenb\}@mpi-inf.mpg.de}}
\begin{document}
  \maketitle

  \thispagestyle{plain}  
  
  \begin{abstract}
    We consider feasibility of linear integer programs in the context
    of verification systems such as SMT solvers or theorem provers.
    Although satisfiability of linear integer programs is decidable,
    many state-of-the-art solvers neglect termination in favor of
    efficiency.  It is challenging to design a solver that is both
    terminating and practically efficient.  Recent work by Jovanovi{\'c}
    and de Moura constitutes an important step into this direction.
    Their algorithm CUTSAT is sound, but does not terminate, in general.
    In this paper we extend their CUTSAT algorithm by refined inference
    rules, a new type of conflicting core, and a dedicated rule application
    strategy.  This leads to our algorithm CUTSAT++, which
    guarantees termination. 
\end{abstract}
  
\begin{keywords}
    Linear arithmetic, SMT, SAT, DPLL, Linear programming, Integer arithmetic
\end{keywords}

  \section{Introduction}
\label{SE: Changes between CUTSAT and CUTSAT++}
\label{SE: Differences between CUTSAT and CUTSAT++}

Historically, feasibility of linear integer problems is a classical problem, which has been
addressed and thoroughly investigated by at least two independent research
lines: (i)~integer and mixed real integer linear programming for optimization~\cite{JungerLiebling:10a},
(ii)~first-order quantifier elimination and decision procedures for Presburger
 Arithmetic and corresponding 
complexity results~\cite{Presburger:29a,Cooper:72a,FischerRabin:74a,Oppen:78a,FerranteRackoff:75a,FerranteRackoff:79a,GathenSieveking:78a,Berman:77a,Berman:80a,Furer:82a,Gradel:87a,Weispfenning:90a,LasarukSturm:07a}.
We are interested in feasibility of linear integer problems, which we call simply \emph{problems}, in the context of
the combination of theories, as they occur, e.g., in the context of SMT solving
or theorem proving.  From this perspective, both these research lines address problems that are too general for our
purposes: with the former, the optimization aspects go considerably beyond pure
feasibility. The latter considers arbitrary Boolean combinations of constraints
and quantifier alternation or even parametric problems. 

Consequently, the SMT community has developed several interesting approaches on their
own \cite{CutProofs,Mathsat,SplittingOnDemand}.
These solvers typically neglect termination and completeness in favour of efficiency.
More precisely, these approaches are based on a branch-and-bound strategy, where the rational relaxation of an integer problem is used to cut off and branch on integer solutions.
Together with the known a priori integer bounds~\cite{BoundingBox} for a problem this yields
a terminating and complete algorithm. However, these bounds are so large that
for many practical problems the resulting branch-and-bound search space cannot be
explored in reasonable time.
Hence, the a priori bounds are not integrated in the implementations of the approaches.

On these grounds, the recent work by Jovanovi{\'c} and de Moura~\cite{JovanovicM11,CutChase}, although itself not terminating,
 constitutes an important step towards an algorithm that is both efficient and terminating. The termination
does no longer rely on bounds that are a priori exponentially large in the occurring parameters.
Instead, it relies on structural properties of the problem, which are explored by their CUTSAT algorithm. 
The price for this result is an algorithm that is by far
more complicated than the above-mentioned branch-and-bound approach.
In particular, it has to consider divisibility constraints in addition to inequalities.

Our interest in an algorithm for integer constraints originates from a possible combination with superposition, e.g., see~\cite{FietzkeWeidenbach12}. 
In the superposition context integer constraints are part of the first-order clauses. 
Variables in constraints are typically unguarded, so that an efficient decision procedure for this case is a prerequisite for an
efficient combined procedure.

Our contribution is an extension and refinement of the CUTSAT algorithm, which we call CUTSAT++.
In contrast to CUTSAT, our CUTSAT++ generally terminates.
The basic idea of both algorithms is to reduce a problem containing unguarded integer
variables
to a problem containing only guarded variables.
These unguarded variables are not eliminated. 
Instead, one explores the unguarded variables by adding 
constraints on smaller variables to the problem, with respect to a strict total ordering where all unguarded variables are larger than all guarded variables. 
After adding sufficiently many constraints, feasibility of the problem only depends on guarded variables.
Then a CDCL style algorithm tests for feasibility by employing exhaustive propagation.
The most sophisticated part is to turn an unguarded variable into a guarded variable. 
Quantifier elimination techniques, such as
Cooper elimination~\cite{Cooper:72a}, do so by removing the unguarded variable. 
In case of Cooper elimination, the price to pay is an exponentially growing Boolean
structure and exponentially growing coefficients (see Section~\ref{SE: Weak-Cooper-Elimination}). 
Since integer linear programming is NP-complete, all algorithms known today can not prevent such a kind of behavior, in general. 
Since Cooper elimination does not care about the concrete structure of a given problem, the exponential behavior is almost guaranteed.
The idea of both CUTSAT and CUTSAT++ is therefore to simulate a lazy variation of Cooper elimination.
This leaves space for model assumptions and simplification rules in order for the algorithm to adapt 
to the specific structure of a problem and hence to systematically avoid certain cases of the worst-case exponential behavior observed with Cooper elimination. 

This paper is and extended version of~\cite{BrombergerSW15}.
It is organized as follows. After fixing some notation in Section~\ref{SE: Preliminaries}, 
we present three examples for problems where
CUTSAT diverges. The divergence of CUTSAT can be fixed by respective refinements on the
original CUTSAT rules. However, in a fourth example the combination of our refinements results in a frozen state. 
Our conclusion is that CUTSAT lacks, in addition to our rule refinements, a third type of conflicting cores, which we call
\emph{diophantine conflicting core}. 
Theorem~\ref{lemma: Weak-Elim equivalence} in Section~\ref{SE: Weak-Cooper-Elimination} actually implies that any procedure that is based on what we call \emph{weak Cooper elimination} needs
to consider this type of conflicting core for completeness. 
In Sections~\ref{SE: Strong Conflict Resolution}-\ref{SE: Termination and Completeness},
we refine the inference rules for the elimination of unguarded variables
on the basis of our results from Section~\ref{SE: Weak-Cooper-Elimination} and show their soundness, completeness, and termination.
We finally give conclusions and point at possible directions for future research.

  \section{Motivation}
\label{SE: Preliminaries}

We use \emph{variables} $x$, $y$, $z$, $k$, possibly with indices. Furthermore, we use \emph{integer constants} $a$, $b$, $c$, $d$, $e$, $l$, $v$, $u$, 
\emph{linear polynomials} $p$, $q$, $r$, $s$, and \emph{constraints} $I$, $J$ possibly with indices.
As input \emph{problems}, we consider finite sets of constraints $C$ corresponding to and sometimes used as conjunction over their elements.
Each constraint $I$ is either an inequality $a_n x_n + \ldots + a_1 x_1 + c \leq 0$ or 
a divisibility constraint $d \mid a_n x_n + \ldots + a_1 x_1 + c$. We denote $\coeff(I,x_i) = a_i \in \mathbb{Z}$. 
$\vars(C)$ denotes the set of variables occurring in $C$.
We sometimes write $C(x)$ in order to emphasise that $x \in \vars(C(x))$.
A problem $C$ is satisfiable if $\exists X: C$ holds, where $X=\vars(C)$. 
For true we denote $\top$ and for false we denote $\bot$.
Since $d \mid c x + s \equiv d \mid -c x -s$, we may assume that $c > 0$ for all $d \mid c x + s \in C$.
A variable $x$ is \emph{guarded} in a problem $C$ if $C$ contains constraints of the form $x - u_x \leq 0$ and $-x + l_x \leq 0$.
Otherwise, $x$ is \emph{unguarded} in $C$. Note that guarded variables are \emph{bounded} as defined in~\cite{CutChase} but not vice versa.
A constraint is \emph{guarded} if it contains only guarded variables. Otherwise, it is \emph{unguarded}.

Our algorithm CUTSAT++ aims at deciding whether or not a given problem $C$ is satisfiable.
It either ends in the state \emph{unsat}, or 
in a state $\langle \upsilon, \text{sat} \rangle$ where $\upsilon$ is a satisfiable assignment for $C$.
In order to reach one of those two final states, the algorithm produces \emph{lower bounds} $x \geq b$
and \emph{upper bounds} $x \leq b$ for the variables in $C$.
The produced bounds are stored in a sequence $M=\concseq{\gamma_1, \ldots, \gamma_n}$, which describes a partial model.
The empty sequence is denoted by $\emptyseq$. We use $\concseq{M,\gamma}$ and $\concseq{M_1,M_2}$ to denote the concatenation 
of a bound $\gamma$ at the end of $M$, and $M_2$ at the end of $M_1$, respectively.

By $\lowerop(x,M)=b$ and $\upper(x,M)=b$ we denote the value $b$ of the greatest lower bound $x \geq b$ and the least upper bound $x \leq b$ for a
variable $x$ in $M$, respectively~\cite{CutChase}.
If there is no lower (upper) bound for $x$ in $M$, then $\lowerop(x,M) = -\infty$ ($\upper(x,M) = \infty$).
The definitions of $\upper$ and $\lowerop$ are extended to polynomials as done in~\cite{CutChase}.

A state in CUTSAT++ is of the form $S = \SearchSt{M}{C}$ or $S = \ConflictSt{M}{C}{I}$,
or is one of the two \emph{final states} $\langle \upsilon, \text{sat} \rangle$, \emph{unsat}~\cite{CutChase}.
The \emph{initial-state} for a problem  $C$ is $\SearchSt{\emptyseq}{C}$.
For a state $S = \SearchSt{M}{C}(\vdash I)$, inequality $p \leq 0$ is a \emph{conflict} if $\lowerop(p,M) > 0$.
For a state $S = \SearchSt{M}{C}(\vdash I)$, divisibility constraint $d \mid ax + p$ is a \emph{conflict} if all variables 
in $p$ are fixed, and $d \nmid a b + \lowerop(p,M)$ for all $b$ with $\lowerop(x,M) \leq b \leq \upper(x,M)$.
In a state $S = \ConflictSt{M}{C}{I}$, the constraint $I$ is always a conflict.

The partial model $M$ of a state $\SearchSt{M}{C}(\vdash I)$ is complete if all variables $x$ in $C$ are \emph{fixed}, in the sense that $\upper(x,M) = \lowerop(x,M)$.
In this case, we define $\upsilon[M]$ as the assignment that assigns to every variable $x$ the value $\lowerop(x,M)$.
With $\val(p,M) = \lowerop(p,M)$ we denote the value assigned to a fixed polynomial $p$, i.e., $\val(p,M)$ is only defined if all variables occurring in $p$ are fixed in $M$.
A state is \emph{frozen} if it is not a final state and no rule is applicable.

Our CUTSAT++ algorithm is defined as a transition system consisting of the following rules:

\begin{align*}
\begin{CSRules}
\CSRuleName{Decide}
\CSRuleME{\SearchSt{M}{C}}{\SearchSt{\concseq{M, \decBnd{x}{\geq}{b}}}{C}}{\upper(x,M) \neq +\infty,\\ \lowerop(x,M) < b = \upper(x,M)} \\ \\[-2mm]
\CSRuleME{\SearchSt{M}{C}}{\SearchSt{\concseq{M, \decBnd{x}{\leq}{b}}}{C}}{\lowerop(x,M) \neq -\infty,\\ \lowerop(x,M) = b < \upper(x,M)} \\
\CSRuleName{Propagate}
\CSRuleME{\SearchSt{M}{C}}{\SearchSt{\concseq{M, \propBnd{x}{\geq}{b}{I}}}{C}}
{J \in C \mbox{ is an inequality}, \\ \coeff(J,x) < 0,\\ \improves(J,x,M),\\ b = \bound(J,x,M),\\ I = \tight(J,x,M)} \\ \\[-2mm]
\CSRuleME{\SearchSt{M}{C}}{\SearchSt{\concseq{M, \propBnd{x}{\leq}{b}{I}}}{C}}
{J \in C \mbox{ is an inequality}, \\ \coeff(J,x) > 0,\\ \improves(J,x,M),\\ b = \bound(J,x,M),\\ I = \tight(J,x,M)} \\
\CSRuleName{Propagate-Div}
\CSRuleME{\SearchSt{M}{C}}{\SearchSt{\concseq{M, \propBnd{x}{\geq}{c}{I}}}{C}}
{D = (d \mid ax+p) \in C, \val(p,M)=k,\\ b = \lowerop(x,M), d \nmid ab+k,\\
c = \bound(D,x,M), c \leq \upper(x,M), \\ 
I = \divderive(D,x,M)} \\ \\[-2mm]
\CSRuleME{\SearchSt{M}{C}}{\SearchSt{\concseq{M, \propBnd{x}{\leq}{c}{I}}}{C}}
{D = (d \mid ax+p) \in C, \val(p,M)=k,\\ b = \upper(x,M), d \nmid ab+k,\\
c = \bound(D,x,M), c \geq \lowerop(x,M),\\ 
I = \divderive(D,x,M)} \\
\CSRuleName{Conflict} \\[-1.5mm]
\CSRuleUE{\SearchSt{M}{C}}{\ConflictSt{M}{C}{p \leq 0}}{p \leq 0 \in C, \lowerop(p,M) > 0} \\[1.5mm]
\CSRuleName{Conflict-Div}
\CSRuleME{\SearchSt{M}{C}}{\ConflictSt{M}{C}{I}}
{J = (d \mid ax+p) \in C, \val(p,M)=k, \\ b = \lowerop(x,M), d \nmid ab+k,\\  
\bound(J,x,M) > \upper(x,M),\\ I = \divderive(J,x,M)\\} \\ \\[-2mm]
\CSRuleME{\SearchSt{M}{C}}{\ConflictSt{M}{C}{I}}
{J = (d \mid ax+p) \in C, \val(p,M)=k, \\ b = \upper(x,M), d \nmid ab+k,\\ 
\bound(J,x,M) < \lowerop(x,M),\\ I = \divderive(J,x,M)}
\end{CSRules}
\end{align*}

\begin{align*}
\begin{CSRules}
\CSRuleName{Unsat-Div}
\CSRuleME{\SearchSt{M}{C}}{\UnsatSt}{ d \mid a_1 x_1 + \ldots + a_n x_n + c \in C,\\ \gcd(d,a_1,\ldots,a_n) \nmid c} \\
\CSRuleName{Sat} \\[-1.5mm]
\CSRuleUE{\SearchSt{M}{C}}{\SatSt{M}}{\upsilon[M ] \text{ satisfies } C} \\[1.5mm]
\CSRuleName{Forget} \\[-1.5mm]
\CSRuleUE{\SearchSt{M}{C \cup \lbrace J \rbrace}}{\SearchSt{M}{C}}{C \vdash_{\mathbb{Z}} J,\text{and }J \not\in C} \\[1.5mm]
   \CSRuleName{Slack-Intro}
   \CSRuleME{\SearchSt{M}{C}}{\SearchSt{M}{C \cup C_s}}
   {\SearchSt{M}{C} \text{ is stuck},\\ \text{$x$ is stuck},\\ x_S \mbox{ is the slack-variable}, \\ C_s = \lbrace -x_S \leq 0, x-x_S \leq 0, \\ \qquad \quad -x-x_S \leq 0\rbrace}
\end{CSRules}
\end{align*}

\begin{align*}
\begin{CSRules}
\CSRuleName{Resolve} \\[-1.5mm]
\CSRuleUE{\ConflictSt{\concseq{M, \gamma}}{C}{I}}{\ConflictSt{M}{C}{\resolve(\gamma,I)}}{\gamma \text{ is an implied bound}} \\[1.5mm]
\CSRuleName{Skip-Decision}
\CSRuleME{\ConflictSt{\concseq{M, \gamma}}{C}{p \leq 0}}{\ConflictSt{M}{C}{p \leq 0}}{\gamma \text{ is a decided bound,}\\ \lowerop(p,M)>0} \\
\CSRuleName{Unsat} \\[-1.5mm]
\CSRuleUE{\ConflictSt{M}{C}{b \leq 0}}{\UnsatSt}{b > 0}  \\[1.5mm]
\CSRuleName{Backjump}
\CSRuleME{\ConflictSt{\concseq{M, \gamma, M'}}{C}{J}}{\SearchSt{\concseq{M,\propBnd{x}{\geq}{b}{I}}}{C}}
{\gamma \text{ is a decided bound,}\\ \coeff(J,x) < 0,\\ \improves(J,x,M),\\ I = \tight(J,x,M),\\ b = \bound(J,x,M)} \\ \\
\CSRuleME{\ConflictSt{\concseq{M, \gamma, M'}}{C}{J}}{\SearchSt{\concseq{M,\propBnd{x}{\leq}{b}{I}}}{C}}
{\gamma \text{ is a decided bound,}\\ \coeff(J,x) > 0,\\ \improves(J,x,M),\\ I = \tight(J,x,M),\\ b = \bound(J,x,M)} \\
\CSRuleName{Learn} \\[-1.5mm]
\CSRuleUE{\ConflictSt{M}{C}{I}}{\ConflictSt{M}{C \cup \{I\}}{I}}{I \not \in C}
\end{CSRules}
\end{align*}

\begin{align*}
\begin{CSRules}
   \CSRuleName{Solve-Div-Left}
   \CSRuleME{\SearchSt{M}{C}}{\SearchSt{M}{C'}}
    {\mbox{divisibility constraints } I_1,I_2 \in C, \\ x \mbox{ is top in } I_1 \mbox{ and } I_2, \\ x \mbox{ is unguarded},\\ \mbox{all other vars. in } I_1, I_2 \mbox{ are fixed}, \\
     (I'_1,I'_2) = \divsolve(x, I_1,I_2),\\ C' = C \setminus \lbrace I_1, I_2 \rbrace \cup \lbrace I'_1, I'_2 \rbrace, \\ I'_2 \mbox{ is not a conflict}}
   \CSRuleName{Solve-Div-Right}
   \CSRuleME{\SearchSt{M}{C}}{\SearchSt{M'}{C'}}
   {\mbox{divisibility constraints } I_1,I_2 \in C, \\ x \mbox{ is top in } I_1 \mbox{ and } I_2, \\ x \mbox{ is unguarded},\\ \mbox{all other vars. in } I_1, I_2 \mbox{ are fixed}, \\
    (I'_1,I'_2) = \divsolve(x, I_1,I_2),\\ C' = C \setminus \lbrace I_1, I_2 \rbrace \cup \lbrace I'_1, I'_2 \rbrace, \\ I'_2 \mbox{ is a conflict}, \\
    y  = \topop(I'_2), \\ M' = \prefix(M,y)}
   \CSRuleName{Resolve-Cooper}
   \CSRuleME{\SearchSt{M}{C}}{\SearchSt{M'}{C \cup R_k \cup R_c}}
   {(x,C') \mbox{ is a conflicting core}, \\ x \mbox{ is unguarded},\\ \mbox{ all } z \prec x \mbox{ are fixed and } C' \subseteq C, \\
    \mbox{if } J \in C \mbox{ is a conflict, then } \topop(J) \not \prec x, \\ \cooper(x,C') = (R_k,R_c),\\ y = \min_{I \in R_c}\{\topop(I)\},\\ M' = \prefix(M,y)}
\end{CSRules}
\end{align*}
CUTSAT~\cite{CutChase} includes all of the rules of CUTSAT++ except for the rules Solve-Div-Left, Solve-Div-Right, and Resolve-Cooper, which are explained in more detail in Section \ref{SE: Strong Conflict Resolution}.

Via applications of the rule Decide, CUTSAT++ adds \emph{decided bounds} $\decBnd{x}{\leq}{b}$ or $\decBnd{x}{\geq}{b}$ to the 
sequence $M$ in search-state $S$~\cite{CutChase}. 
A decided bound generally assigns a variable $x$ to $\lowerop(x,M)$ or $\upper(x,M)$.
Via applications of the propagation rules, CUTSAT++ adds \emph{propagated bounds} $\propBnd{x}{\geq}{b}{I}$ or $\propBnd{x}{\leq}{b}{I}$
to the sequence $M$, where $I$ is a generated constraint propagating the bound. To this end the
function $\bound(J,x,M)$ computes the strictest bound value $b$ and 
the function $\tight(J,x,M)$ computes the corresponding justification $I$ for constraint $J$ under the partial model $M$~\cite{CutChase}.
For an inequality $J$, $\bound(J,x,M)$ is defined as follows:\newline
\centerline{$
  \bound(ax+p \leq 0,x,M) = \left\lbrace \begin{array}{lll}
    -\left\lceil
      \frac{\lowerop(p,M)}{a}
    \right\rceil & \mbox{ if } a > 0 ,\\ \\
    -\left\lfloor \frac{\lowerop(p,M)}{a}\right\rfloor & \mbox{ if } a < 0 .
  \end{array} \right.
$}\newline \newline
Whenever $a>0$, $J$ propagates only upper bounds for $x$. Whenever $a<0$, $J$ propagates only lower bounds for $x$.
For a divisibility constraint $D$, $\bound(D,x,M)$ is defined as follows:\newline
\centerline{$
  \bound(d \mid ax+p,x,M) = \left\lbrace
  \begin{array}{lll}
     \left\lceil
      \frac{d \left\lceil \frac{a b + k}{d} \right\rceil - k}{a}
    \right\rceil & \mbox{ if } b = \lowerop(x,M) ,\\ \\
    \left\lfloor
      \frac{d \left\lfloor \frac{a b + k}{d} \right\rfloor - k}{a}
    \right\rfloor & \mbox{ if } b = \upper(x,M),
  \end{array} \right.
$}\newline \newline
where $a > 0$, $d > 0$, all variables in $p$ are fixed, and $\lowerop(p) = k$. Whenever we choose $b = \lowerop(x,M)$ in the above function, $D$ propagates a lower bound.
Whenever we choose $b = \upper(x,M)$ in the above function, $D$ propagates an upper bound. The bound value $\bound(D,x,M)$ for divisibility constraint $D$ is computed in such a way that CUTSAT++ never skips a satisfiable solution for $D$:

\begin{lemma}
\label{div-prop-lemma}
Let $D = d \mid ax+p$ be a divisibility constraint with $a > 0$. Let $\SearchSt{M}{C}$ be a state where the polynomial $p$ is fixed. Let $\bound(d \mid ax+p,x,M)$ denote a lower bound value. (1) Then it holds for all $e \in \{\lowerop(x,M), \ldots, \bound(d \mid ax+p,x,M)-1\}$ that $d \nmid a e + \lowerop(p,M)$.
Otherwise, $\bound(d \mid ax+p,x,M)$ denotes an upper bound value:
(2)  Then it holds for all $e \in \{\bound(d \mid ax+p,x,M,{\leq}) +1, \ldots, \upper(x,M)\}$ that $d \nmid a e + \lowerop(p,M)$.
\end{lemma}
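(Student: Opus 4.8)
The plan is to prove both parts by one and the same elementary estimate, so I would begin by fixing notation and reducing to the lower-bound case. Let $k = \val(p,M) = \lowerop(p,M)$ be the constant value of the fixed polynomial $p$; then for an integer $e$ the constraint $D$ is satisfied at $x = e$ precisely when $d \mid ae + k$. Since $d > 0$, the number $d\ceilfun{\frac{ab+k}{d}}$ is the least multiple of $d$ that is $\geq ab+k$, and $d\floorfun{\frac{ab+k}{d}}$ is the greatest multiple of $d$ that is $\leq ab+k$; more usefully, \emph{every} multiple of $d$ that is $\geq ab+k$ is already $\geq d\ceilfun{\frac{ab+k}{d}}$, and dually every multiple of $d$ that is $\leq ab+k$ is $\leq d\floorfun{\frac{ab+k}{d}}$. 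I would also note that the hypothesis that $\bound(d \mid ax+p,x,M)$ denotes a lower bound value tacitly means $b = \lowerop(x,M)$ is finite (symmetrically, an upper bound value means $b = \upper(x,M)$ is finite), so the displayed expressions are all well defined, and $p$ being fixed makes $k$ well defined.

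For part~(1), put $b = \lowerop(x,M)$, $m = \ceilfun{\frac{ab+k}{d}}$, and $n := \bound(d \mid ax+p,x,M) = \ceilfun{\frac{dm - k}{a}}$. Suppose for contradiction that some integer $e$ with $b \leq e \leq n - 1$ satisfies $d \mid ae + k$. As $a > 0$ and $e \geq b$ we get $ae + k \geq ab + k$, so $ae + k$ is a multiple of $d$ lying at or above $ab+k$, hence $ae + k \geq dm$. Dividing by $a > 0$ yields $e \geq \frac{dm - k}{a}$, and since $e \in \mathbb{Z}$ this sharpens to $e \geq \ceilfun{\frac{dm-k}{a}} = n$, contradicting $e \leq n - 1$. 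Therefore $d \nmid ae + k$ for every $e \in \{b, \ldots, n - 1\}$, as claimed. (When $d \mid ab+k$ one has $n = b$ and the index set is empty, so the statement holds vacuously; this is exactly the situation ruled out by the side condition of \textbf{Propagate-Div}.)

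Part~(2) is the mirror image: with $b = \upper(x,M)$, $m = \floorfun{\frac{ab+k}{d}}$, and $n = \bound(d \mid ax+p,x,M) = \floorfun{\frac{dm - k}{a}}$, an integer $e$ with $n + 1 \leq e \leq b$ and $d \mid ae + k$ would satisfy $ae + k \leq ab + k$ (since $a > 0$ and $e \leq b$), hence $ae + k \leq dm$, hence $e \leq \frac{dm-k}{a}$, hence $e \leq \floorfun{\frac{dm-k}{a}} = n$ by integrality of $e$, contradicting $e \geq n + 1$. The only step requiring any care in either direction is the promotion of the rational bound on $e$ to an integral one via the ceiling (resp.\ floor), which rests on $e \in \mathbb{Z}$; the rest is just monotonicity of multiplication by the positive integer $a$. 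I do not anticipate a genuine obstacle: the lemma merely records that $\bound$ halts at the first integer, in the relevant direction, at which $d \mid ae + k$ could possibly hold, so that propagation never jumps over a solution of $D$.
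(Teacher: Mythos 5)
Your proof is correct and takes essentially the same route as the paper's: both argue by contradiction that any $e \geq \lowerop(x,M)$ with $d \mid ae+k$ must satisfy $e \geq \bound(D,x,M)$, using monotonicity in $a>0$ and the integrality of $e$ to promote the rational inequality to the ceiling (the paper phrases this by showing the bound formula evaluated at $e$ returns $e$, you phrase it via the least multiple of $d$ above $ab+k$, but these are the same estimate). Both treat the upper-bound case as the symmetric mirror image.
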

\begin{proof}
We only prove the case that $\bound(d \mid ax+p,x,M)$ denotes a lower bound.
The proof for the second case is analogous.
Assume for a contradiction that there exists an $e \in \{\lowerop(x,M), \ldots, \bound(d \mid ax+p,x,M)-1\}$ such that $d \mid a e + \lowerop(p,M)$ holds.
Since $d \mid a e + \lowerop(p,M)$, it holds that
$ 
   \left\lceil \frac{a e + k}{d} \right\rceil = \frac{a e + k}{d} \mbox{ and } 
   \left\lceil
       \frac{d \left\lceil \frac{a e + k}{d} \right\rceil - k}{a}
   \right\rceil
   =
   \left\lceil
       \frac{d \frac{a e + k}{d} - k}{a}
   \right\rceil
   =
   e
$.
Since $\lowerop(x,M) \leq e$, it holds that
$
\left\lceil \frac{a \lowerop(x,M) + k}{d} \right\rceil \leq \left\lceil \frac{a e + k}{d} \right\rceil
$. Thence, \newline
\centerline{$
   \bound(d \mid ax+p,x,M,{\geq}) =
   \left\lceil
       \frac{d \left\lceil \frac{a b + k}{d} \right\rceil - k}{a}
   \right\rceil
   \leq
   \left\lceil
       \frac{d \left\lceil \frac{a e + k}{d} \right\rceil - k}{a}
   \right\rceil
   = e
$.}
Therefore, $e \notin \{\lowerop(x,M), \ldots, \bound(d \mid ax+p,x,M)-1\}$, which contradicts our initial assumption.
\end{proof}

The rules of the CUTSAT++ calculus are restricted in such a way that $M$ stays \emph{consistent}, i.e. $\lowerop(x,M) \leq \upper(x,M)$ for all variables $x \in X$.
CUTSAT++ also propagates only bounds that are more \emph{strict} than the current bound for the variable $x$, e.g., CUTSAT++ only propagates lower bound $x \geq b$ if $b > \lowerop(x,M)$.
This behaviour is expressed by the following predicate for inequalities $J = a x + p \leq 0$: \newline
\centerline{$
\improves(J,x,M) = \left\lbrace
\begin{array}{lll}
 &\lowerop(x,M) < \bound(J,x,M) \leq \upper(x,M)  &, \mbox{ if } a < 0,\\
 &\lowerop(x,M) \leq \bound(J,x,M) < \upper(x,M)  & ,\mbox{ if } a > 0.\\
\end{array} \right.
$}

The justifications annotated to the propagated bounds are necessary for a CDCL-like conflict resolution.
In CDCL, boolean resolution is used to combine the current conflict $C \vee l$ with a clause used for unit propagation $C' \vee \bar{l}$ to receive a new conflict $C \vee C'$ without literals $l$ or $\bar{l}$.
For CUTSAT++ the function $\resolve$ fulfils a similar purpose: \newline
\centerline{$
\resolve(\propBnd{x}{\bowtie}{b}{cx+q \leq 0}, ax+p \leq 0)
 = 
\left\{\begin{array}{ll}
|a|q + |c|p \leq 0 &, \text{ if } a \cdot c < 0 ,\\
ax+p\leq0 &, \text{ otherwise }.
\end{array} \right.
$}
Whenever $J = ax+p \leq 0$ is a conflict in state $\ConflictSt{\concseq{M, \gamma}}{C}{J}$, where $C \vdash_{\mathbb{Z}} J$ and $\gamma = \propBnd{x}{\bowtie}{b}{cx+q \leq 0}$, $J' = \resolve(\propBnd{x}{\bowtie}{b}{cx+q \leq 0}, ax+p \leq 0)$ is also a conflict in state $\ConflictSt{M}{C}{J'}$ and $C \vdash_{\mathbb{Z}} J'$~\cite{CutChase}.
The function $\resolve$ only results in a new conflict because CUTSAT++ requires that the justification $I$ of bound $\propBnd{x}{\bowtie}{b}{I}$, with ${\bowtie} \in \{\leq,\geq\}$, in state $\SearchSt{M}{C}(\vdash J)$ fulfils the following conditions:
Firstly, $I$ is an inequality and $C \vdash_{\mathbb{Z}} I$.
Secondly, if ${\bowtie} = {\leq}$, then $\coeff(I,x) = 1$.
If ${\bowtie} = {\geq}$, then $\coeff(I,x) = -1$.
Finally, $\bound(I,x,M) \bowtie b$, i.e. the justification $I$ implies at least a bound as strong as $\propBnd{x}{\bowtie}{b}{I}$.

The function $\tight(J,x,M) = I$ defined by the set of rules in Figure~\ref{fig: tight rules} calculates a justification $I$ in variable $x$ for the bound $\propBnd{x}{\bowtie}{b}{I}$ propagated from inequality $J = \pm ax + p \leq 0$, where $b = \bound(J,x,M)$.
A state in this rule system is a pair \newline
\centerline{$
\TightSt{M'}{\pm ax + as}{r},
$}
where $a > 0$, $s$ and $r$ are polynomials, and $M'$ is a prefix of the initial $M$, i.e. $M = \concseq{ M', M''}$.
The initial state for $\tight(\pm ax + p \leq 0,x,M)$ is $\TightSt{M}{\pm ax}{p}$.
The first goal is to produce an inequality where all coefficients are divisible by $a = \coeff(J,x)$.
To this end, we apply the rules (Fig. \ref{fig: tight rules}) until the polynomial on the right side of ${\oplus}$ becomes empty.
In this case, all coefficients are divisible by $a = \coeff(J,x)$ and we derive the justification $I$ with the final rule Round.

\begin{figure}[t]
\begin{align*}
\TightRule{Consume}{\TightSt{ M}{ \pm ax + as }{ aky + r }}{\TightSt{ M}{ \pm ax + as + aky }{ r }}{where $x \neq y$.}
\TightRule{Resolve-Implied}{\TightSt{ \concseq{M, \gamma}}{ \pm ax + as }{ p }}{\TightSt{ M}{ \pm ax + as }{ q }}{where $\gamma$ is a propagated bound and $q \leq 0 = \resolve(\gamma,p \leq 0)$.}
\TightRule{Decide-Lower}{\TightSt{ \concseq{M, \decBnd{y}{\geq}{b}}}{ \pm ax + as }{ cy + r }}{\TightSt{ M}{ \pm ax + as + aky }{ r + (ak-c)q }}{where $\propBnd{y}{\leq}{b}{I}$ in $M$, with $I = y + q \leq 0$, and $k = \ceilfun{ \frac{c}{a}}$.}
\TightRule{Decide-Lower-Neg}{\TightSt{ \concseq{M, \decBnd{y}{\geq}{b}}}{ \pm ax + as }{ cy + r }}{\TightSt{ M}{ \pm ax + as }{ cq + r }}{where $\propBnd{y}{\leq}{b}{I}$ in $M$, with $I = y - q \leq 0$, and $c < 0$.}
\TightRule{Decide-Upper}{\TightSt{ \concseq{M, \decBnd{y}{\leq}{b}}}{ \pm ax + as }{ cy + r }}{\TightSt{ M}{ \pm ax + as + aky }{ r + (c-ak)q }}{where $\propBnd{y}{\geq}{b}{I}$ in $M$, with $I = -y + q \leq 0$, and $k = \floorfun{ \frac{c}{a}}$.}
\TightRule{Decide-Upper-Pos}{\TightSt{ \concseq{M, \decBnd{y}{\leq}{b}}}{ \pm ax + as }{ cy + r }}{\TightSt{ M}{ \pm ax + as }{ cq + r }}{where $\propBnd{y}{\geq}{b}{I}$ in $M$, with $I = -y + q \leq 0$, and $c > 0$.}
\TightRuleWO{Round (and terminate)}{\TightSt{ M}{ \pm ax + as }{ b }}{ \pm x + s + \ceilfun{ \frac{b}{a} } \leq 0 }
\end{align*}
\caption{Rule system that derives tightly propagating inequalities~\cite{CutChase}}
\label{fig: tight rules}
\end{figure}

The function $\divderive(D,x,M)$ calculates a justification $I$ in variable $x$ for the bound $\propBnd{x}{\bowtie}{b}{I}$ propagated from the divisibility constraint $D = d \mid ax + p$, where $b = \bound(D,x,M)$.
The justification computed by $\divderive$ is derived from a set of inequalities describing the propagated bound value.
For instance, let us look at the lower bound value: \newline
\centerline{$
\bound(D,x,M) = \left\lceil \frac{d \left\lceil \frac{a b + k}{d} \right\rceil - k}{a}\right\rceil,
$}
where $D = d \mid ax + p$, $a > 0$, $d > 0$, $p$ is fixed, $k = \lowerop(p,M)$, and $b = \lowerop(x,M)$.
The sub-term $c = \left\lceil \frac{a b + k}{d} \right\rceil$ is equal to the bound value computable from the \emph{diophantine representation} $d z = ax + p$ of the divisibility constraint $D$: \newline
\centerline{$
\bound(-dz +ax +p \leq 0,z,M) = \left\lceil \frac{a b + k}{d} \right\rceil.
$}
Notice that $z$ is a variable not occurring in the problem, and only introduced for the above calculation.
The fitting tightly propagating inequality for the sub-terms are abbreviated with $\divpart(D,x,M)$: \newline
\centerline{$
  \divpart(D,x,M) = \left\lbrace
  \begin{array}{lll}
    \tight(-dz +ax +p \leq 0,x,M) &, \mbox{ if } b = \lowerop(x,M), \\
    \tight(dz -ax -p \leq 0,x,M) &, \mbox{ if } b = \upper(x,M).
  \end{array} \right.
$}
For $\divpart$, we forbid $\tight$ to apply the Consume rule to the variables $x$ and $z$.
The restriction to Consume guarantees that the inequality $\pm z + r \leq 0 = \divpart(D,x,M)$ does not contain $x$.
Given $I_2 = -z + r \leq 0 = \divpart(D,x,M)$ and $I_1 = d z - ax - p \leq 0$, we use $\resolve(\propBnd{x}{\geq}{c}{I_2},I_1) = -ax +dr -p \leq 0 = I_3$ to receive the inequality that computes the complete lower bound: \newline
\centerline{$
  \bound(I_3,x,M) = \left\lceil \frac{d \lowerop(r,M) - k}{a}\right\rceil \geq  \left\lceil \frac{d \left\lceil \frac{a b + k}{d} \right\rceil - k}{a}\right\rceil.
$}
Finally, we compute the tightly propagating inequality for a divisibility constraint $D = d \mid ax + p$ with the function $\divderive(D,x,M)$: \newline
\centerline{$
  \divderive(D,x,M) = \left\lbrace \begin{array}{lll}
     I'_3 & \mbox{ if } \left\lbrace \begin{array}{l}
                                                                                         b = \lowerop(x,M), \\
                                                                                         I_2 = -z + r \leq 0 = \divpart(D,x,M,), \\
                                                                                         I'_3 = \tight(-ax +dr -p \leq 0,x,M), \\
                                                                                         \end{array} \right. \\
    I'_3 & \mbox{ if } \left\lbrace \begin{array}{l}
                                                                                         b = \upper(x,M), \\
                                                                                         I_2 = z + r \leq 0 = \divpart(D,x,M), \\
                                                                                         I'_3 = \tight(ax +dr +p \leq 0,x,M). \\
                                                                                         \end{array} \right.
  \end{array} \right.
$}\newline

The rule Slack-Intro is necessary to prevent a special type of frozen state called stuck state.
A variable $x$ is called \emph{stuck} in state $S = \SearchSt{M}{C}$ if $M$ contains no bounds for $x$
and there is no inequality $I = a x + p \leq 0 \in C$ that propagates a bound for $x$~\cite{CutChase}.
Variables $x$ with a constraint of the form $\pm x-b \leq 0 \in C$ are never stuck, as CUTSAT++ is able to propagate at least one bound for $x$, i.e., either $x \geq -b$ or $x \leq b$.
A state $S$ is a \emph{stuck state} if all unfixed variables $x$ are stuck and if the rules Sat, Unsat-Div, Conflict, and Conflict-Div are not applicable~\cite{CutChase}. 
In a stuck state, Slack-Intro is applicable and one of the previously stuck variables $x$ is turned unstuck by the constraints added to the problem.
As recommended in~\cite{CutChase}, CUTSAT++ uses the same slack variable for all Slack-Intro applications.

\medskip
We are now going to discuss three examples where CUTSAT diverges. The first one shows that, CUTSAT can apply Conflict and Conflict-Div 
infinitely often to constraints containing unguarded variables.

\begin{exmpl}
Let\newline
\centerline{$C := \lbrace \underbrace{-x \leq 0}_{I_x}, \underbrace{-y \leq 0}_{I_y}, \underbrace{-z \leq 0}_{I_{\mathrm{z1}}}, \underbrace{z \leq 0}_{I_{\mathrm{z2}}}, \underbrace{z + 1\leq 0}_{I_{\mathrm{z3}}}, \underbrace{ 1 - x + y \leq 0 }_{J_1}, \underbrace{x - y \leq 0}_{J_2} \rbrace$}
be a problem.
Let $S_i = \SearchSt{M_i}{C}$ for $i \in \mathbb{N}$ be a series of states with:\newline
\centerline{$\begin{array}{l l}
  M_0 &:= \concseq{\propBnd{x}{\geq}{0}{I_x}, \propBnd{y}{\geq}{0}{I_y}, \propBnd{z}{\geq}{0}{I_{\mathrm{z1}}}, \propBnd{z}{\leq}{0}{I_{\mathrm{z2}}} }, \\
  M_{i + 1} &:= \concseq{M_{i}, \propBnd{x}{\geq}{i + 1}{J_1}, \propBnd{y}{\geq}{i + 1}{J_2} }.
\end{array}$}
Let the variable order be given by $z \prec y \prec x $.
CUTSAT with a two-layered strategy~\cite{CutChase} has diverging runs starting 
in state $S'_0 = \SearchSt{\emptyseq}{C}$.
Let CUTSAT traverse the states $S'_0$, $S_0$, $S_1$, $S_2$, $\ldots$ in the following fashion:
CUTSAT reaches state $S_0$ from state $S'_0$ after propagating the constraints $I_x$, $I_y$, $I_{\mathrm{z1}}$, and $I_{\mathrm{z2}}$.
CUTSAT reaches state $S_{i+1}$ from state $S_i$ after:
\begin{itemize}
\item  fixing $x$ to $i$ with a Decision $\gamma^x_d := \decBnd{x}{\leq}{i}$; $M^1_i := \concseq{M_i,\gamma^x_d}$ and $S^1_i := \SearchSt{M^1_i}{C}$
\item  applying Conflict to Constraint $J_1$ because $\lowerop(1 - x + y,M^1_i) > 0$; $M^2_i := M^1_i$ and $S^2_i := \ConflictSt{M^2_i}{C}{J_1}$
\item  undoing the decided bound $\gamma^x_d$ by applying Backjump because the predicate $\improves(J_1,x,M_i,{\geq})$ evaluates to true.
       The result is the exchange of $\gamma^x_d$ with the bound $\gamma^x = \propBnd{x}{\geq}{i + 1}{J_1}$; 
       $M^3_i := \concseq{M_i, \gamma^x}$ and $S^3_i := \SearchSt{M^3_i}{C}$
\item  fixing $y$ to $i$ with a Decision $\gamma^y_d := \decBnd{y}{\leq}{i}$; $M^4_i := \concseq{M^3_i,\gamma^y_d}$ and $S^4_i := \SearchSt{M^4_i}{C}$
\item  applying Conflict to Constraint $J_2$ because $\lowerop(x - y ,M^4_i) > 0$; $M^5_i := M^4_i$ and $S^5_i := \ConflictSt{M^5_i}{C}{J_1}$
\item  undoing the decided bound $\gamma^y_d$ by applying Backjump because the predicate  $\improves(J_2,y,M^3_i,{\geq})$ evaluates to true. 
       The result is the exchange of $\gamma^y_d$ with the bound $\gamma^y = \propBnd{y}{\geq}{i + 1}{J_2}$;
       $M^6_i := \concseq{M^3_i, \gamma^y}$ and $S_{i+1} = S^6_i := \SearchSt{M^6_i}{C}$
\end{itemize}
Since $\{I_{\mathrm{z1}},I_{\mathrm{z3}}\}$ is a conflicting core, the variable $z$ is the minimal conflicting variable in the 
states $S_i$, $S^1_i$, and $S^4_i$.
Since $I_{\mathrm{z1}}$ and $I_{\mathrm{z2}}$ bound $z$, the conflicting core is also guarded.
Therefore, Resolve-Cooper as defined in \cite{CutChase} is not applicable, which in turn implies that Conflict 
is applicable.
\label{example: Conflict on initially unbounded constraints}
\end{exmpl}
A straightforward fix to example~\ref{example: Conflict on initially unbounded constraints} is to limit the application of the Conflict and Conflict-Div rules
to guarded constraints.
Our second example shows, that CUTSAT can still diverge by infinitely many applications of the Solve-Div rule~\cite{CutChase}.
\begin{exmpl}
\label{example: Solve-Div top-var restriction}
Let $d_i$ be the sequence with $d_0 = 2$ and $d_{k+1} := {d_k}^2$ for 
$k \in \mathbb{N}$, let $C_0 = \{4 \mid 2x + 2y, 2 \mid x + z\}$ be a problem, 
and let $S_0 = \SearchSt{\emptyseq}{C_0}$ be the initial CUTSAT state.
Let the variable order  be given by $x \prec y \prec z$.
Then CUTSAT has divergent runs 
$S_0 \CSArrow S_1 \CSArrow S_2 \CSArrow \ldots$. For instance, let CUTSAT apply the Solve-Div rule 
whenever applicable. 
By an inductive argument, Solve-Div is applicable in every 
state $S_n = \SearchSt{\emptyseq}{C_n}$, and the constraint set $C_n$ has the following form:\newline
\centerline{$
C_n = \left\{\begin{array}{l l}
 \{ 2 d_{n} \mid d_n x + d_n y, d_n \mid \frac{d_n}{2} y - \frac{d_n}{2} z\} & \mbox{ if } n \mbox{ is odd}, \\
\{ 2 d_{n} \mid d_n x + d_n y, d_n \mid \frac{d_n}{2} x + \frac{d_n}{2} z\} & \mbox{ if } n \mbox{ is even}. \\
\end{array}\right.$}
Therefore, CUTSAT applies Solve-Div infinitely often and diverges.
\end{exmpl}
A straightforward fix to example~\ref{example: Solve-Div top-var restriction} is to limit the application of Solve-Div to maximal variables in the variable order ${\prec}$.
Our third example shows, that CUTSAT can apply Conflict and Conflict-Div~\cite{CutChase} infinitely often. 
The example~\ref{example: CC top-var restriction} differs from example~\ref{example: Conflict on initially unbounded constraints} in that the conflicting core contains also unguarded variables.
\begin{exmpl}
\label{example: CC top-var restriction}
Let\newline
\centerline{$C := \lbrace \underbrace{-x \leq 0}_{I_x}, \underbrace{-y \leq 0}_{I_y}, \underbrace{-z \leq 0}_{I_{\mathrm{z1}}}, \underbrace{z \leq 0}_{I_{\mathrm{z2}}}, \underbrace{ 1 - x + y + z \leq 0 }_{J_1}, \underbrace{x - y - z \leq 0}_{J_2} \rbrace\}$}
be a problem.
Let $S_i = \SearchSt{M_i}{C}$ for $i \in \mathbb{N}$ be a series of states with:\newline
\centerline{$
\begin{array}{l l}
  M_0 &:= \concseq{\propBnd{x}{\geq}{0}{I_x}, \propBnd{y}{\geq}{0}{I_y}, \propBnd{z}{\geq}{0}{I_{\mathrm{z1}}}, \propBnd{z}{\leq}{0}{I_{\mathrm{z2}}} },\\
  M_{i + 1} &:= \concseq{M_{i}, \propBnd{x}{\geq}{i + 1}{J_1}, \propBnd{y}{\geq}{i + 1}{J_2} }.
\end{array}$
}
Let the variable order be given by $z \prec x \prec y$.
CUTSAT has diverging runs starting in state $S'_0 = \SearchSt{\emptyseq}{C}$.
For instance, let CUTSAT traverse the states $S'_0$, $S_0$, $S_1$, $S_2$, $\ldots$ in the following fashion:
CUTSAT reaches state $S_0$ from state $S'_0$ after propagating the constraints $I_x$, $I_y$, $I_{\mathrm{z2}}$ and $I_{\mathrm{z2}}$.
CUTSAT reaches state $S_{i+1}$ from state $S_i$ after:
\begin{itemize}
\item  fixing $x$ to $i$ and $y$ to $i$ with Decisions $\gamma^x_d := \decBnd{x}{\leq}{i}$ and $\gamma^y_d := \decBnd{y}{\leq}{i}$; $M^1_i := \concseq{M_i,\gamma^x_d,\gamma^y_d}$ and $S^1_i := \SearchSt{M^1_i}{C}$
\item  applying Conflict to Constraint $J_1$ because $\lowerop(1 - x + y + z,M^1_i) > 0$; $M^2_i := M^1_i$ and $S^2_i := \ConflictSt{M^2_i}{C}{J_1}$
\item  undoing the decided bounds $\gamma^y_d$  and $\gamma^x_d$ by applying first Skip-Decision and then Backjump.
       The result is the sequence $M^3_i := \concseq{M_i, \gamma^x}$ and the state $S^3_i := \SearchSt{M^3_i}{C}$, where $\gamma^x = \propBnd{x}{\geq}{i + 1}{J_1}$; 
\item  fixing $y$ to $i$ and $x$ to $i+1$ with Decisions $\gamma^y_d := \decBnd{y}{\leq}{i}$ and $\gamma^x_d := \decBnd{y}{\leq}{i+1}$; $M^4_i := \concseq{M^3_i,\gamma^y_d,\gamma^x_d}$ and $S^4_i := \SearchSt{M^4_i}{C}$
\item  applying Conflict to Constraint $J_2$ because $\lowerop(x - y -z ,M^4_i) > 0$; $M^5_i := M^4_i$ and $S^5_i := \ConflictSt{M^5_i}{C}{J_1}$
\item  undoing the decided bounds $\gamma^x_d$  and $\gamma^y_d$ by applying first Skip-Decision and then Backjump. 
       The result is the sequence $M^6_i := \concseq{M^3_i, \gamma^y}$ and the state $S_{i+1} = S^6_i := \SearchSt{M^6_i}{C}$, where $\gamma^y = \propBnd{y}{\geq}{i + 1}{J_2}$.
\end{itemize}
Notice that the conflicting core $\{J_1,J_2\}$ in states $S^1_i$ and $S^4_i$ is bounded in \cite{CutChase}, which admits the application of Conflict.
\end{exmpl}

For example~\ref{example: CC top-var restriction}, applying the fix suggested for example~\ref{example: Conflict on initially unbounded constraints} results in a frozen state.
Here, a straightforward fix is to change the definition of conflicting cores to cover only those cores where the conflicting variable
is the maximal variable.\footnote{The restrictions to maximal variables in the definition of the conflicting core and to 
the Solve-Div rule were both confirmed as missing but necessary in a private communication with Jovanovi\'c.}

The fixes for our examples suggested above are restrictions of CUTSAT which have the consequence that
 Conflict(-Div) cannot be applied to unguarded constraints, Solve-Div is only applicable for the elimination of the maximal variable, and the conflicting variable $x$ is the maximal variable in the associated conflicting core $C'$.
However, our next and final example shows that these restrictions lead to frozen states.

\begin{exmpl} \label{example:frozenstate}
Let CUTSAT include restrictions to maximal variables in the definition of conflicting cores, and in the Solve-Div rule as described above.
Let there be additional restrictions in CUTSAT to the rules Conflict and Conflict-Div such that these rules are only applicable to conflict constraints $I$ where $I$ contains no unguarded variable.
Let\newline 
\centerline{$C := \lbrace \underbrace{-x \leq 0}_{I_{\mathrm{x1}}}, \underbrace{x - 1\leq 0}_{I_{\mathrm{x2}}}, \underbrace{-y \leq 0}_{I_y}, 
\underbrace{6 \mid 4 y + x}_{J} \rbrace$} 
be a problem.
Let $M := \concseq{\propBnd{x}{\geq}{0}{I_{\mathrm{x1}}}, \propBnd{x}{\leq}{1}{I_{\mathrm{x2}}}, \propBnd{y}{\geq}{0}{I_y}, \decBnd{x}{\geq}{1}, \decBnd{y}{\leq}{0}}$ be a bound sequence.
Let the variable order be given by $x \prec y$.
CUTSAT has a run starting in state $S'_0 = \SearchSt{\emptyseq}{C}$ that ends in the frozen state $S = \SearchSt{M}{C}$.
Let CUTSAT propagate $I_{\mathrm{x1}}$, $I_{\mathrm{x2}}$, $I_y$, and fix $x$ to $1$ and $y$ to $0$ with two Decisions.
Through these Decisions, the constraint $J$ is a conflict.
Since $y$ is unguarded, CUTSAT cannot apply the rule Conflict-Div.
Furthermore, \cite{CutChase} has defined conflicting cores as either interval or divisibility conflicting cores. 
The state $S$ contains neither an interval or a divisibility conflicting core.
Therefore, CUTSAT cannot apply the rule Resolve-Cooper.
The remaining rules are also not applicable because all variables are fixed and there 
is only one divisibility constraint.
Without the before introduced restrictions to the rules Conflict(-Div), Solve-Div, CUTSAT diverges on the example.
\end{exmpl}

  \section{Weak Cooper Elimination}
\label{SE: Weak-Cooper-Elimination}
\label{SSE: Weak-Cooper-Elimination}

In order to fix the frozen state of Example~\ref{example:frozenstate} in the previous section, we are going to
introduce in Section \ref{SE: Strong Conflict Resolution} a new conflicting core, which we call \emph{diophantine conflicting core}. 
For understanding diophantine conflicting cores, as well as further modifications to be made, it is helpful to understand the connection between CUTSAT and a variant of Cooper's quantifier elimination procedure~\cite{Cooper:72a}.

The original \emph{Cooper elimination} takes a variable $x$, a problem $C(x)$, and produces a disjunction of problems, equivalent
to $\exists x:C(x)$:\newline
\centerline{$\exists x: C(x) \equiv \bigvee\limits_{0 \leq k < m} C_{-\infty}(k) \vee \bigvee\limits_{- a x + p \leq 0 \in C} 
\bigvee\limits_{0 \leq k < a \cdot m} \left[ a \mid p + k \wedge C\left( \frac{p+k}{a} \right) \right],$}
where $a > 0$, $m = \lcm\{d \in \mathbb{Z} : (d \mid a x + p) \in C\}$.
If there exists no constraint of the form $-a x + p \leq 0 \in C$, then $C_{-\infty}(x) = \{(d \mid a x + p) \in C\}$ .
Otherwise, $C_{-\infty}(x) = \bot$.
One application of Cooper elimination results in a disjunction of quadratically many problems out of a single problem. 
Iteration causes an exponential increase in the coefficients due to the multiplication with $a$ because division is not part of the language.

\emph{Weak Cooper elimination} is a variant of Cooper elimination that is very helpful to understand problems around CUTSAT. 
The idea is, instead of building a disjunction over all potential solutions
for $x$, to add additional guarded variables and constraints without $x$ that guarantee the existence of a solution for $x$. 
We assume here that $C(x)$ contains only one divisibility constraint for $x$. 
If not, exhaustive application of $\divsolve$ to divisibility constraints for
$x$ removes all constraints except one: 
$\divsolve(x, d_1 \mid a_1 x + p_1, d_2 \mid a_2 x + p_2) =  (d_1 d_2 \mid d x + c_1 d_2 p_1 + c_2 d_1 p_2, d \mid -a_1 p_2 + a_2 p_1)$,
where $d = \gcd(a_1 d_2, a_2 d_1)$, and $c_1$ and $c_2$ are integers such that $c_1 a_1 d_2 + c_2 a_2 d_1 = d$~\cite{Cooper:72a,CutChase}. 
Now weak Cooper elimination takes a variable $x$, a problem $C(x)$ and produces a new problem by replacing $\exists x:C(x)$ with:\newline
\centerline{$\exists K: \left(\{I \in C(x) : \coeff(I,x) = 0 \} \cup 
                                  \{\gcd(c,d) \mid s\} \cup \underset{k \in K}{\bigcup} R_k \right)$,}
where $d \mid c x + s \in C(x)$, $k \in K$ is a newly introduced variable for every pair of 
constraints $-a x + p \leq 0 \in C(x)$ and $b x - q \leq 0 \in C(x)$,\newline
\centerline{\label{resolvenpage}$R_k = \{-k \leq 0, k - m \leq 0, bp-aq+bk \leq 0, a \mid k+p, \,ad\mid cp+as+ck\}$}
is a resolvent for the same inequalities~\cite{CutChase}, where
$m := \lcm\left(a,\frac{ad}{\gcd(ad,c)}\right)-1$.
Note that there is still an existential quantifier $\exists K$ but all variables $k \in K$ are guarded by their respective $R_k$.

Let $\nu$ be a satisfiable assignment for the formula after one weak Cooper elimination step on $C(x)$.
Then we compute a strictest lower bound  $x \geq l_x$ and a strictest upper bound $x \leq u_x$ from $C(x)$ for the variable $x$ under the assignment $\nu$. 
We now argue that there is a value $v_x$ for $x$ such that $x \geq l_x$, $x \leq u_x$, and $d \mid c v_x + s$ are satisfied.
Whenever $l_x \neq -\infty$ and $u_x \neq \infty$, the bounds $x \geq l_x$, $x \leq u_x$ are given by respective constraints of the form $-a x + p \leq 0 \in C(x)$ and $b x - q \leq 0 \in C(x)$ such that $l_x = \lceil \frac{\nu(p)}{a} \rceil$ and $u_x = \lfloor \frac{\nu(q)}{b} \rfloor$.
In this case the extension of $\nu$ with $\nu(x) = \frac{\nu(k+p)}{a}$ satisfies $C(x)$ because the constraint $a \mid k+p \in R_k$ guarantees that $\nu(x) \in \mathbb{Z}$, the constraint $bp-aq+bk \leq 0 \in R_k$ guarantees that $l_x \leq \nu(x) \leq u_x$, and the constraint $ad\mid cp+as+ck \in R_k$ guarantees that $\nu$ satisfies $d \mid c x + s \in C(x)$.
Whenever $l_x = -\infty$ ($u_x = \infty$) we extend $\nu$ by an arbitrary small (large) value for $x$ that satisfies $d \mid c x + s \in C(x)$.
There exist arbitrarily small (large) solutions for $x$ and $d \mid c x + \nu(s)$ because $\gcd(c,d) \mid s$ is satisfied by $\nu$.

The advantage of weak Cooper elimination compared to Cooper elimination is that the output is again one conjunctive problem in contrast to a disjunction of problems.
Our CUTSAT++ performs weak Cooper elimination not in one step but subsequently adds to the states the constraints from the $R_k$ as well as the divisibility constraint $\gcd(c,d) \mid s$ with respect to a strict ordering on the unguarded variables.

The following equivalence, for which we have just outlined the proof, states the correctness of weak Cooper elimination: \newline

\centerline{$\exists x: C(x) \equiv \exists K: \left(\{I \in C(x) : \coeff(I,x) = 0 \} \cup 
                                  \{\gcd(c,d) \mid s\} \cup \underset{k \in K}{\bigcup} R_k \right).$}

The extra  divisibility constraint $\gcd(c,d) \mid s$ in weak Cooper elimination is necessary whenever 
the problem $C(x)$ has no 
constraint of the form $-ax + p \leq 0 \in C(x)$ or $bx - q \leq 0 \in C(x)$. For example,
let $C(x) = \{y - 1 \leq 0, - y + 1 \leq 0,  6 \mid 2 x + y \}$ be a problem and $x$ be the 
unguarded variable we want to eliminate.
As there are no inequalities containing $x$, weak Cooper elimination without the extra divisibility constraint
returns $C'= \{y - 1 \leq 0, - y + 1 \leq 0\}.$
While $C'$ has a satisfiable assignment $\nu(y) = 1$, 
$C(x)$ has not since $2 x + 1$ is never divisible by $2$ or $6$.


Note that for any $R_k$ introduced by weak Cooper elimination we can also show the following Lemma:\newline
\begin{lemma}
\label{lemma: cooper-wc equivalence}
Let $k$ be a new variable. Let $a,b,c > 0$. Then,\newline
\centerline{$
\begin{array}{c}
(\exists x: \{-ax + p \leq 0, bx - q \leq 0, d \mid c x + s\}) \\ \equiv (\exists k: \{-k \leq 0, k - m \leq 0, bp-aq+bk \leq 0, a \mid k+p, \,ad\mid cp+as+ck\}).
\end{array}$}
\end{lemma}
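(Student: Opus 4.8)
The plan is to prove the two implications separately, in both cases exploiting the substitution that links a witness $v_x$ for $x$ on the left-hand side to a witness $v_k$ for $k$ on the right-hand side via $v_k = a v_x - p$, equivalently $v_x = \frac{v_k + p}{a}$. This is exactly the substitution behind the informal correctness argument for weak Cooper elimination given above, only now specialised to a single lower bound, a single upper bound, and a single divisibility constraint. Since $\equiv$ is logical equivalence, I may fix arbitrary integer values for the polynomials $p$, $q$, $s$ (coming from an assignment to the other variables) and prove the resulting statement about integers $x$ and $k$; here $d>0$ is the modulus, $a,b,c>0$ by hypothesis, and I abbreviate $L := m+1 = \lcm\!\left(a, \frac{ad}{\gcd(ad,c)}\right)$.

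For the direction ``$\Leftarrow$'', assume an integer $v_k$ satisfies the five right-hand constraints. From $a \mid v_k + p$ it follows that $v_x := \frac{v_k + p}{a} \in \mathbb{Z}$. A direct computation gives $-a v_x + p = -v_k \le 0$; also $b v_x - q = \frac{1}{a}(b v_k + b p - a q) \le 0$, since the numerator equals the assumed constraint $bp-aq+bv_k$ and $a>0$; and $d \mid c v_x + s$, since this is equivalent to $ad \mid a(c v_x + s) = c v_k + c p + a s$, which is the last assumed constraint. Hence $v_x$ witnesses the left-hand side. Note that the upper bound $v_k \le m$ is not used in this direction.

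For ``$\Rightarrow$'', assume $v_x \in \mathbb{Z}$ satisfies $-a v_x + p \le 0$, $b v_x - q \le 0$, $d \mid c v_x + s$, and set $k_0 := a v_x - p$. Then $a \mid k_0 + p$ trivially; $k_0 \ge 0$ from the first inequality; $bp - aq + bk_0 = a(b v_x - q) \le 0$ from the second; and $ad \mid cp + as + ck_0 = a(c v_x + s)$ from the divisibility constraint. So $k_0$ meets every right-hand constraint except possibly $k_0 \le m$. The remaining step, which requires a short number-theoretic argument, is to pass from $k_0$ to a non-negative representative below $m$ lying in the same period of the divisibility system. Let $T := \{ k \in \mathbb{Z} : a \mid k + p \text{ and } ad \mid ck + cp + as \}$. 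I will show $T$ is invariant under $k \mapsto k \pm L$: the first congruence survives because $a \mid L$; the second because $\frac{ad}{\gcd(ad,c)} \mid L$ forces $ad \mid cL$, as $cL$ is a multiple of $c\cdot\frac{ad}{\gcd(ad,c)} = ad\cdot\frac{c}{\gcd(ad,c)}$. Consequently $v_k := k_0 \bmod L$ lies in $\{0, \dots, L-1\} = \{0, \dots, m\}$, still lies in $T$, and satisfies $v_k \le k_0$; since $b>0$ this yields $bp - aq + b v_k \le bp - aq + b k_0 \le 0$. Thus $v_k$ satisfies all five right-hand constraints, completing the proof.

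I expect the only non-routine point to be this period computation, i.e. checking that $L = \lcm\!\left(a, \frac{ad}{\gcd(ad,c)}\right)$ — and hence the threshold $m = L-1$ built into the resolvent $R_k$ of weak Cooper elimination — is precisely large enough to guarantee a solution of the joint divisibility system inside $[0,m]$. The one thing to be careful about is that we must reduce $k_0$ \emph{downwards}: this keeps $k_0 \bmod L$ non-negative and makes the inequality $bp - aq + bk \le 0$ carry over to the reduced value at no extra cost.
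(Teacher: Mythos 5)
Your proof is correct. The paper itself gives no argument for this lemma---it only cites Lemma~4 of Jovanovi\'c and de~Moura---but your substitution $v_k = a v_x - p$, $v_x = \frac{v_k+p}{a}$ is exactly the correspondence the paper's informal discussion of weak Cooper elimination uses, and you correctly supply the one nontrivial step: that $T$ is periodic with period $L = \lcm\left(a,\frac{ad}{\gcd(ad,c)}\right)$ (since $a \mid L$ and $ad \mid cL$), so that $k_0$ can be reduced downwards into $\{0,\dots,m\}$ while preserving $bp-aq+bk \leq 0$.
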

\begin{proof}
See~\cite{CutChase} pp. 101-102 Lemma 4. 
\end{proof}

That means satisfiability of the respective $R_k$ guarantees a solution for the triple of constraints 
it is derived from.
An analogous Lemma holds for the divisibility constraint $gcd(c,d) \mid s$ introduced by weak Cooper elimination:
\begin{lemma}
\label{lemma: cooper-div equivalence}
$(\exists x: d \mid c x + s) \equiv \gcd(c,d) \mid s.$
\end{lemma}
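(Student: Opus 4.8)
The plan is to reduce the statement to the classical criterion for solvability of a linear congruence. Fix any assignment to the variables occurring in $s$ (so that $s$ denotes a concrete integer) and write $g = \gcd(c,d)$; recall that our normalization of divisibility constraints gives $c > 0$ and $d > 0$, so $g$ is a well-defined positive integer. By the definition of $\mid$, the formula $\exists x: d \mid cx+s$ asserts that the congruence $cx \equiv -s \pmod{d}$ has an integer solution, and the goal is exactly the standard number-theoretic fact that this holds if and only if $g \mid s$.

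For the implication from left to right, I would take an integer $x_0$ with $d \mid cx_0 + s$. Since $g \mid d$, it follows that $g \mid cx_0 + s$; since $g \mid c$, it follows that $g \mid cx_0$. Subtracting the second divisibility from the first yields $g \mid (cx_0 + s) - cx_0 = s$, as desired.

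For the converse I would invoke B\'ezout's identity: choose integers $u,v$ with $uc + vd = g$. Assuming $g \mid s$, write $s = g t$ with $t \in \mathbb{Z}$ and set $x_0 := -ut$. Then $cx_0 + s = -uct + gt = -uct + (uc+vd)t = vdt$, which is divisible by $d$; hence $x_0$ witnesses $\exists x: d \mid cx+s$.

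There is essentially no obstacle here: both directions are one-line computations over $\mathbb{Z}$. The only points to keep in mind are that $d \neq 0$ (guaranteed by the normalization, which is also what makes $\gcd(c,d)$ meaningful) and that the equivalence is to be read pointwise for every assignment of the parameters occurring in $s$, for which the argument above applies verbatim; the case $c = 0$, should it arise, is also covered since then $g = d$ and both sides reduce to $d \mid s$.
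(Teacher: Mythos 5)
Your proof is correct and follows essentially the same route as the paper: elementary divisibility arithmetic for the left-to-right direction and B\'ezout's identity to construct a witness for the converse. The only cosmetic difference is that the paper first rewrites both divisibility constraints as Diophantine equations ($\exists y: dy - cx = s$ and $\exists k: \gcd(c,d)k = s$) before running the same computation, whereas you work with the divisibility relation directly.
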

\begin{proof}
We equivalently rewrite the two divisibility constraints into diophantine equations, viz. $\exists y: d y - c x = s$, and $\exists k: \gcd(c,d) k = s$
for $d \mid c x + s$, and $gcd(c,d) \mid s$, respectively.
We choose $d'$, $c' \in \mathbb{Z}$ such that $d' \cdot \gcd(c,d) = d$ and $c' \cdot \gcd(c,d) = c$.
Assume that $\nu$ is a variable assignment such that
$d \nu(y) - c \nu(x) = \nu(s) \mbox{ and therefore also } d \mid c \nu(x) + \nu(s).$
Thence $\nu(s) = d \nu(y) - c \nu(x) = \gcd(c,d) \cdot  (d' \nu(y) - c' \nu(x)).$
After extending $\nu$ with $\nu(k) = (d' \nu(y) - c' \nu(x))$, $\nu$ satisfies $\gcd(c,d) k = s$.

Assume, that $\nu$ is a variable assignment such that $\gcd(c,d) \nu(k) = \nu(s)$ holds and therefore also $\gcd(c,d) \mid \nu(s)$.
By Be\'zout's Lemma there exist $a', b' \in \mathbb{Z}$, such that $a' d - b' c = \gcd(c,d)$.
Thence $a' d  \nu(k) - b' c \nu(k) = (a' d - b' c)\nu(k) = \gcd(c,d)\nu(k) = \nu(s).$
After extending $\nu$ with $\nu(y) = a' \nu(k)$ and $\nu(x) = b' \nu(k)$ the assignment $\nu$ satisfies $d y - c x = s$.
\end{proof}

That means satisfiability of $\gcd(c,d) \mid s$ guarantees a solution for the divisibility constraint $d \mid c x + s$.
The rule Resolve-Cooper~(Fig.~\ref{fig: strong-conflict-rules}) in our CUTSAT++ exploits these properties by generating the $R_k$ and
constraint $\gcd(c,d) \mid s$ in the form of strong resolvents in a lazy way. 
Furthermore, it is not necessary for the divisibility constraints to be a priori reduced to one,
as done for weak Cooper elimination. Instead, the rules Solve-Div-Left and Solve-Div-Right~(Fig.~\ref{fig: strong-conflict-rules}) perform lazy reduction.

The \emph{solution set} for variable $x$, assignment $\nu$, and problem $C(x)$, is the set of values $S \subseteq \mathbb{Z}$ such that 
$v \in S$ if $C(v)$ is satisfied by $\nu$.
The solution set $S_d$ of a divisibility constraint $d \mid c x + s$, variable $x$, and assignment $\nu$ is either empty or unbounded from above and below.

\begin{lemma}
\label{lemma: cooper-div solution set}
Let $\nu$ be an assignment for all variables except $x$. Let $S_d$ be the solution set for variable $x$, assignment $\nu$, and constraint $d \mid c x + s$.
Then, \newline
\centerline{$S_d = \emptyset \mbox{ or } S_d = \{v_0 + e v' : e \in \mathbb{Z}\} \mbox{ for some } v_0, v' \in \mathbb{Z}.$}
\end{lemma}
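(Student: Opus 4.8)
The plan is to substitute the assignment $\nu$ into the polynomial $s$ and thereby reduce membership in $S_d$ to a single linear congruence in $x$, after which the claim becomes the standard structure theorem for solution sets of linear congruences. First I would set $\bar s := \nu(s) \in \mathbb{Z}$. Recalling the standing assumptions $c > 0$ and $d > 0$, we then have $v \in S_d$ if and only if $d \mid c v + \bar s$, i.e. $c v \equiv -\bar s \pmod d$. Write $g := \gcd(c,d)$ and fix $c', d' \in \mathbb{Z}$ with $c = g c'$, $d = g d'$, and $\gcd(c', d') = 1$; since $d > 0$ we have $d' \geq 1$.

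Next I would distinguish two cases. If $g \nmid \bar s$, then for any $v$ we would get $g \mid d \mid c v + \bar s$ while $g \mid c v$, hence $g \mid \bar s$, a contradiction; thus $S_d = \emptyset$, which is the first alternative. (Equivalently one may cite Lemma~\ref{lemma: cooper-div equivalence} with $s$ instantiated to the constant $\bar s$.) Otherwise $g \mid \bar s$, and by Lemma~\ref{lemma: cooper-div equivalence} the statement $\exists x : d \mid c x + \bar s$ holds, so there is some $v_0 \in S_d$. I then claim $S_d = \{\, v_0 + e d' : e \in \mathbb{Z} \,\}$, which yields the second alternative with $v' := d'$ and $v_0$ the chosen witness.

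For the inclusion ``$\subseteq$'': if $v \in S_d$, then $d$ divides $(c v + \bar s) - (c v_0 + \bar s) = c(v - v_0)$, i.e. $g d' \mid g c'(v - v_0)$, hence $d' \mid c'(v - v_0)$; since $\gcd(c', d') = 1$ this forces $d' \mid v - v_0$, so $v = v_0 + e d'$ for some $e \in \mathbb{Z}$. For the inclusion ``$\supseteq$'': if $v = v_0 + e d'$ then $c v + \bar s = (c v_0 + \bar s) + c e d' = (c v_0 + \bar s) + e c'(g d') = (c v_0 + \bar s) + e c' d$, and both summands are divisible by $d$, so $v \in S_d$.

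I do not expect a genuine obstacle: the proof is elementary modular arithmetic. The only points needing care are invoking Lemma~\ref{lemma: cooper-div equivalence} (or a direct B\'ezout argument) to secure a witness $v_0$ in the non-empty case, and the use of $\gcd(c', d') = 1$ to pass from $d' \mid c'(v - v_0)$ to $d' \mid v - v_0$; everything else is bookkeeping.
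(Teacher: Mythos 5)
Your proof is correct and follows essentially the same route as the paper: both take $v' = d/\gcd(c,d)$ and establish the two inclusions by the same modular-arithmetic computation, with your coprimality step $\gcd(c',d')=1 \Rightarrow d' \mid v - v_0$ simply making explicit what the paper asserts in one line. The extra case analysis for $S_d = \emptyset$ and the appeal to Lemma~\ref{lemma: cooper-div equivalence} are harmless but not needed, since the lemma admits the empty set as one of its alternatives.
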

\begin{proof}
In case $S_d \neq \emptyset$, there exists a value $v_0 \in S_d$ such that $d  \mid c v_0 + \nu(s)$.
We first prove that there exists a $v' \in \mathbb{Z}$ such that 
$d  \mid c (v_0 + e v') + \nu(s)$
for all $e \in \mathbb{Z}$ and therefore $v_0 + e v' \in S_d$.
We choose $v'$, $e' \in \mathbb{Z}$ such that $c = e' \gcd(c,d)$ and $d = v' \gcd(c,d)$.
Then we deduce for any $e \in \mathbb{Z}$:\newline
\centerline{$
\begin{array}{c}
d \mid c (v_0 + e v') + \nu(s) \equiv d \mid c v_0 + \nu(s) + c e v' \equiv \\
d \mid c v_0 + \nu(s) + d e' v' \equiv d \mid c v_0 + \nu(s)
\end{array}
$}

It remains to show that for every $v_k \in S_d$ there exists an $e \in \mathbb{Z}$ such that $v_0 + e v' = v_k$.
As $S_d$ is the solution set we know that $d  \mid c v_0 + \nu(s)$ and $d  \mid c v_k + \nu(s)$ are true.
Thence $d  \mid c (v_0 - v_k) \equiv d  \mid c v_0 + \nu(s) - (c v_k + \nu(s)).$
As $d = v' \gcd(c,d)$, the term $c (v_0 - v_k)$ is only divisible by $d$ if $v_0 - v_k$ is divisible by $v'$.
Therefore, $\exists e \in \mathbb{Z}: v_0 - v_k = e v'$.
\end{proof}

This property allows us to choose an arbitrary small or large solution for $x$ to satisfies $d \mid c x + \nu(s)$ in the correctness proof of weak Cooper Elimination.
As mentioned in the outline of the proof, the ability to choose arbitrary small and large solutions for $x$ is necessary when $C(x)$ contains no constraints of the form $-ax + p \leq 0$ or $bx - q \leq 0$.

\begin{theorem}\ \newline
\label{lemma: Weak-Elim equivalence}
\centerline{$\exists x: C(x) \equiv \exists K: \left(\{I \in C(x) : \coeff(I,x) = 0 \} \cup 
                                  \{\gcd(c,d) \mid s\} \cup \underset{k \in K}{\bigcup} R_k \right)$}
\end{theorem}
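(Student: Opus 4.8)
The plan is to prove the two implications of the equivalence separately, reducing each to the lemmas already established in this section. Throughout I assume, as set up just before the statement, that $C(x)$ contains exactly one divisibility constraint $d \mid cx+s$ in $x$ (apply $\divsolve$ exhaustively otherwise, or adjoin a trivial divisibility constraint if there is none), and I partition the remaining constraints into the $x$-free part $C_0 = \{I \in C(x) : \coeff(I,x) = 0\}$, the lower-bound inequalities $-ax+p \leq 0$, and the upper-bound inequalities $bx-q \leq 0$, with $a,b,c > 0$. Recall that each $k \in K$ is, by construction, indexed by one pair consisting of a lower-bound inequality $-ax+p\leq 0$ and an upper-bound inequality $bx-q\leq 0$, and that $R_k$ mentions only $k$ and the variables occurring in $C(x)$ other than $x$.

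For the direction ``$\Rightarrow$'', let $\mu$ satisfy $C(x)$, say with $\mu(x) = v$, and let $\mu'$ be the restriction of $\mu$ to the variables of $C(x)$ other than $x$. Then $C_0$ is satisfied by $\mu'$ since it does not mention $x$. For each $k \in K$, corresponding to a pair $-ax+p\leq 0$, $bx-q\leq 0$, the assignment $\mu'$ satisfies $\exists x: \{-ax+p \leq 0,\, bx-q\leq 0,\, d \mid cx+s\}$ via the witness $v$, so Lemma~\ref{lemma: cooper-wc equivalence} yields some $w_k$ with $\mu'[k \mapsto w_k]$ satisfying $R_k$. Since distinct resolvents $R_k$ share only variables on which $\mu'$ is defined (their variable $k$ being private), the assignment $\nu$ that agrees with $\mu'$ on the old variables and sets $\nu(k) = w_k$ for every $k \in K$ satisfies $\bigcup_{k\in K} R_k$ all at once. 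Finally, $\mu'$ satisfies $\exists x: d \mid cx+s$, so by Lemma~\ref{lemma: cooper-div equivalence} it satisfies $\gcd(c,d) \mid s$, hence so does $\nu$. Thus $\nu$ witnesses the right-hand side.

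For the direction ``$\Leftarrow$'', let $\nu$ satisfy the right-hand side; I must exhibit a value $v_x$ with $C(v_x)$ satisfied by $\nu$. Let $l_x$ be the maximum over all lower-bound inequalities $-ax+p\leq 0$ of $\lceil \nu(p)/a \rceil$ (and $l_x = -\infty$ if there are none), and $u_x$ the minimum over all upper-bound inequalities $bx-q\leq 0$ of $\lfloor \nu(q)/b \rfloor$ (and $u_x = +\infty$ if there are none). If both are finite, pick inequalities $-a^\ast x + p^\ast \leq 0$ and $b^\ast x - q^\ast \leq 0$ realizing the extrema and let $k^\ast \in K$ be the associated variable; put $v_x = (\nu(k^\ast) + \nu(p^\ast))/a^\ast$. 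The constraints of $R_{k^\ast}$ give, in turn: $a^\ast \mid k^\ast + p^\ast$ so $v_x \in \mathbb{Z}$; $-k^\ast \leq 0$ so $v_x \geq \nu(p^\ast)/a^\ast$ and hence, by integrality, $v_x \geq l_x$; $b p^\ast - a q^\ast + b k^\ast \leq 0$, read as $b^\ast(\nu(p^\ast)+\nu(k^\ast)) \leq a^\ast\nu(q^\ast)$, so $v_x \leq \nu(q^\ast)/b^\ast$ and hence $v_x \leq u_x$; and $a^\ast d \mid c p^\ast + a^\ast s + c k^\ast$, which upon dividing the numerator of $c v_x + \nu(s)$ by $a^\ast$ yields $d \mid c v_x + \nu(s)$. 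Because $l_x$ and $u_x$ are the tightest of all individual lower and upper bounds, $l_x \leq v_x \leq u_x$ forces every lower- and upper-bound inequality of $C(x)$ to hold at $x = v_x$; with $C_0$ (satisfied by $\nu$, as it omits $x$) and the divisibility this gives $C(v_x)$. If instead $l_x = -\infty$ or $u_x = +\infty$ — which subsumes the case $K = \emptyset$ — there is no pair to exploit: invoke Lemma~\ref{lemma: cooper-div equivalence} together with $\gcd(c,d) \mid \nu(s)$ (satisfied by $\nu$) to get $S_d \neq \emptyset$, then Lemma~\ref{lemma: cooper-div solution set} to see that $S_d$ is an infinite arithmetic progression, hence unbounded below and above, and choose $v_x \in S_d$ with $v_x \leq u_x$ if $u_x$ is finite (respectively $v_x \geq l_x$ if $l_x$ is finite). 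All inequalities of $C(x)$ mentioning $x$ are then satisfied, $C_0$ is satisfied, and $v_x \in S_d$ handles the divisibility.

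The main obstacle is exactly the ``$\Leftarrow$'' direction in the unbounded cases: when $C(x)$ lacks a lower- or an upper-bound inequality the resolvents $R_k$ contribute nothing, so correctness rests entirely on the extra constraint $\gcd(c,d) \mid s$ and on Lemma~\ref{lemma: cooper-div solution set}, i.e.\ on the fact that a nonempty solution set of a single divisibility constraint reaches arbitrarily far in both directions — which is precisely why $\gcd(c,d) \mid s$ was added to weak Cooper elimination. A secondary point worth stating explicitly is that the single value $v_x$ chosen from the extremal pair $-a^\ast x + p^\ast \leq 0$, $b^\ast x - q^\ast \leq 0$ automatically satisfies every other inequality in $x$, because those two are the tightest bounds; everything else in the argument is a routine chase through the defining constraints of $R_k$.
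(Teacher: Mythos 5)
Your proof is correct and follows essentially the same route as the paper's: the forward direction combines Lemma~\ref{lemma: cooper-wc equivalence} per pair (using that each $k$ is private to its $R_k$) with Lemma~\ref{lemma: cooper-div equivalence}, and the backward direction selects the extremal lower/upper pair and chases the constraints of its resolvent to build the witness, falling back on Lemma~\ref{lemma: cooper-div solution set} and the $\gcd(c,d)\mid s$ constraint in the unbounded cases. No gaps; only a harmless notational slip (un-starred $a,b$ in the resolvent inequality) distinguishes it from the paper's argument.
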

\begin{proof}
First, we partition the problem $C(x)$ as follows:\newline
\centerline{$
  \begin{array}{l l l l}
    C_l &= \{-ax+p \leq 0 \in C(x) : a > 0 \}, &C_u &= \{bx-q \leq 0 \in C(x) : b > 0 \},\\
    I_d &= d \mid c x + s \in C(x), &C_r &= \{I \in C(x) : \coeff(I,x) = 0 \}.
  \end{array}
$}
By Lemma~\ref{lemma: cooper-wc equivalence}, it holds for all $-ax + p \leq 0, bx - q \leq 0 \in C(x)$ with $a,b > 0$ that:\newline
\centerline{$
\begin{array}{c}
(\exists x: C(x)) \rightarrow (\exists x: \{-ax + p \leq 0, bx - q \leq 0, d \mid c x + s\}) \\ \rightarrow (\exists k: \underbrace{\{-k \leq 0, k - m \leq 0, bp-aq+bk \leq 0, a \mid k+p, \,ad\mid cp+as+ck\}}_{R_k}).
\end{array}
$}
By Lemma~\ref{lemma: cooper-div equivalence}, it holds that: $(\exists x: C(x)) \rightarrow (\exists x: d \mid c x + s) \rightarrow \gcd(c,d) \mid s.$
As $C_r \subseteq C(x)$ it also holds that: $(\exists x: C(x)) \rightarrow C_r.$
As all new variables $k \in K$ appear only in one resolvent $R_k$, the above implications prove\newline
\centerline{$\exists x: C(x) \rightarrow \exists K: \left(\{I \in C(x) : \coeff(I,x) = 0 \} \cup 
                                  \{\gcd(c,d) \mid s\} \cup \underset{k \in K}{\bigcup} R_k \right).$}

Assume, vice versa, that $\nu$ is a satisfiable assignment for the formula after one step of weak Cooper elimination. Then it is easy to deduce the following facts:
\begin{itemize}
\item Let $S_l$ be the solution set for $x$, $\nu$, and $I_l = -ax + p \leq 0 \in C(x)$ with $a > 0$. Then $S_l = \left\{\lceil\frac{\nu(p)}{a}\rceil, \lceil\frac{\nu(p)}{a}\rceil + 1,\ldots\right\}$.
\item Let $S_u$ be the solution set for $x$, $\nu$, and $I_u = bx - q \leq 0 \in C(x)$ with $b > 0$. Then $S_u = \left\{\ldots,\lfloor\frac{\nu(q)}{b}\rfloor-1,\lfloor\frac{\nu(q)}{b}\rfloor\right\}$.
\item Let $S_I$ be the solution set for $x$, $\nu$, and $C_l \cup C_u$. Then $S_I = \bigcap_{I_l \in C_l} S_l \cap \bigcap_{I_u \in C_u} S_u$.
\item Let the set $S_I$ be bounded from below, i.e., $S_I = \{l,l+1,\ldots\}$ or $S_I = \{l,\ldots,u\}$. Then $l = \max_{I \in C_l}\left\{\lceil\frac{\nu(p')}{a'}\rceil : I = -a' x + p' \leq 0 \right\}$.
\item Let the set $S_I$ be bounded from above, i.e., $S_I = \{\ldots,u-1,u\}$ or $S_I = \{l,\ldots,u\}$. Then $u = \min_{I \in C_u}\left\{\lfloor\frac{\nu(q)}{b}\rfloor : I = b' x - q' \leq 0 \right\}$.
\item By Lemma~\ref{lemma: cooper-div equivalence}, $d \mid c x + \nu(s)$ is satisfiable because $\gcd(c,d) \mid s$ is contained in the result formula of weak Cooper elimination.
      By Lemma~\ref{lemma: cooper-div solution set}, the set of solutions for $x$, $\nu$, and $d \mid c x + s$ has the form $S_d = \{ v_0 + v' \mu : v' \in \mathbb{Z} \}$.
\item The solution set $S$ for $x$, $\nu$, and $C'$ is $S = S_d \cap S_I$.
\end{itemize}
Next, we do a case distinction on the structure of $C(x)$:
\begin{itemize}
\item Let $C_l = \emptyset$, then $S_I$ is unbounded from below.
      We choose a small enough $v \in S_d$, i.e., small enough $v' \in \mathbb{Z}$ such that $v = v_0 + v' \mu$.
      Then the assignment $x \mapsto v$ and $y \mapsto \nu(y)$ (if $y \neq x$) satisfies $C'$.
\item Let $C_u = \emptyset$, then $S_I$ is unbounded from above.
      We choose a large enough $v \in S_d$, i.e., large enough $v' \in \mathbb{Z}$ such that $v = v_0 + v' \mu$.
      Then the assignment $x \mapsto v$ and $y \mapsto \nu(y)$ for all $y \neq x$ satisfies $C'$.
\item Let $|C_l|, |C_u| > 0$. We select $I_l = -ax + p \leq 0$ such that\newline
\centerline{$
        \lceil\frac{\nu(p)}{a}\rceil = \max_{I \in C_l}\left\{\lceil\frac{\nu(p')}{a'}\rceil : I = -a' x + p' \leq 0\right\}
      $}
      and $I_u = bx - q \leq 0$ such that\newline
      \centerline{$
        \lfloor\frac{\nu(q)}{b}\rfloor = \min_{I \in C_u}\left\{\lfloor\frac{\nu(q')}{b'}\rfloor : I = b' x - q' \leq 0\right\}.
      $}
      The resolvent for the two constraints $I_l$ and $I_u$ is\newline
      \centerline{$
        R_k = \{-k \leq 0, k - m \leq 0 , bp-aq+bk \leq 0, a \mid k+p, \,ad\mid cp+as+ck \}.
      $}
      We will now show that $\frac{\nu(p + k)}{a}$ is in the set of solutions $S$ of $C(x)$.
      All of the remaining deductions stem from the evaluation of the resolvent under $\nu$.
      Since $a \mid \nu(p + k)$, $\frac{\nu(p + k)}{a} \in \mathbb{Z}$.
      Furthermore, since $\frac{\nu(p + k)}{a} \in \mathbb{Z}$ and $\nu(bp-aq+bk) \leq 0$, $\frac{\nu(p + k)}{a} \in S_I = \left\{ \lceil \frac{\nu(p)}{a} \rceil, \ldots, \lfloor \frac{\nu(q)}{b} \rfloor \right\}$.
      Finally, since \centerline{$
      a d \mid \nu(c p + a s + c k) = a d \mid a c \nu(x) + a \nu(s) = d \mid c \nu(x) + \nu(s),$}
       $\frac{\nu(p + k)}{a} \in S_d$.
      We choose the assignment $\nu'$ with $x \mapsto \frac{\nu(p + k)}{a}$ and $y \mapsto \nu(y)$ for all $y \neq x$.
      Hence, $\nu'$ satisfies $C'$.
\end{itemize}
\end{proof}

\begin{figure}[t]
\begin{minipage}{\linewidth}
\begin{algorithm}[H]
    \algoIfrule
    \caption{$\CombineDivs(x,C'(x))$}
    \label{ALGO: Phase I}
    \Input{The variable $x$ and a set of LIA constraints $C'(x)$}
    \Output{A set of LIA constraints $C(x)$ such that $C(x) \equiv C'(x)$ and there exists one divisibility constraint $d \mid c x + s \in C(x)$ such that $c > 0$}
    $C_d := \{d \mid c x + s \in C'(x) : c > 0 \}$\\
    $C(x)  := C'(x) \setminus C_d$ \;
    \uIfrule{$(C_d = \emptyset)$}{
      \Return $C(x) \cup \{ 1 \mid x\}$ \label{step C_d empty} \;}
    \While{$(|C_d| > 1)$}{ \label{step div-start}
      Select  $d_1 \mid a_1 x + p_1, d_2 \mid a_2 x + p_2 \in C_d$ \;
      $C_d := C_d \setminus \{d_1 \mid a_1 x + p_1, d_2 \mid a_2 x + p_2 \}$\;
      $d = \gcd(a_1 d_2, a_2 d_1)$ \;
      Choose $c_1$ and $c_2$ such that $c_1 a_1 d_2 + c_2 a_2 d_1 = d$\;
      $C_d := C_d \cup \{d_1 d_2 \mid d x + c_1 d_2 p_1 + c_2 d_1 p_2 \}$ \label{step div-solve}\;
	  $C(x) := C(x) \cup \{d \mid -a_1 p_2 + a_2 p_1\}$\; 
    } \label{step div-end}
    \Return $C(x) \cup C_d$ \label{step phaseI res}\;
\end{algorithm}
\end{minipage}
\caption{An algorithm that combines constraints $C_d = \{d \mid c x + s \in C'(x) : c > 0 \}$ until only one divisibility constraint in $x$ remains}
\label{fig: ALGO: Phase I}
\end{figure}

We stated that weak Cooper elimination can only be applied to those problems where $C(x)$ contains one divisibility constraint $d \mid a x + p$ in $x$.
To expand weak Cooper elimination to any set of constraints $C'(x)$ we briefly explained how to exhaustively apply $\divsolve$ to eliminate all but one constraint $d \mid a x + p$ in $x$.
The algorithm $\CombineDivs(x,C)$ (Fig.~\ref{fig: ALGO: Phase I}) is a more detailed version of this procedure.

\begin{lemma}
\label{lemma: Weak-Elim-I equivalence}
Let $C'(x)$ be a set of LIA constraints. Let $C(x)$ be the output of $\WCElimI(x,C'(x))$. Then $C(x) \equiv C'(x).$
\end{lemma}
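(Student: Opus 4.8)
The plan is to track the invariant $C(x) \cup C_d \equiv C'(x)$, where $C(x)$ and $C_d$ denote the two working sets of the algorithm $\CombineDivs$ in Figure~\ref{fig: ALGO: Phase I}, and to establish it by induction on the number of while-loop iterations. It holds before the loop, since there $C(x) = C'(x) \setminus C_d$, whence $C(x) \cup C_d = C'(x)$. Both exit points then yield the lemma: at the early return for $C_d = \emptyset$ the output is $C'(x) \cup \{1 \mid x\}$, equivalent to $C'(x)$ because $1 \mid x$ holds for every integer and is thus equivalent to $\top$; at the final return the output is $C(x) \cup C_d$, equivalent to $C'(x)$ by the invariant. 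An iteration only removes the two divisibility constraints $I_1 = d_1 \mid a_1 x + p_1$ and $I_2 = d_2 \mid a_2 x + p_2$ from $C_d$, inserts $I'_1 = d_1 d_2 \mid d x + c_1 d_2 p_1 + c_2 d_1 p_2$ into $C_d$ and $I'_2 = d \mid -a_1 p_2 + a_2 p_1$ into $C(x)$, and changes nothing else, where $d = \gcd(a_1 d_2, a_2 d_1)$ and $c_1 a_1 d_2 + c_2 a_2 d_1 = d$; so preserving the invariant reduces to the local equivalence
\[
  \{\,d_1 \mid a_1 x + p_1,\; d_2 \mid a_2 x + p_2\,\}\;\equiv\;\{\,d_1 d_2 \mid d x + c_1 d_2 p_1 + c_2 d_1 p_2,\; d \mid -a_1 p_2 + a_2 p_1\,\},
\]
i.e. soundness of one $\divsolve$ step as a formula over all variables, including $x$. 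Finiteness of the induction is immediate, since $|C_d|$ strictly decreases in each iteration.

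For the local equivalence I would pass to diophantine form: $d_i \mid a_i x + p_i$ is equivalent to $\exists z_i \in \mathbb{Z}: a_i x + p_i = d_i z_i$, so the left-hand set is equivalent to $\exists z_1, z_2: (a_1 x + p_1 = d_1 z_1 \wedge a_2 x + p_2 = d_2 z_2)$ and the right-hand set to $\exists w, t: (d x + c_1 d_2 p_1 + c_2 d_1 p_2 = d_1 d_2 w \wedge a_2 p_1 - a_1 p_2 = d t)$. Put $u = a_1 d_2 / d$ and $v = a_2 d_1 / d$, which are integers. The key observation is that the linear change of variables $(z_1, z_2) \mapsto (w, t) := (c_1 z_1 + c_2 z_2,\; v z_1 - u z_2)$ has determinant $-(c_1 u + c_2 v) = -1$ — dividing $c_1 a_1 d_2 + c_2 a_2 d_1 = d$ by $d$ gives $c_1 u + c_2 v = 1$ — hence is a bijection of $\mathbb{Z}^2$, with inverse $(w, t) \mapsto (u w + c_2 t,\; v w - c_1 t)$. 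A direct calculation — substituting $z_1 = u w + c_2 t$ and $z_2 = v w - c_1 t$ into $d_1 z_1$ and $d_2 z_2$ and using $c_1 a_1 d_2 + c_2 a_2 d_1 = d$ to collapse the coefficients of $p_1$ and of $p_2$ — shows that $a_1 x + p_1 = d_1 z_1$ and $a_2 x + p_2 = d_2 z_2$ hold for this $(z_1, z_2)$ exactly when $(w, t)$ satisfies $d x + c_1 d_2 p_1 + c_2 d_1 p_2 = d_1 d_2 w$ and $a_2 p_1 - a_1 p_2 = d t$. Since the substitution is a bijection, the two existential formulas are equivalent, and the local equivalence follows. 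The forward direction here is essentially the soundness argument for $\divsolve$ sketched in~\cite{Cooper:72a,CutChase}.

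The step I expect to need the most care is the backward direction of the local equivalence, namely recovering both $I_1$ and $I_2$ from $I'_1$ and $I'_2$ for one and the same $x$: a naive residue argument extracts only $d_1 \mid c_1 d_2 (a_1 x + p_1)$ from $I'_1$, which is strictly weaker than $I_1$ when $\gcd(c_1 d_2, d_1) > 1$, so the content of $I'_2$ has to be combined with the identity $c_1 a_1 d_2 + c_2 a_2 d_1 = d$; it is precisely the unimodularity of the substitution above that makes this combination work, after which only a routine verification remains. Assembling the pieces, the loop invariant holds throughout the run of $\CombineDivs$, and evaluating it at the two return statements gives $C(x) \equiv C'(x)$ for the output $C(x)$, as claimed.
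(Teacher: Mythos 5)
Your proof is correct and follows the same route as the paper's: both reduce the lemma to the equivalence of a single $\divsolve$ step, applied once per while-loop iteration (together with the trivial observations about the invariant $C(x)\cup C_d \equiv C'(x)$ and the $1\mid x$ case). The paper's proof simply cites~\cite{CutChase} for that step's equivalence, whereas you additionally prove it via the unimodular substitution $(z_1,z_2)\mapsto(c_1 z_1 + c_2 z_2,\; v z_1 - u z_2)$ of determinant $-1$ — a correct, self-contained argument (the determinant computation and the recovery of $a_i x + p_i = d_i z_i$ from the primed system both check out), which supplies the detail the paper delegates to the reference.
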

\begin{proof}
Follows directly from the proof of equivalence~\cite{CutChase} of the $\divsolve$ transformation.
\end{proof}

Since the output $C(x)$ of $\WCElimI(x,C'(x))$ is equivalent to $C'(x)$ and fulfils the conditions of weak Cooper elimination, we conclude the following equivalence for the output of weak Cooper elimination applied to $C(x)$:

\centerline{$\exists x: C'(x) \equiv \exists K: \left(\{I \in C(x) : \coeff(I,x) = 0 \} \cup 
                                  \{\gcd(c,d) \mid s\} \cup \underset{k \in K}{\bigcup} R_k \right)$}

  \section{Strong Conflict Resolution Revisited}
\label{SE: Strong Conflict Resolution}
\label{SSE: Strong Conflict Resolution}

Weak Cooper elimination is capable of exploring all unguarded variables to eventually create a problem where feasibility only depends on guarded variables.
It is simulated in a lazy manner through an additional set of CUTSAT++ rules (Fig.~\ref{fig: strong-conflict-rules}).
Instead of eliminating all unguarded variables before the application of CUTSAT++, 
the rules perform the same intermediate steps as weak Cooper elimination, 
viz. the combination of divisibility constraints via $\divsolve$ and the construction of resolvents, to resolve and block conflicts in unguarded constraints.
As a result, CUTSAT++ can avoid some of the intermediate steps of weak Cooper elimination. 
Furthermore, CUTSAT++ is not required to apply the intermediate steps of weak Cooper elimination one variable at a time.
The lazy approach of CUTSAT++ does not eliminate unguarded variables. 
In the worst case CUTSAT++ has to perform all of weak Cooper elimination's intermediate steps.
Then a strategy (Def.~\ref{definition:strictly-two-layered}) guarantees 
that CUTSAT++ recognizes that all unguarded conflict constraints have been blocked by guarded constraints.

The eventual result is the complete algorithm CUTSAT++, which is a combination of the rules Resolve-Cooper, Solve-Div-Left, Solve-Div-Right~(Fig.~\ref{fig: strong-conflict-rules}), a strictly-two-layered strategy~(Def.~\ref{definition:strictly-two-layered}), and the CUTSAT rules: Propagate, Propagate-Div, Decide, Conflict, Conflict-Div, Sat, Unsat-Div, Forget, Slack-Intro, Resolve, Skip-Decision, Backjump, Unsat, and Learn~\cite{CutChase}.

The advantage of the lazy approach is that CUTSAT++ might find a satisfiable assignment 
or detect unsatisfiability without 
encountering and resolving a large number of unguarded conflicts.
This means the number of divisibility constraint combinations and introduced resolvents 
might be much smaller in the lazy approach of CUTSAT++ than during the elimination with weak Cooper elimination.

In order to simulate weak Cooper elimination, CUTSAT++ uses a total order $\prec$ over all variables 
such that $y \prec x$ for all guarded variables $y$ and unguarded variables $x$~\cite{CutChase}.
While the order needs to be fixed for all unguarded variables, the ordering
among the guarded variables can be dynamically changed.
In relation to weak Cooper elimination, the order $\prec$ describes the elimination order for the unguarded variables, 
viz. $x_i \prec x_j$ if $x_j$ is eliminated before $x_i$.
A variable $x$ is called \emph{maximal} in a constraint $I$ if $x$ is contained in $I$ 
and all other variables in $I$ are smaller with respect to $\prec$.
The maximal variable in $I$ is also called its \emph{top variable} $(x = \topop(I))$.

\begin{definition} \label{definition:conflicting cores}
Let $S = \SearchSt{M}{C}$ be a state,  $C' \subseteq C$, $x$ the top variable in $C'$ and 
let all other variables in $C'$ be fixed. The pair $(x,C')$ is a \emph{conflicting core} if it is of one of the following three forms\newline
(1)~$C' = \{-ax+p \leq 0, bx-q \leq 0\}$, and the lower bound from $-ax+p \leq 0$ contradicts the upper 
bound from $bx-q \leq 0$, i.e., $\bound(-ax+p \leq 0,x,M) > \bound(bx-q \leq 0,x,M)$\cite{CutChase};
in this case $(x,C')$ is called an \emph{interval conflicting core} and its strong resolvent is
$(\{-k \leq 0, k - a + 1 \leq 0 \}, \{ bp-aq+bk \leq 0, a \mid k+p \})$~\cite{CutChase}\newline
(2)~$C' = \{-ax+p \leq 0, bx-q \leq 0, d \mid c x + s\}$, and $b_l = \bound(-ax+p \leq 0,x,M)$,
$b_u = \bound(bx-q \leq 0,x,M)$, $b_l \leq b_u$ and for all $b_d \in [b_l, b_u]$ we have $d \nmid c b_d + \lowerop(s,M)$\cite{CutChase}; in this case $(x,C')$ is called a
\emph{divisibility conflicting core} and its strong resolvent is
$(\{-k \leq 0, k - m \leq 0 \}, \{ bp-aq+bk \leq 0, a \mid k+p, ad\mid cp+as+ck \})$~\cite{CutChase}\newline
(3)~$C' = \{d \mid c x + s\}$, and for all $b_d \in \mathbb{Z}$ we have $d \nmid c b_d + \lowerop(s,M)$;
in this case $(x,C')$ is called a \emph{diophantine conflicting core} and its strong resolvent is
$( \emptyset ,\, \{ gcd(c,d) \mid s \})$.\newline In the first two cases $k$ is a fresh variable and $m = \lcm\left(a,\frac{ad}{\gcd(ad,c)}\right)-1$.
\end{definition}

We refer to the respective strong resolvents for a conflicting core $(x,C')$ by
the function $\cooper(x,C')$ which returns a pair $(R_k,R_c)$ as defined above.
Note that the newly introduced variable $k$ is guarded by the constraints in $R_k$.
If there is a conflicting core $(x,C')$ in state $S$, then $x$ is called a \emph{conflicting variable}.
A \emph{potential conflicting core} is a pair $(x,C')$ if there exists a state $S$ where $(x,C')$ 
is a conflicting core.


Next we define a semantic generalization of strong resolvents. Since the strong resolvents
generated out of conflicting cores will be further processed by CUTSAT++, we must guarantee
that any set of constraints implying the feasibility of the conflicting core constraints
prevents a second application of Resolve-Cooper to the same conflicting core. All strong resolvents
of Definition~\ref{definition:conflicting cores} are also strong resolvents in the 
sense of the below definition (see also end of Section~\ref{SE: Weak-Cooper-Elimination}).

\begin{definition}
  \label{definition:resolvent}
  A set of constraints $R$ is a \emph{strong resolvent} for the pair $(x,C')$ if it holds 
 that $R \rightarrow \exists x : C' \mbox{ and for all } J \in R: \topop(J) \prec x.$
\end{definition}

\begin{lemma}
  \label{lemma:strong resolvent}
  Let $C' \subseteq C$. Let $\cooper(x,C')=(R_k, R_c)$. Let $R = R_k \cup R_c$.
  Then $\exists k: C \cup R \equiv C$. Furthermore, $R$ is a strong resolvent for $(x,C')$.
\end{lemma}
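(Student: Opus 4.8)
The plan is to reduce both claims to results already established in Section~\ref{SE: Weak-Cooper-Elimination}, doing a case split on the three shapes of conflicting core from Definition~\ref{definition:conflicting cores}. First I would handle the equivalence $\exists k: C \cup R \equiv C$. The direction $\exists k : C \cup R \to C$ is immediate since $C \subseteq C \cup R$ and $k \notin \vars(C)$ (here I use that $k$ is a fresh variable). For the converse $C \to \exists k : C \cup R$, note that it suffices to show $C \to \exists k : R$, again because $k$ does not occur in $C$, so a model of $C$ can be extended on $k$ independently. Since $(x,C')$ is a conflicting core we have $C' \subseteq C$, hence $C \to \exists x : C'$, and so it is enough to prove $\exists x : C' \to \exists k : R$ for each of the three forms of $C'$. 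For form~(2) this is exactly the forward direction of Lemma~\ref{lemma: cooper-wc equivalence} (with $R = R_k \cup R_c$ being precisely the resolvent $\{-k \leq 0,\ k-m \leq 0,\ bp-aq+bk \leq 0,\ a \mid k+p,\ ad \mid cp+as+ck\}$). For form~(3), $R = \{\gcd(c,d) \mid s\}$ contains no $k$ at all, and the implication $\exists x : d \mid cx+s \to \gcd(c,d) \mid s$ is the forward direction of Lemma~\ref{lemma: cooper-div equivalence}. For form~(1), the interval core, $C' = \{-ax+p \leq 0,\ bx-q \leq 0\}$ and $R = \{-k \leq 0,\ k-a+1 \leq 0,\ bp-aq+bk \leq 0,\ a \mid k+p\}$; this is the special case of Lemma~\ref{lemma: cooper-wc equivalence} with no divisibility constraint (equivalently $d \mid cx+s$ replaced by the trivial $1 \mid x$, so that $m = a-1$ and the constraint $ad \mid cp+as+ck$ becomes vacuous), and I would either invoke that directly or give the one-line Bézout argument: if $v \in \mathbb{Z}$ satisfies $\lceil \tfrac{p}{a} \rceil \leq v \leq \lfloor \tfrac{q}{b} \rfloor$ then $k := av - p$ satisfies all four constraints of $R$.

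Next, the claim that $R$ is a strong resolvent for $(x,C')$ in the sense of Definition~\ref{definition:resolvent} requires two things: $R \to \exists x : C'$, and $\topop(J) \prec x$ for every $J \in R$. The first is precisely the backward direction of Lemmas~\ref{lemma: cooper-wc equivalence} and~\ref{lemma: cooper-div equivalence} (interval core again being the no-divisibility special case of the former), so it follows from the same case analysis as above, run in the other direction. The second is a syntactic check: every constraint in $R_k \cup R_c$ is built from $k$ and from the polynomials $p$, $q$, $s$, $c$, $d$, $a$, $b$ appearing in $C'$ but contains no occurrence of $x$; since $x = \topop(C')$, every variable other than $k$ occurring in $R$ is $\prec x$, and $k$ is fresh, hence placed below all unguarded variables (and in particular below $x$) in the order $\prec$ — it is guarded by $R_k$, and by the convention that guarded variables are $\prec$ all unguarded ones, $k \prec x$ provided $x$ is unguarded, which is guaranteed by the side condition of Resolve-Cooper; in the guarded case one orders $k$ below $x$ explicitly, which the dynamic freedom among guarded variables permits.

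The main obstacle I anticipate is purely bookkeeping rather than conceptual: making sure the interval conflicting core~(1) is correctly subsumed as a degenerate instance of Lemma~\ref{lemma: cooper-wc equivalence}, since that lemma as stated assumes a divisibility constraint $d \mid cx+s$ with $c > 0$ is present, whereas in form~(1) there is none. The clean way to dispatch this is to observe that $C'$ can be padded with the always-true constraint $1 \mid x$ (this does not change $\exists x : C'$), after which $d = 1$, $c = 1$, $s = 0$, $m = \lcm(a, a) - 1 = a - 1$, and $ad \mid cp + as + ck$ reads $a \mid p + k$, which already appears; so the padded resolvent coincides with $R$ and Lemma~\ref{lemma: cooper-wc equivalence} applies verbatim. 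A second minor point to get right is the placement of the fresh variable $k$ in $\prec$: the statement $\topop(J) \prec x$ must hold, so one must be explicit that $k$ is introduced as a guarded variable and that guarded variables are below all unguarded ones, and that $x$ is unguarded whenever Resolve-Cooper fires. With these two observations in place the proof is a short case analysis invoking the three earlier lemmas.
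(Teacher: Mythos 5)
Your proposal is correct and follows essentially the same route as the paper: both reduce the claim to Lemmas~\ref{lemma: cooper-wc equivalence} and~\ref{lemma: cooper-div equivalence}, handle the interval conflicting core by padding $C'$ with the trivial constraint $1 \mid x$, and justify $\topop(J) \prec x$ by the freshness of $k$ (placed minimally in $\prec$) together with $x$ being maximal in $C'$. The only nitpick is in your optional B\'ezout shortcut for the interval case: the constraint $k - a + 1 \leq 0$ forces $v = \lceil p/a \rceil$ rather than an arbitrary $v$ in the admissible interval, but since your primary argument invokes the lemma directly this does not affect correctness.
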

\begin{proof}
  Follows directly from the Lemmas~\ref{lemma: cooper-wc equivalence} and~\ref{lemma: cooper-div equivalence}.
  The interval conflicting core is the only new case.
  However, $\cooper(x,\{ -ax+p \leq 0, bx-q \leq 0 \})$ is equivalent to $\cooper(x,\{-ax+p \leq 0, bx-q \leq 0, 1 \mid x\})$.
  By Lemmas~\ref{lemma: cooper-wc equivalence} and~\ref{lemma: cooper-div equivalence}, $R \rightarrow \exists x: C'$.
  Finally, since $k$ is the minimal element of $\prec$ and all other variables in $R$ appear in $C'$, where $x$ is maximal, it holds that $J \in R: \topop(J) \prec x$.
\end{proof}

The rule Resolve-Cooper (Fig.~\ref{fig: strong-conflict-rules})
requires that the conflicting variable $x$ of the conflicting core $(x,C')$ is the 
top-variable in the constraints of $C'$.
This simulates a setting where all variables $y$ with $x \prec y$ are already eliminated.
We restrict Resolve-Cooper to unguarded constraints, because weak Cooper elimination modifies only unguarded constraints.

\begin{lemma}
  \label{lemma:resolvent conflict}
  Let $S = \SearchSt{M}{C}$ be a CUTSAT++ state.
  Let $C' \subseteq C$ and $x$ be an unguarded variable.
  Let $R$, $R \subseteq C$, be a \emph{strong resolvent} for $(x,C')$.
  Then Resolve-Cooper is not applicable to $(x,C')$.\qed
\end{lemma}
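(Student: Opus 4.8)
The plan is a proof by contradiction. Suppose Resolve-Cooper were applicable to $(x,C')$. First I would record the preconditions that matter: $(x,C')$ is a conflicting core, so $x=\topop(C')$ and every variable of $C'$ other than $x$ is fixed in $M$; every variable $z$ with $z\prec x$ is fixed in $M$; and whenever $J\in C$ is a conflict we have $\topop(J)\not\prec x$. Let $\upsilon$ be the partial assignment that maps every variable $z$ fixed in $M$ to its value $\lowerop(z,M)=\upper(z,M)$; then $\upsilon(p)=\val(p,M)$ for every polynomial $p$ all of whose variables are fixed.

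The first substantive step is to show $\upsilon\models R$. By Definition~\ref{definition:resolvent}, $\topop(J)\prec x$ for every $J\in R$, so every variable occurring in $R$ is $\prec x$ and hence fixed; thus $\upsilon$ is total on $\vars(R)$. Since $R\subseteq C$ and $\topop(J)\prec x$ for each $J\in R$, the contrapositive of the Resolve-Cooper precondition on conflicts yields that no $J\in R$ is a conflict. It then remains to note that a constraint $J$ all of whose variables are fixed and which is not a conflict is satisfied by $\upsilon$: for an inequality $p\leq 0$, not being a conflict means $\lowerop(p,M)\leq 0$, i.e. $\upsilon(p)\leq 0$; for a divisibility constraint with (necessarily fixed) top variable $x'$, the interval $[\lowerop(x',M),\upper(x',M)]$ over which the definition of conflict quantifies reduces to the single value $\upsilon(x')$, so not being a conflict means precisely that the divisibility holds under $\upsilon$. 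Hence $\upsilon\models J$ for all $J\in R$, so $\upsilon\models R$.

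The second step is to instantiate the strong-resolvent implication and derive the contradiction. Definition~\ref{definition:resolvent} also gives the valid implication $R\rightarrow\exists x:C'$; its free variables lie in $\vars(R)\cup(\vars(C')\setminus\{x\})$, all of which are fixed, so $\upsilon$ is defined on them, and from $\upsilon\models R$ we get $\upsilon\models\exists x:C'$, i.e. there is $v\in\mathbb{Z}$ with $\upsilon[x\mapsto v]\models C'$. But unfolding the three forms of Definition~\ref{definition:conflicting cores} shows that this is impossible: replacing $\lowerop(p,M)$, $\lowerop(q,M)$, $\lowerop(s,M)$ by $\upsilon(p)$, $\upsilon(q)$, $\upsilon(s)$ (legitimate since these polynomials are fully fixed), and using that $-ax+p\leq 0$ is satisfied exactly by the integers $v\geq\bound(-ax+p\leq 0,x,M)$ and $bx-q\leq 0$ exactly by the integers $v\leq\bound(bx-q\leq 0,x,M)$ (by the definition of $\bound$ for inequalities), the conflicting-core condition states in each of the three cases — interval, divisibility, diophantine — exactly that no integer $v$ satisfies $C'$ under $\upsilon$. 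This contradicts $\upsilon[x\mapsto v]\models C'$, so Resolve-Cooper is not applicable to $(x,C')$.

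I expect the main obstacle to be the bridge from the operational notion ``$J$ is not a conflict'' to the semantic statement $\upsilon\models J$, and in particular unfolding the definition of conflict for divisibility constraints and observing that its quantification over $[\lowerop(x',M),\upper(x',M)]$ collapses to a single value once the top variable $x'$ is fixed; a secondary point that needs care is the bookkeeping of free variables which legitimizes instantiating the valid implication $R\rightarrow\exists x:C'$ at the partial assignment $\upsilon$.
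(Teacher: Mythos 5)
Your proof is correct and follows essentially the same route as the paper's: assume applicability, note that all variables of $R$ are $\prec x$ and hence fixed, use the Resolve-Cooper precondition on conflicts to conclude no constraint of $R$ is a conflict and hence $R$ is satisfied under the partial model, then invoke the strong-resolvent implication $R \rightarrow \exists x\colon C'$ to contradict the definition of a conflicting core. Your version merely fills in, more carefully than the paper does, the bridge from ``not a conflict and fully fixed'' to ``satisfied by the partial assignment.''
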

\begin{proof}
  Assume for contradiction that $D = (x,C')$ is a conflicting core, $R \in C$ is a strong resolvent for $D$ in state $S$ and Resolve-Cooper is applicable to $D$ in state $S$.
  Resolve-Cooper requires that all variables $y \prec x$ are fixed (Fig.~\ref{fig: strong-conflict-rules}). This holds especially for all variables in $R$ (Def.~\ref{definition:resolvent}).
  Due to the restriction that every conflict $J \in C$ has $\topop(J) \not \prec \topop(I)$ in Resolve-Cooper, there is no conflict in $R$.
  Furthermore, since all variables $y \prec x$ are fixed, $R$ is satisfied by the partial assignment defined by $M$.
  By Def.~\ref{definition:conflicting cores}, all conflicting cores have no satisfiable solution for $x$ under partial model $M$.
  However, by Def.~\ref{definition:resolvent}, $R$ satisfiable implies that there exists an $x$ such that $C'$ is satisfiable under $M$.
  This contradicts the assumption that $(x,C')$ is a conflicting core!
\end{proof}

For the resolvent $R$ to block Resolve-Cooper from being applied to the conflicting 
core $(x,C')$, CUTSAT++ has to detect all conflicts in $R$.
Detecting all conflicts in $R$ is only possible if CUTSAT++ fixes all variables 
$y$ with $y \prec x$ and if Resolve-Cooper is only applicable if there exists no 
conflict $I$ with $\topop(I) \prec x$.
Therefore, the remaining restrictions of Resolve-Cooper justify the above Lemma.
If we add strong resolvents again and again, then CUTSAT++ will reach a state after 
which every encounter of a conflicting core guarantees a conflict in a guarded constraint.
From this point forward CUTSAT++ won't apply Resolve-Cooper anymore. 
The remaining guarded conflicts are resolved with the rules Conflict and Conflict-Div~\cite{CutChase}.

The rules Solve-Div-Left and Solve-Div-Right (Fig.~\ref{fig: strong-conflict-rules}) 
combine divisibility constraints as it is done a priori to weak Cooper elimination.
In these rules we restrict the application of $\divsolve(x,I_1,I_2)$ to constraints where 
$x$ is the top variable and where all variables $y$ in $I_1$ and $I_2$, with $y \neq x$, are fixed.
The ordering restriction simulates the order of elimination, 
i.e., we apply $\divsolve(x,I_1,I_2)$ in a setting where all variables 
$y$ with $x \prec y$ appear to be eliminated in $I_1$ and $I_2$.
Otherwise, divergence would be possible (see example~\ref{example: Solve-Div top-var restriction}).
Requiring smaller variables to be fixed prevents the accidental generation of a conflict
for an unguarded variable $x_i$ by $\divsolve(x,I_1,I_2)$.

Thanks to an eager top-level propagating strategy, defined below, 
any unguarded conflict in CUTSAT++ is either resolved with Solve-Div-Right 
(Fig.~\ref{fig: strong-conflict-rules}) or 
CUTSAT++ constructs a conflicting core that is resolved with Resolve-Cooper. 
Both cases may require multiple applications of the 
Solve-Div-Left rule (Fig.~\ref{fig: strong-conflict-rules}).
We define the following further restrictions on the CUTSAT++ rules that will eventually generate
the above described behavior.

\begin{definition}
  \label{def:eager top}
  Let ${\bowtie} \in \{\leq, {\geq}\}$.
  We call a strategy for CUTSAT++ \emph{eager top-level propagating} if we restrict propagations and decisions for every state $\SearchSt{M}{C}$ in the following way:
  \begin{enumerate}
  \item Let $x$ be an unguarded variable. Then we only allow to propagate bounds $\propBnd{x}{\bowtie}{\bound(I,x,M)}{}$ if $x$ is the top variable in $I$.
    Furthermore, if $I$ is a divisibility constraint $d \mid a x + p$, then we only propagate  $d \mid a x + p$ if:
    \begin{enumerate}
    \item Either $\lowerop(x,M) \neq -\infty$ and $\upper(x,M) \neq \infty$ 
    \item Or if $\gcd(a,d) \mid \lowerop(p,M)$, and $d \mid a x + p$ is the only divisibility constraint in $C$ with $x$ as top variable.
    \end{enumerate}
  \item Let $x$ be an unguarded variable. Then we only allow decisions $\gamma = \decBnd{x}{\bowtie}{b}$ if:
    \begin{enumerate}
    \item For every constraint $I \in C$ with $x = \topop(I)$ all occurring variables $y \neq x$ are fixed
    \item There exists no $I \in C$ where $x = \topop(I)$ and $I$ is a conflict in $\concseq{ M,\gamma }$
    \item Either $\lowerop(x,M) \neq -\infty$ and $\upper(x,M) \neq \infty$ or there exists at most one divisibility constraint in $C$ with $x$ as top variable.
    \end{enumerate}
  \end{enumerate}
\end{definition}

An eager top-level propagating strategy has two advantages. 
First, the strategy dictates an order of influence over the variables, i.e., a bound for unguarded variable $x$ is only influenced by previously propagated bounds for variable $y$ with $y \prec x$.
Furthermore, the strategy makes only decisions for unguarded variable $x$ when all constraints with $x = \topop(I)$ are fixed and satisfied by the decision.
This means any conflict $I \in C$ with $x = \topop(I)$ is impossible as long as the decision for $x$ remains on the bound sequence.
For the same purpose, i.e., avoiding conflicts $I$ where $x = \topop(I)$ is fixed by a decision, CUTSAT++ backjumps in the rules Resolve-Cooper and Solve-Div-Right to a state where this is not the case.
To avoid frozen states resulting from the eager top-level propagating strategy, the slack variable $x_S$ has to be the smallest unguarded variable in $\prec$. 
Otherwise, the constraints $x - x_S \leq 0$, $-x - x_S \leq 0$ introduced by Slack-Intro cannot be used to propagate bounds for variable $x$, and $x$ would remain stuck.

\begin{figure}[t]
\begin{align*}
\begin{CSRules}
   \CSRuleName{Solve-Div-Left}
   \CSRuleME{\SearchSt{M}{C}}{\SearchSt{M}{C'}}
    {\mbox{divisibility constraints } I_1,I_2 \in C, \\ x \mbox{ is top in } I_1 \mbox{ and } I_2, \\ x \mbox{ is unguarded},\\ \mbox{all other vars. in } I_1, I_2 \mbox{ are fixed}, \\
     (I'_1,I'_2) = \divsolve(x, I_1,I_2),\\ C' = C \setminus \lbrace I_1, I_2 \rbrace \cup \lbrace I'_1, I'_2 \rbrace, \\ I'_2 \mbox{ is not a conflict}}
   \CSRuleName{Solve-Div-Right}
   \CSRuleME{\SearchSt{M}{C}}{\SearchSt{M'}{C'}}
   {\mbox{divisibility constraints } I_1,I_2 \in C, \\ x \mbox{ is top in } I_1 \mbox{ and } I_2, \\ x \mbox{ is unguarded},\\ \mbox{all other vars. in } I_1, I_2 \mbox{ are fixed}, \\
    (I'_1,I'_2) = \divsolve(x, I_1,I_2),\\ C' = C \setminus \lbrace I_1, I_2 \rbrace \cup \lbrace I'_1, I'_2 \rbrace, \\ I'_2 \mbox{ is a conflict} \\
    y  = \topop(I'_2) \\ M' = \prefix(M,y)}
   \CSRuleName{Resolve-Cooper}
   \CSRuleME{\SearchSt{M}{C}}{\SearchSt{M'}{C \cup R_k \cup R_c}}
   {(x,C') \mbox{ is a conflicting core}, \\ x \mbox{ is unguarded},\\ \mbox{ all } z \prec x \mbox{ are fixed and } C' \subseteq C, \\
    \mbox{if } J \in C \mbox{ is a conflict, then } \topop(J) \not \prec x, \\ \cooper(x,C') = (R_k,R_c),\\ y = \min_{I \in R_c}\{\topop(I)\},\\ M' = \prefix(M,y)}
\end{CSRules}
\end{align*}
In the above rules, $M' = \prefix(M,y)$ defines the largest prefix of $M$ that contains only decided bounds for variables $x$ with $x \prec y$.
\caption{Our strong conflict resolution rules}
\label{fig: strong-conflict-rules}
\end{figure}

\begin{definition}
\label{def: reasonable strategy}
A strategy is \emph{reasonable} if Propagate applied to constraints of the form $\pm x - b \leq 0$ has the highest priority over all rules and the Forget Rule is applied only finitely often \cite{CutChase}.
\end{definition}

\begin{definition}
  \label{definition:strictly-two-layered}
  A strategy is \emph{strictly-two-layered} 
   if:\newline (1) it is reasonable, (2) it is eager top-level propagating (Def.~\ref{def:eager top}),
  (3) the Forget, Conflict, Conflict-Div rules only apply to guarded constraints,
  (4) Forget cannot be applied to a divisibility constraint or a constraint contained in a strong resolvent, and
  (5) only guarded constraints are used to propagate guarded variables.
\end{definition}

A \emph{strictly-two-layered} strategy is the final restriction to CUTSAT++.
With the condition~\ref{definition:strictly-two-layered}-(3) it partitions conflict resolution into 
two layers:
Every unguarded conflict is handled with the rules Resolve-Cooper, Solve-Div-Left, and Solve-Div-Right (Fig.~\ref{fig: strong-conflict-rules}), 
every guarded conflict with the rules Conflict(-Div).
The conditions~\ref{definition:strictly-two-layered}-(1) and~\ref{definition:strictly-two-layered}-(5) 
make the guarded variables independent from the unguarded variables.
The conditions~\ref{definition:strictly-two-layered}-(2) and~\ref{definition:strictly-two-layered}-(4)  
give a guarantee that the rules Resolve-Cooper, Solve-Div-Left, and Solve-Div-Right are applied at most finitely often.
We assume for the remainder of the paper that all runs of CUTSAT++ follow a strictly-two-layered strategy.

  \section{Termination and Completeness}
\label{SE: Termination and Completeness}

The CUTSAT++ rules are Propagate, Propagate-Div, Decide, Conflict, Conflict-Div, Sat, Unsat-Div, 
Forget, Slack-Intro, Resolve, Skip-Decision, Backjump, Unsat, and Learn from~\cite{CutChase}, 
as well as Resolve-Cooper, Solve-Div-Left, 
and Solve-Div-Right~(Fig.~\ref{fig: strong-conflict-rules}).
Before we prove termination and completeness for CUTSAT++, we have to prove another property over strong resolvents.
We have proven in Section \ref{SE: Strong Conflict Resolution} that Resolve-Cooper applied to conflicting core $(x,C')$ adds a strong resolvent $R$, which blocks another application of Resolve-Cooper to $(x,C')$.
However, CUTSAT++ is able to remove constraints from $R$ with the rules Solve-Div-Left and Solve-Div-Right.
This removes the original conflicting core $R$ from our state.
Nonetheless, CUTSAT++ is still unable to apply Resolve-Cooper to conflicting core $(x,C')$ because the rules Solve-Div-Left and Solve-Div-Right guarantee that a new strong resolvent $R'$ for conflicting core $(x,C')$ is introduced:

\begin{lemma}
  \label{lemma:replacing resolvents}
  Let $S = \SearchSt{M}{C}$ be a state reachable by CUTSAT++ from the initial state $\SearchSt{\emptyseq}{C_0}$ and let
  $S' = \SearchSt{M'}{C'}$ be a state reachable by CUTSAT++ from $S$.
  Let $C$ contain a strong resolvent $R$ for $(x, C'')$. Then $C'$ contains also a strong resolvent $R'$ for $(x, C'')$.
\end{lemma}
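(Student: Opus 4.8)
The plan is to establish the statement for a single CUTSAT++ transition and then induct on the length of the run from $S$ to $S'$; the base case (zero transitions) is trivial with $R' = R$. So fix a transition from a state with constraint set $C_1$, which contains a strong resolvent $R$ for $(x, C'')$, to a state with constraint set $C_2$, and produce a strong resolvent $R'$ for $(x, C'')$ inside $C_2$. I would split the CUTSAT++ rules according to whether they can delete constraints.

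Every rule other than Forget, Solve-Div-Left and Solve-Div-Right either leaves the constraint set unchanged or enlarges it, so that $C_2 \supseteq C_1$ (the rules leading to the final states Sat, Unsat, Unsat-Div need not be considered, since $S'$ has the form $\SearchSt{M'}{C'}$). For all of these, $R \subseteq C_2$, and $R$ is still a strong resolvent for $(x, C'')$: the implication $R \to \exists x : C''$ of Definition~\ref{definition:resolvent} is state-independent, and $\topop(J) \prec x$ for each $J \in R$ is preserved, because $x$ is unguarded, guarded variables are always $\prec$ all unguarded variables, and the relative order among the unguarded variables is fixed, so no admissible reordering can push the top variable of a constraint of $R$ up to or past $x$. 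Hence $R' = R$.

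For Forget I would invoke condition~(4) of the strictly-two-layered strategy (Definition~\ref{definition:strictly-two-layered}): Forget applies neither to divisibility constraints nor to constraints occurring in a strong resolvent present in the state; since every constraint of $R$ occurs in the strong resolvent $R \subseteq C_1$, Forget cannot touch $R$, and again $R' = R$ works. For Solve-Div-Left and Solve-Div-Right, the transition replaces two divisibility constraints $I_1, I_2$ with a common top variable $y$, where $y$ is unguarded, by $(I'_1, I'_2) = \divsolve(y, I_1, I_2)$. If neither $I_1$ nor $I_2$ lies in $R$, then $R \subseteq C_2$ and $R' = R$. Otherwise, say $I_1 \in R$; set $R' := (R \setminus \{I_1, I_2\}) \cup \{I'_1, I'_2\}$, which is contained in $C_2 = C_1 \setminus \{I_1, I_2\} \cup \{I'_1, I'_2\}$, and check the two clauses of Definition~\ref{definition:resolvent}. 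The implication $R' \to \exists x : C''$ holds because $\divsolve$ yields a genuine equivalence $I_1 \wedge I_2 \equiv I'_1 \wedge I'_2$ (the equivalence of the $\divsolve$ transformation underlying Lemma~\ref{lemma: Weak-Elim-I equivalence}), so that $I'_1 \wedge I'_2$ entails each of $I_1$ and $I_2$ separately; therefore $R'$ entails every conjunct of $R$, whence $R' \to R \to \exists x : C''$, and this step does not require $I_2 \in R$. For the ordering clause: $I'_1$ has $y$ as its top variable (its $y$-coefficient is positive by construction of $\divsolve$, and its remaining variables already occur in $I_1$ or $I_2$ and are therefore $\prec y$), $I'_2$ does not mention $y$ and only contains variables of $I_1, I_2$ strictly below $y$, and $y = \topop(I_1) \prec x$ since $I_1 \in R$; combined with $\topop(J) \prec x$ for $J \in R \setminus \{I_1, I_2\}$, this gives $\topop(J) \prec x$ for all $J \in R'$.

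The step I expect to be the main obstacle is the Solve-Div case: one must preserve both clauses of Definition~\ref{definition:resolvent} under the $\divsolve$ replacement even when only one of the two combined constraints is a member of $R$, and this is exactly where it matters that $\divsolve$ produces an equivalence rather than a mere implication; one must also track the top variables of $I'_1$ and $I'_2$ to keep them below $x$. The Forget case is conceptually easy but relies on reading condition~(4) of the strictly-two-layered strategy as protecting every constraint that belongs to a strong resolvent currently present in the state.
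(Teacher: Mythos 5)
Your proposal is correct and follows essentially the same route as the paper's proof: the paper likewise observes that by condition~(4) of the strictly-two-layered strategy only Solve-Div-Left and Solve-Div-Right can remove constraints of $R$, and that replacing the affected constraints by their $\divsolve$ images yields $R' = R \setminus \{I_1,I_2\} \cup \{I'_1,I'_2\}$ with $R' \rightarrow R \rightarrow \exists x : C''$ via the equivalence of the $\divsolve$ transformation. Your explicit verification of the ordering clause $\topop(J) \prec x$ for the new constraints is a detail the paper leaves implicit, but the argument is the same.
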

\begin{proof}
  Assume for a contradiction that $S$ contains a strong resolvent $R$ for $(x, C'')$ and $S'$ contains no strong resolvent $R' \in C'$ for $(x, C'')$.
  W.l.o.g. we assume that $S'$ is the first state after $S$ where $R \nsubseteq C'$.
  By Def.~\ref{definition:strictly-two-layered}.(4), CUTSAT++ with a strictly-two-layered strategy cannot remove constraints from
  a strong resolvent $R$ except with the rules Solve-Div-Right and Solve-Div-Left.
  Through the equivalence proven for $\divsolve(x,I_1,I_2)$ in \cite{CutChase}, we know that there exist $I'_1, I'_2 \in C'$ such that $\{I'_1, I'_2\} \equiv \{I_1, I_2\}$ and
  $R \subseteq C' \setminus \{I'_1, I'_2\} \cup \{I_1, I_2\}$.
  Thus $R' = R \setminus \{I_1, I_2\} \cup \{I'_1, I'_2\}$ is a strong resolvent of $(x, C'')$ such that \newline
  \centerline{$ R' \rightarrow R \rightarrow \exists x: C''$.}
  Furthermore, $R'$ is a subset of $C'$, which contradicts our initial assumption!
\end{proof} 

Together with Lemma \ref{lemma:resolvent conflict} this property implies that Resolve-Cooper is applied at most once to every conflicting core encountered by CUTSAT++.
This is essential for our termination proof.

\subsection{Proof for Termination}
\label{SSE: Termination of the Strong-Conflict-Rules}

For the termination proof of CUTSAT++, we consider a (possibly infinite) 
sequence of rule applications 
$\SearchSt{\emptyseq}{C_0} = S_0 \CSArrow S_1 \CSArrow \ldots$ on a problem $C_0$,
following the strictly-two-layered strategy.

First, this sequence reaches a state 
$S_s$ ($s \in \mathbb{N}^+_0$) after a finite derivation of rule 
applications $S_0 \CSArrow \ldots \CSArrow S_s$
such that there is no further application of the rules Slack-Intro 
and Forget after state $S_s$: 

\begin{lemma}
  \label{lemma:prepphase}
  Let $\SearchSt{\emptyseq}{C_0} = S_0 \CSArrow S_1 \CSArrow \ldots$ be a sequence of rule applications applied to a problem $C_0$,
following the strictly-two-layered strategy.
  Then the sequence reaches a state $S_s$ ($s \in \mathbb{N}^+_0$) after at most finitely many rule 
applications $S_0 \CSArrow \ldots \CSArrow S_s$
such that there is no further application of the rules Slack-Intro 
and Forget after state $S_s$. 
\end{lemma}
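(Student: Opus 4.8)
The plan is to bound the number of applications of Slack-Intro and of Forget separately; once both are known to be finite, the last application among them occurs at some finite index $s$, and $S_s$ is the state we want.

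\emph{Forget.} This case is immediate: a strictly-two-layered strategy is \emph{reasonable} by Definition~\ref{definition:strictly-two-layered}.(1), and by Definition~\ref{def: reasonable strategy} a reasonable strategy applies Forget only finitely often.

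\emph{Slack-Intro.} Here I would show that Slack-Intro is applied at most $|\vars(C_0)|$ times over the whole run, using three observations. First, CUTSAT++ reuses one fixed slack variable $x_S$ for every application of Slack-Intro, so Slack-Intro never introduces a new variable. Second, the only variables created during a run besides $x_S$ are the variables $k$ produced by Resolve-Cooper, and each such $k$ is guarded by the constraints $-k \leq 0$ and $k - m \leq 0$ of its resolvent; these belong to a strong resolvent, so by Definition~\ref{definition:strictly-two-layered}.(4) Forget cannot remove them, and the Solve-Div rules only rewrite divisibility constraints, so $k$ stays guarded and is therefore never stuck. Hence in every reachable state the variables that can be stuck all lie in the finite set $\vars(C_0)$. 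Third, an application of Slack-Intro to a stuck variable $x$ adds the inequalities $x - x_S \leq 0$ and $-x - x_S \leq 0$, and these are never removed afterwards: they are not $\mathbb{Z}$-entailed by the remaining constraints of any reachable state, so Forget is inapplicable to them, and the Solve-Div rules leave inequalities untouched. Consequently, once Slack-Intro has been applied to $x$ the constraint $x - x_S \leq 0$ permanently witnesses that a bound for $x$ can still be propagated (after the default propagation of $x_S \geq 0$ from $-x_S \leq 0$ and, if needed, a decision on $x_S$), so $x$ is never stuck again and Slack-Intro is not re-applied to it. With at most $|\vars(C_0)|$ candidate stuck variables, Slack-Intro is applied at most $|\vars(C_0)|$ times.

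\emph{Conclusion and main obstacle.} Since Slack-Intro and Forget are each applied only finitely often, their last application in the sequence occurs at some step $S_{s-1} \CSArrow S_s$ with $s$ finite (take $s = 0$ if neither rule is ever applied); by construction no application of either rule occurs after $S_s$. The delicate part is the Slack-Intro bound, specifically the persistence claim: one must verify that $x - x_S \leq 0$ and $-x - x_S \leq 0$ are never $\mathbb{Z}$-consequences of the remaining constraints of a state reachable under a strictly-two-layered strategy (so that Forget cannot delete them), and that no reachable partial model can make a variable carrying these two constraints stuck again. Alternatively, one can lift the termination argument for Slack-Intro already established for CUTSAT in~\cite{CutChase} and check that it is unaffected by the constraints introduced by Resolve-Cooper and the Solve-Div rules.
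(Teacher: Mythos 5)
Your proof is correct and follows essentially the same route as the paper's: Forget is finite because a strictly-two-layered strategy is reasonable, and Slack-Intro is finite because it applies at most once to each variable of the original problem, all newly introduced variables (the slack variable and the resolvent variables $k$) being born with constraints of the form $\pm x - b \leq 0$ that prevent them from ever being stuck. The persistence issue you flag (that Forget might in principle delete the slack constraints) is also left implicit in the paper's own argument, so your treatment is, if anything, slightly more careful.
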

\begin{proof}
Such a state $S_s$ exists for two reasons: 
Firstly, the strictly-propagating-strategy employed by CUTSAT++ is also reasonable. 
The reasonable strategy explicitly forbids infinite applications of the rule Forget. 
Secondly, the Slack-Intro rule is applicable only to stuck variables and only once 
to each stuck variable. 
Only the initial set of variables can be stuck because all variables $x$ introduced 
during the considered derivation are introduced with at least one constraint $x - b \leq 0$ that allows at least 
one propagation for the variable.
Therefore, the rules Slack-Intro and Forget are at most finitely often applicable.
\end{proof}

Next, the sequence reaches a state $S_w$ ($w \geq s$) after 
a finite derivation of rule applications $S_s \CSArrow \ldots \CSArrow S_w$ such that there is 
no further application of the rules Resolve-Cooper, Solve-Div-Left or Solve-Div-Right 
after state $S_w$:
The rules Resolve-Cooper, Solve-Div-Left, Solve-Div-Right, and Slack-Intro are applicable only to unguarded constraints. 
Through the strictly-two-layered strategy they are also the only rules producing unguarded constraints. 
Therefore, they form a closed loop with respect to 
unguarded constraints, which we use in our termination proof.
We have shown in the previous paragraph that $S_s \CSArrow \ldots \CSArrow S_w$ contains 
no application of the rule Slack-Intro.
By Lemma \ref{lemma:resolvent conflict}, an application of Resolve-Cooper to the 
conflicting core $(x,C')$ prevents any further applications of Resolve-Cooper to the same core.
By Def. \ref{definition:conflicting cores}, the constraints learned through an application 
of Resolve-Cooper contain only variables $y$ such that $y \prec x$.
Therefore, an application of Resolve-Cooper blocks a conflicting core $(x,C')$ and 
introduces potential conflicting cores only for smaller variables than $x$.
This strict decrease in the conflicting variables guarantees that we encounter 
only finitely many conflicting cores in unguarded variables.
Therefore, Resolve-Cooper is at most finitely often applicable.
An analogous argument applies to the rules Solve-Div-Left and Solve-Div-Right.
Thus the rules Resolve-Cooper, Solve-Div-Left and Solve-Div-Right are at most finitely often applicable. 

\begin{lemma}
  \label{lemma:saturation}
  Let $\SearchSt{\emptyseq}{C_0} = S_0 \CSArrow S_1 \CSArrow \ldots$ be a sequence of rule applications applied to a problem $C_0$,
following the strictly-two-layered strategy.
  Then the sequence reaches a state $S_w$ after 
finitely many rule applications $S_0 \CSArrow \ldots \CSArrow S_w$ such that there is 
no further application of the rules Resolve-Cooper, Solve-Div-Left, and Solve-Div-Right 
after state $S_w$. 
\end{lemma}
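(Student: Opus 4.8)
The plan is to turn the informal argument preceding the statement into a well-founded induction over the unguarded variables, which after a preparatory phase form a finite set totally ordered by $\prec$. First I would apply Lemma~\ref{lemma:prepphase} to obtain a state $S_s$ after which neither Slack-Intro nor Forget is ever applied again. After $S_s$ no new variable appears except the fresh variables $k$ produced by Resolve-Cooper, and each such $k$ is guarded by its own $R_k$; moreover no guarded variable can become unguarded after $S_s$, since that would require deleting a bounding inequality, and the only rule that deletes inequalities after $S_s$ is Forget, which is dead. Hence the set $U=\{x_1\succ\dots\succ x_n\}$ of variables that are unguarded in some state after $S_s$ is finite, and every application of Solve-Div-Left, Solve-Div-Right, or Resolve-Cooper after $S_s$ has a well-defined \emph{level}: the unguarded variable $x$ required by the rule, equal to $\topop(I_1)=\topop(I_2)$ for the Solve-Div rules and to the conflicting variable for Resolve-Cooper. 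This level always lies in $U$.

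The structural observation I would then record is twofold. A Solve-Div rule at level $x$ removes two divisibility constraints with top variable $x$ and adds $\{I'_1,I'_2\}$ with $\topop(I'_1)=x$ and $\topop(I'_2)\prec x$; so it strictly decreases the number of divisibility constraints in $C$ whose top variable is exactly $x$, while otherwise introducing only a constraint of strictly smaller level. Resolve-Cooper at level $x$ adds only the constraints of $R_k\cup R_c$, all of top variable $\prec x$ by Definition~\ref{definition:conflicting cores} (the fresh $k$ is $\prec$-minimal, and every other variable of $R_k\cup R_c$ already occurs in $C'$, where $x$ is the top variable). By the strictly-two-layered strategy (condition~(3)), these three rules together with Slack-Intro are the only rules that add or delete unguarded constraints; since Slack-Intro is dead after $S_s$, the set of constraints with top variable exactly $x$ is, from $S_s$ on, modified only by rule applications at level $x$ or at strictly larger levels.

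Next I would prove by induction on $i$ that there is a state $S_{w_i}$ ($S_{w_0}:=S_s$) after which no rule among Solve-Div-Left, Solve-Div-Right, Resolve-Cooper is applied at any level in $\{x_1,\dots,x_i\}$. For the step $i\to i+1$, start from $S_{w_i}$: since no activity occurs at the larger levels $x_1,\dots,x_i$ afterwards, the structural observation shows that the number of divisibility constraints with top variable $x_{i+1}$ is non-increasing and strictly decreases with each Solve-Div application at level $x_{i+1}$, so only finitely many of those occur and hence only finitely many distinct divisibility constraints with top variable $x_{i+1}$ ever occur after $S_{w_i}$. The set of inequalities with top variable $x_{i+1}$ is moreover fixed after $S_{w_i}$, since Learn adds only guarded constraints, Slack-Intro is dead, and every Solve-Div/Resolve-Cooper at level $\preceq x_{i+1}$ adds only inequalities of top variable $\prec x_{i+1}$. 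By Definition~\ref{definition:conflicting cores} every constraint of a conflicting core $(x_{i+1},C')$ has top variable $x_{i+1}$, so $C'$ ranges over a finite pool, and thus only finitely many conflicting cores with conflicting variable $x_{i+1}$ arise after $S_{w_i}$. Finally, Lemma~\ref{lemma:resolvent conflict} and Lemma~\ref{lemma:replacing resolvents} together guarantee that once Resolve-Cooper has been applied to a core $(x_{i+1},C')$, a strong resolvent for it stays in $C$ forever and blocks any further application of Resolve-Cooper to that core; hence Resolve-Cooper fires at most once per core, so only finitely often at level $x_{i+1}$. Altogether only finitely many applications at level $x_{i+1}$ remain after $S_{w_i}$, yielding $S_{w_{i+1}}$; taking $i=n$ gives the required $S_w$.

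The hard part is the bookkeeping in the last two paragraphs. It is essential that the induction runs from the $\prec$-largest unguarded variable downward, because Solve-Div and Resolve-Cooper at a higher level may create fresh divisibility constraints and fresh conflicting cores at a lower level, so only once all higher levels are quiescent is the pool of relevant constraints at a fixed level finite. At that point the two kinds of activity at the fixed level have to be bounded by different means: Solve-Div by the strictly decreasing count of divisibility constraints at that level, and Resolve-Cooper by the ``at most once per core'' property of Lemmas~\ref{lemma:resolvent conflict} and~\ref{lemma:replacing resolvents} applied to the now-finite set of cores.
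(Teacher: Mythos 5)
Your proposal is correct and follows essentially the same route as the paper: it relies on the same preparatory Lemma~\ref{lemma:prepphase}, the same blocking argument via Lemmas~\ref{lemma:resolvent conflict} and~\ref{lemma:replacing resolvents}, and the same key structural fact that the three rules only introduce constraints whose top variable is strictly smaller. The paper merely packages your downward induction over the unguarded variables as an explicit lexicographic weight vector $\cweight(S)$ built from the per-variable counts of (unblocked) potential conflicting cores, which is interchangeable with your nested induction.
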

\begin{proof}
  By Lemma \ref{lemma:prepphase}, we assume w.l.o.g. that the sequence continues from a state $S_s$ such that $S_s$ is reached by the sequence after at most finitely many rule applications $S_0 \CSArrow \ldots \CSArrow S_s$, and there is no further application of the rules Slack-Intro and Forget after state $S_s$. 
  Let $x_1 \prec \ldots \prec x_n$ be the order of variables for all unguarded variables $x_i$.
  We consider a weight vector that strictly decreases after every call to Resolve-Cooper, Solve-Div-Left and Solve-Div-Right.
  For this weight vector, we define $\cores(x_i,C)$ as the set of potential conflicting cores in the problem $C$ with conflicting variable $x_i$.
  Its subset $\unblocked(x_i,C) \subseteq \cores(x_i,C)$ is defined so it contains all its potential conflicting cores without a strong resolvent $R \subseteq C$.
  It is easy to see that $|\cores(x_i,C)| \leq 2^{|C|}$ and therefore both functions define finite sets.
  Now we define the weight vector $\cweight(S)$ for every state $S = \SearchSt{M}{C}(\vdash I)$:\newline
  \centerline{$
    \cweight(S)=(|\cores(x_n,C)|,|\unblocked(x_n,C)|,\ldots,|\cores(x_1,C)|,|\unblocked(x_1,C)|)
  $}
  By Definition~\ref{definition:strictly-two-layered}, Conflict(-Div) is only applicable to guarded constraints
  and guarded variables are only propagated through guarded constraints. 
  Therefore, the conflict $I$ in a state $\ConflictSt{M}{C}{I}$ stays always guarded, even after an application of the Resolve rule,
  and Learn is only applicable to guarded constraints.
  Therefore, Resolve-Cooper, Solve-Div-Left, and Solve-Div-Right are the only rules learning potentially unguarded constraints and thereby the only rules that can increase $|\cores(x_i,C)|$ and $|\unblocked(x_i,C)|$ between two subsequent states $S_i \CSArrow S_{i+1}$.
  After all other transitions $S_i \CSArrow S_{i+1}$, it holds that $\cweight(S_i) \geq_{\lexhelp} \cweight(S_{i+1})$.
  Whenever CUTSAT++ applies Solve-Div-Left, Solve-Div-Right or Resolve-Cooper, the weight vector strictly decreases, i.e., $\cweight(S') >_{\lexhelp} \cweight(S)$:
  \begin{enumerate}
  \item By Lemma~\ref{lemma:resolvent conflict}, an application of Resolve-Cooper to conflicting core $(x_i,C^*)$ implies that there is no strong resolvent $R' \subseteq C'$ for $(x_i,C^*)$.
    By Lemma~\ref{lemma:strong resolvent}, the new problem $C = C' \cup R$ contains a strong resolvent $R$ for $(x_i,C^*)$.
    Therefore, $|\unblocked(x_i,C)| < |\unblocked(x_i,C')|$.
    By Definition~\ref{definition:resolvent}, it holds for all $y \in \vars(R)$ that $y \prec x$.
    Thence, Resolve-Cooper has not introduced new potential conflicting cores $(x_j,C^{**})$ with $j \geq i$ and $|\cores(x_j,C)| \leq |\cores(x_j,C')|$ for all $j \geq i$.
    By Lemma~\ref{lemma:replacing resolvents}, $|\unblocked(x_j,C)| \leq |\unblocked(x_j,C')|$ for all $j > i$.
    Therefore, the weight decreases after an application of Resolve-Cooper, i.e., $\cweight(S') >_{\lexhelp} \cweight(S).$
  \item Let Solve-Div-Left (Solve-Div-Right) be applied to the pair of divisibility constraints $(I_1,I_2)$ such that $\topop(I_1) = x_i$ and $\divsolve(x_i,I_1,I_2)=(I'_1,I'_2)$.
    The new constraint set is $C = C' \setminus \{I_1,I_2\} \cup \{I'_1,I'_2\}$.
    The number of potential conflicting cores containing the same divisibility constraint $I = d \mid a x + p \in C''$ in problem $\hat{C} \uplus \{ I \}$ is the same for all divisibility constraints with $top(I) = x$.
    This means, removing $I_1$ and replacing it with $I'_1$ doesn't increase the number of cores, i.e., $|\cores(x_i,C' \setminus \{I_1\} \cup \{I'_1\} )| = |\cores(x_i,C')|$.
    However, since $I_2 \in \cores(x_i,C')$ and we replace $I_2$ with $I'_2$ where $\topop(I'_2)\prec x_i$, we will decrease the number of conflicting cores in $x_i$.
    It is easy to see, that we do not introduce any new conflicting cores for $x_j$ with $j > i$. Thus $|\cores(x_j,C)| = |\cores(x_j,C')|$.
    Finally, Lemma~\ref{lemma:replacing resolvents} implies that $|\unblocked(x_j,C)| \leq |\unblocked(x_j,C')|$ for $j > i$.
    Therefore, $\cweight(C') >_{\lexhelp} \cweight(C)$.
  \end{enumerate}
  We deduce that the $\cweight$ vector monotonically decreases if we continue from the before mentioned state $S_s$.
  Since $>$ is a well-founded order, the lexicographic order $>_{\lexhelp}$ is also well-founded.
  The minimum of the weight order is $( \ldots, 0, \ldots )$.
  As $>_{\lexhelp}$ is well-founded, there exists no way to decrease the weight $\cweight(C_s)$ without reaching the minimum $( \ldots, 0, \ldots )$ after finitely many applications of the rules Solve-Div-Left, Solve-Div-Right, or Resolve-Cooper.
  Finally, the $\cweight$ vector cannot decrease below $( \ldots, 0, \ldots )$ so CUTSAT++ is not able to apply Solve-Div-Left, Solve-Div-Right, or Resolve-Cooper after we reach a state $S$ with $\cweight(S) = ( \ldots, 0, \ldots )$.
  We conclude that the rules Solve-Div-Left, Solve-Div-Right, and Resolve-Cooper are at most finitely often applicable.
\end{proof}

Next, the sequence reaches a state $S_b$ ($b \geq w$) after a finite derivation of rule applications $S_w \CSArrow \ldots \CSArrow S_b$ such that for every guarded variable $x$ the bounds remain invariant, i.e., $\lowerop(x,M_b) = \lowerop(x,M_j)$ and $\upper(x,M_b) = \upper(x,M_j)$ for every state $S_j = \SearchSt{M_j}{C_j}(\vdash I_j)$ after $S_b = \SearchSt{M_b}{C_b}(\vdash I_b)$ ($j \geq b$):
The strictly-two-layered strategy guarantees that only bounds of guarded variables influence the propagation of further bounds for guarded variables.
Any rule application involving unguarded variables does not influence the bounds for guarded variables.
A proof for the termination of the solely guarded case was already provided in~\cite{CutChase}.
At this point we know that the sequence after $S_b$ contains no further propagations, decisions, or conflict analysis for the guarded variables. 

\begin{lemma}
  \label{lemma: B-rules}
  Let $\SearchSt{\emptyseq}{C_0} = S_0 \CSArrow S_1 \CSArrow \ldots$ be a sequence of rule applications applied to a problem $C_0$,
  following the strictly-two-layered strategy.
  Then the sequence reaches a state $S_b$ after 
finitely many rule applications $S_0 \CSArrow \ldots \CSArrow S_b$ such that such that for every guarded variable $x$ the bounds remain invariant.
\end{lemma}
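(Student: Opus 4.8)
The plan is to build on the three preceding reduction lemmas (Lemma~\ref{lemma:prepphase}, Lemma~\ref{lemma:saturation}) so that by state $S_w$ the problem no longer changes as far as guarded constraints are concerned, and then to cite the existing CUTSAT termination result for the purely guarded fragment. First I would invoke Lemma~\ref{lemma:saturation} to continue the derivation from a state $S_w$ after which no application of Slack-Intro, Forget, Resolve-Cooper, Solve-Div-Left, or Solve-Div-Right occurs. The key observation, already announced in the prose preceding the lemma, is that these are exactly the rules that can modify the set of guarded constraints or add new constraints affecting guarded variables: by Definition~\ref{definition:strictly-two-layered}.(3) and~(5), Conflict(-Div) and the guarded propagations only ever touch guarded constraints, and Learn only adds guarded constraints (which are already consequences, hence harmless), while Decide, Propagate, Propagate-Div, Resolve, Skip-Decision, Backjump, and Unsat do not create constraints at all. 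Therefore after $S_w$ the set $C_j$ restricted to its guarded part is fixed, call it $C_g$, and moreover — crucially — by the strictly-two-layered strategy (conditions~(1) and~(5)) the bounds, propagations, decisions, and conflict analysis for guarded variables in every state $S_j$, $j \geq w$, depend only on the guarded sub-state. In other words, the projection of the CUTSAT++ run onto the guarded variables and the constraint set $C_g$ is itself a legal CUTSAT run on the problem $C_g$.

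Next I would make that projection precise: define, for each $S_j = \SearchSt{M_j}{C_j}$ with $j \geq w$, the guarded restriction $M_j^{\mathrm{g}}$ obtained by deleting from $M_j$ all bounds on unguarded variables, and observe that whenever CUTSAT++ applies one of the guarded-affecting rules (Decide or Propagate on a guarded variable, Conflict(-Div), Resolve, Skip-Decision, Backjump, Unsat, Learn on a guarded constraint) the pair $(M_j^{\mathrm{g}}, C_g)$ evolves exactly as under a corresponding CUTSAT step, whereas every other transition leaves $(M_j^{\mathrm{g}}, C_g)$ untouched. Since $C_g$ is a finite set of guarded constraints, the termination proof for the purely guarded case given in~\cite{CutChase} applies to this projected run; hence only finitely many guarded-affecting steps can occur after $S_w$. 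Let $S_b$ be the state reached after the last such step. For every $j \geq b$ the guarded part of $M_j$ is then frozen, so $\lowerop(x,M_b) = \lowerop(x,M_j)$ and $\upper(x,M_b) = \upper(x,M_j)$ for every guarded variable $x$, which is the claim.

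Two points deserve care. The first, and I expect the main obstacle, is justifying that the guarded projection is faithful to a genuine CUTSAT run — in particular that the Backjump and Resolve rules, which mention the whole sequence $M_j$, behave on the projected sequence exactly as in CUTSAT; this is where Definition~\ref{definition:strictly-two-layered}.(1) (reasonableness, giving top priority to the $\pm x - b \leq 0$ propagations) and~(5) (guarded variables propagated only from guarded constraints) do the real work, together with the fact that the justifications attached to guarded propagated bounds are themselves guarded, so that $\resolve$ never reaches across the guarded/unguarded boundary. The second point is that one must check that the conflict $I$ in any conflict state $\ConflictSt{M_j}{C_j}{I}$ with $j \geq w$ is guarded, so that the conflict-analysis phase really does stay inside the guarded fragment; this was already argued in the proof of Lemma~\ref{lemma:saturation} (Conflict(-Div) only fires on guarded constraints, guarded variables are propagated only by guarded constraints, and $\resolve$ preserves guardedness), and I would simply reuse that argument. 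Given these, the finiteness of the guarded-affecting steps and hence the existence of $S_b$ follow immediately from the earlier lemmas and the cited CUTSAT result.
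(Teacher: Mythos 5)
Your proposal is correct and follows essentially the same route as the paper: reduce to a state $S_w$ via Lemmas~\ref{lemma:prepphase} and~\ref{lemma:saturation} after which only guarded constraints and guarded bounds are touched by the guarded-variable machinery, then invoke the termination argument of~\cite{CutChase} for the purely guarded case. The only difference is packaging: where you phrase this as a projection of the run onto the guarded sub-state simulating a legal CUTSAT run, the paper instead transplants the weight vector $\weight_B$ (sums of $\upper(x,M)-\lowerop(x,M)$ over guarded variables, stratified by the number of guarded decisions) directly onto the full states, which sidesteps the one delicate point you correctly flag, namely that Backjump or Skip-Decision over an unguarded decided bound interleaved with guarded propagated bounds is not literally a step of the projected run.
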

\begin{proof}
  This proof is based on the termination proof for CUTSAT on finite problems, i.e., problems without  unguarded variables \cite{CutChase}.
  It uses a weight function that strictly decreases whenever CUTSAT++ changes a bound for a guarded variable and otherwise stays the same.
  By Lemmas \ref{lemma:prepphase} and \ref{lemma:saturation}, we assume w.l.o.g. that the sequence continues from a state $S_w$ such that $S_w$ is reached by the sequence after at most finitely many rule applications $S_0 \CSArrow \ldots \CSArrow S_w$, and there is no further application of the rules Slack-Intro, Forget, Resolve-Cooper, Solve-Div-Left, and Solve-Div-Right after state $S_w$. 
  The $\level_B$ of a state $S = \SearchSt{M}{C}$ is the number of decisions for guarded variables in $M$.
  The maximal prefix of $M$ containing only $j$ decisions for guarded variables is denoted by $\Bsubseq_j(M) = M_j$ the $j$-th guarded subsequence.
  Since CUTSAT++ is relevant, it prefers to propagate simple constraints.
  This allows us to assume w.l.o.g. that $M_0$ contains both a lower and upper bound for all guarded variables $x$.
  The \emph{guarded weight} of the $j$-th $\level_B$ is defined by the function $w_B(M_j)$:\newline
  \centerline{$
  w_B(M_j) = \sum_{x \ishelp \guardedhelp} \left(\upper(x,M) - \lowerop(x,M) \right).
  $}
  The \emph{guarded weight} of a state is the vector:\newline
  \centerline{$
  \weight_B(\SearchSt{M}{C}) = \langle w_B(\Bsubseq_0(M)), \cdots, w_B(\Bsubseq_n(M))\rangle,
  $}
  where $n$ is the number of  guarded variables.
  We order the two $\weight_B$-vectors of two subsequent search-states with the well-founded lexicographic order $>_{\lexhelp}$ based on the well-founded order $>$.
  It is easy to see that the minimum of $\weight_B$ is $( \ldots, 0, \ldots )$ and that any change to a bound of a guarded variable changes the guarded weight $\weight_B$. 
  Furthermore, by the definition of the strictly-two-layered strategy, we see that we only propagate  guarded variables with  guarded constraints.
  Thus the strategy also implies that the conflict rules, Conflict, Conflict-Div, Backjump, Resolve, Skip-Decision, Unsat, and Learn, only handle guarded constraints.
  Given the proof for Theorem 2 in~\cite{CutChase}, we see that every application of Propagate, Propagate-Div and Decide applied to a guarded variable
  decreases $\weight_B$ strictly.
  We see in the same proof~\cite{CutChase} that $\weight_B$ strictly decreases between one application of Conflict(-Div) and Backjump as long as 
  the conflict rules handle only  guarded constraints - as is the case for CUTSAT++. 
  Since the bound sequence $M$ is finite, the conflict rules are at most $|M|$ times applicable between one application of Conflict(-Div), and Backjump or Unsat.
  The remaining rules, Propagate, Propagate-Div and Decide applied to unguarded variables, have no influence on $\weight_B$ or the bounds of guarded variables.
  Since the guarded weight $\weight_B$ cannot decrease below $( \ldots, 0, \ldots )$, we conclude that CUTSAT++ is not able to change the bounds for guarded variables infinitely often. 
\end{proof}

Next, the sequence reaches a state $S_u$ ($u \geq b$) after a finite derivation of rule applications $S_b \CSArrow \ldots \CSArrow S_u$ such that also for every unguarded variable $x$ the bounds remain invariant, i.e. $\lowerop(x,M_b) = \lowerop(x,M_j)$ and $\upper(x,M_b) = \upper(x,M_j)$ for every state $S_j = \SearchSt{M_j}{C_j}(\vdash I_j)$ after $S_u = \SearchSt{M_u}{C_b}(\vdash I_u)$ ($j \geq u$).
After $S_b$, CUTSAT++ propagates and decides only unguarded variables or ends with an application of Sat or Unsat(-Div).
CUTSAT++ employs the strictly-two-layered strategy which is also an eager top-level propagating strategy.
The latter induces a strict order of propagation over the unguarded variables through the top-variable restriction for propagating constraints.
Therefore, any bound for unguarded variable $x$ is influenced only by bounds for variables $y \prec x$.
This strict variable order guarantees that unguarded variables are propagated and decided only finitely often.

\begin{lemma}
  \label{lemma: U-Prop}
  Let $\SearchSt{\emptyseq}{C_0} = S_0 \CSArrow S_1 \CSArrow \ldots$ be a sequence of rule applications applied to a problem $C_0$,
  following the strictly-two-layered strategy.
  Then the sequence reaches a state $S_u$ after 
  finitely many rule applications $S_0 \CSArrow \ldots \CSArrow S_u$
  such that for every unguarded variable $x$ the bounds remain invariant.
\end{lemma}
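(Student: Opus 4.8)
The plan is to first trim the derivation down to a tail on which only unguarded variables are affected, and then to run an induction along the fixed order $\prec$ of the unguarded variables.

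By Lemmas~\ref{lemma:prepphase},~\ref{lemma:saturation} and~\ref{lemma: B-rules} we may assume without loss of generality that the derivation continues from a state $S_b = \SearchSt{M_b}{C_b}$ after which Slack-Intro, Forget, Resolve-Cooper, Solve-Div-Left and Solve-Div-Right are no longer applied and the bounds of every guarded variable are invariant; as noted just before this lemma, the guarded fragment of the derivation has then stabilised completely (by Definition~\ref{definition:strictly-two-layered} conflict analysis never involves an unguarded constraint, and with guarded bounds frozen no guarded conflict is left to resolve), so past $S_b$ the only steps are applications of Propagate, Propagate-Div or Decide to unguarded variables, plus possibly one terminal application of Sat or Unsat(-Div). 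If this tail is finite the lemma holds, so assume it is infinite.

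Enumerate the unguarded variables as $x_1 \prec \cdots \prec x_n$. I would prove, by induction on $i$, that the derivation reaches after finitely many further steps a state $S^{(i)}$ after which the bounds of $x_1,\dots,x_{i-1}$ are invariant (taking $S^{(1)} = S_b$); then $S_u := S^{(n+1)}$ is the state asserted by the lemma. For the inductive step, work past $S^{(i)}$, where by the induction hypothesis together with Lemma~\ref{lemma: B-rules} every variable $\prec x_i$ has a fixed bound. By the eager top-level propagating strategy (Definition~\ref{def:eager top}) every Propagate, Propagate-Div or Decide step that changes a bound of $x_i$ uses a constraint $I$ with $\topop(I)=x_i$, hence every variable of $I$ other than $x_i$ is $\prec x_i$ and so has a fixed bound past $S^{(i)}$. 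Since there is no backtracking past $S_b$, the lower bound of $x_i$ is non-decreasing and its upper bound non-increasing, every propagation step for $x_i$ is strict, and a decision fixes $x_i$ once and for all (no rule past $S_b$ deletes decided bounds). An inequality $I$ with $\topop(I)=x_i$ has $\bound(I,x_i,M)$ depending only on its part without $x_i$, which is fixed past $S^{(i)}$; hence $I$ supplies at most one lower and one upper bound value and is applied only finitely often, and Decide is applied to $x_i$ at most once. Together with the finitely many steps before $S^{(i)}$, this leaves the Propagate-Div steps as the only possible source of non-termination for $x_i$.

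The main obstacle is exactly these Propagate-Div steps, because $\bound(d\mid ax_i+p,x_i,M)$ depends on the current bound of $x_i$ itself rather than on invariant data only. Here I would use the case split built into Definition~\ref{def:eager top}.(1): either $x_i$ already has a finite lower and a finite upper bound, in which case only finitely many integer values remain, or $d\mid ax_i+p$ is the unique divisibility constraint with $x_i$ on top. In the latter case Lemma~\ref{div-prop-lemma} is the key --- the value propagated from $d\mid ax_i+p$ is itself a solution of that constraint, so immediately afterwards the premise $d\nmid ab+k$ fails and that side of $d\mid ax_i+p$ is disabled; it can be re-enabled only by a change of $\lowerop(p,M)$ (finitely often, by the induction hypothesis) or by a change of the relevant bound of $x_i$ caused by an inequality propagation or a decision (finitely often, by the previous paragraph). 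Thus every Propagate-Div step for $x_i$ is chargeable to an already bounded event, so the bounds of $x_i$ change only finitely often past $S_b$; this yields $S^{(i+1)}$ and closes the induction.
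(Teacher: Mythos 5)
Your overall strategy matches the paper's: reduce to a tail after $S_b$ where only Propagate(-Div) and Decide on unguarded variables occur, then exploit the eager top-level propagating strategy to argue along $\prec$ that each unguarded variable's bounds stabilize once all smaller variables have (you phrase this as an induction on $\prec$, the paper as a minimal-counterexample argument on the smallest variable propagated infinitely often; these are the same idea). You also correctly isolate Propagate-Div as the only delicate case.

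However, your resolution of that case contains a genuine error. You claim that the value propagated from $d \mid a x_i + p$ ``is itself a solution of that constraint, so immediately afterwards the premise $d \nmid ab + k$ fails,'' and you build your charging argument on this. That claim is false in general. Take $d = 6$, $a = 4$, $p$ fixed with $k = \lowerop(p,M) = 2$ (so $\gcd(a,d) \mid k$ holds and solutions exist, e.g.\ $x_i \in \{1,4,7,\dots\}$), and current lower bound $b = 2$. Then $\ceilfun{\tfrac{ab+k}{d}} = \ceilfun{\tfrac{10}{6}} = 2$ and $\bound(d \mid a x_i + p, x_i, M) = \ceilfun{\tfrac{12-2}{4}} = 3$, but $6 \nmid 4\cdot 3 + 2 = 14$. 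So the premise of Propagate-Div holds again immediately at the new bound, with no intervening ``bounded event'' to charge the step to; your accounting would bound such runs by one step per event, which is wrong. The correct argument, which is the one the paper gives, is different: by Lemma~\ref{div-prop-lemma} the propagated bound never skips over a solution, by Definition~\ref{def:eager top}.1(b) together with Lemma~\ref{lemma: cooper-div equivalence} a solution $v$ of $d \mid a x_i + \lowerop(p,M)$ exists once $\lowerop(p,M)$ is frozen, and each Propagate-Div application strictly improves the bound; hence the lower (resp.\ upper) bound increases (resp.\ decreases) monotonically toward $v$ and Propagate-Div fires at most $v - l_x$ (resp.\ $u_x - v$) times between changes of the data it depends on. With that repair your induction closes; as written, the key step does not.
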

\begin{proof}
  By Lemmas \ref{lemma:prepphase}, \ref{lemma:saturation}, and \ref{lemma: B-rules}, we assume w.l.o.g. that the sequence continues from a state $S_b = \SearchSt{M_b}{C_b}(\vdash I_b)$ such that $S_b$ is reached by the sequence after at most finitely many rule applications $S_0 \CSArrow \ldots \CSArrow S_b$ and only the rules Sat, Unsat-Div, Propagate, Propagate-Div, and Decide are applied after $S_b$, whereas the last three only for unguarded variables.
  Assume for a contradiction that there exists an infinite CUTSAT++ run starting in $S_b$.
  Since there is only a finite number of unguarded variables and no rule to undo a decision, the Decide rule is applied at most finitely often.
  Furthermore, any application of Sat or Unsat-Div ends a run making it finite.
  This allows us to assume w.l.o.g. that there is no application to the rules Sat, Unsat-Div, and Decide in the infinite run starting in the state $S_b$.
  
  Since there are at most finitely many variables in state $S_b$, and no rule to introduce further variables after $S_b$, there exists a smallest unguarded variable $x$ that is propagated infinitely often.
  We assume w.l.o.g. that the run starting in $S_b$ propagates only variables $y$ bigger than or equal to $x$.
  Therefore, the bounds of all smaller variables $y$, remain invariant in all subsequent states $S_j = \SearchSt{M_j}{C_j}(\vdash I_j)$ of $S_b$,
  i.e., $\lowerop(y,M_j) = \lowerop(y,M_b)$ and $\upper(y,M_j) = \upper(y,M_b)$.
  Since there exists no applicable rule that changes the constraint set,  we notice that the constraint set $C_b$ also remains invariant for all states after $S_b$.
  Thus we find all constraints $C^*$ used to propagate $x$ in the set $C_b$.
  Since CUTSAT++ is eager top-level propagating, any constraint $I \in C^*$ has $x$ as their top variable.
  This leads us to the deduction, that $\bound(I, x,M_j) = \bound(I, x,M_b)$ for all subsequent states $S_j = \SearchSt{M_j}{C_j}$ and inequalities $I \in C^*$.
  Since the bounds defined by the inequalities in $C^*$ remain invariant after state $S_b$, CUTSAT++ propagates $x$ at most finitely often with inequalities.
  
  Therefore, there only exists an infinite CUTSAT++ run if $x$ is propagated infinitely often with Propagate-Div.
  This allows us to assume w.l.o.g. that the run starting in $S_b$ propagates $x$ only with Propagate-Div.
  Next, we deduce that variable $x$ stays unbounded in the remaining states of the derivation sequence.
  Otherwise there exists a finite set $x \in \{l_x, \ldots, u_x\}$ bounding $x$, therefore allowing only finitely many propagations.
  In the case that $x$ stays unbounded, the definition of the eager top-level propagating strategy states that Propagate-Div is only applicable to $x$ if $I_d = d \mid ax + p \in C^*$
  is the only divisibility constraint in $C_b$ with $x$ as their top variable.
  Furthermore, we know, because of Definition~\ref{def:eager top} and Lemma~\ref{lemma: cooper-div equivalence}, that there must exist $v \in \mathbb{Z}$ such that $d \mid a v + \lowerop(p,M_k)$ is satisfied.
  Now if we consider Lemma~\ref{div-prop-lemma} and the conditions $c = \bound(D,x,M), c \leq \upper(x,M)$ and $c = \bound(D,x,M), c \geq \lowerop(x,M)$ of the Propagate-Div rule, then we see that Propagate-Div propagates $x$ at most finitely often.
  More specifically, if the lower bound of $x$ is $\lowerop(x,M_b) = l_x \neq -\infty$ then Propagate-Div propagates for $x$ at most $v - l_x$ lower bounds.
  If the upper bound of $x$ is $\upper(x,M_b) = u_x \neq \infty$ then Propagate-Div propagates for $x$ at most $u_x - v$ upper bounds.
  This contradicts the assumption that $x$ is the smallest variable propagated infinitely often, which in turn contradicts our initial assumption!
\end{proof}

After state $S_u$, only the rules Sat, Unsat, and Unsat-Div are applicable, which lead to a final state. 
Hence, the sequence $S_0 \CSArrow S_1 \CSArrow \ldots$ is finite.
We conclude that CUTSAT++ always terminates:

\begin{theorem}
  \label{theorem: unbounded termination}
  If CUTSAT++ starts from an initial state $\SearchSt{\emptyseq}{C_0}$, then there is no infinite derivation sequence.
\end{theorem}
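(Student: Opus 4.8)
The plan is to obtain the theorem by chaining the four phase lemmas already established, each of which retires a disjoint group of rules after finitely many steps. Fix any derivation $\SearchSt{\emptyseq}{C_0} = S_0 \CSArrow S_1 \CSArrow \ldots$ following the strictly-two-layered strategy. First I would apply Lemma~\ref{lemma:prepphase} to reach, after finitely many steps, a state $S_s$ after which Slack-Intro and Forget are never applied. Continuing from $S_s$, Lemma~\ref{lemma:saturation} gives a state $S_w$, reached after finitely many further steps, after which Resolve-Cooper, Solve-Div-Left, and Solve-Div-Right are never applied. Continuing from $S_w$, Lemma~\ref{lemma: B-rules} gives a state $S_b$ after which the bounds of every guarded variable stay invariant; since under the strictly-two-layered strategy only guarded constraints propagate guarded variables and the conflict-analysis rules Conflict(-Div), Backjump, Resolve, Skip-Decision, Learn act only on guarded constraints, this means none of Propagate, Propagate-Div, or Decide on a guarded variable, and no conflict-analysis rule, is applied after $S_b$; in particular (as noted in the proof of Lemma~\ref{lemma: U-Prop}) the constraint set stays equal to $C_b$ from $S_b$ on. Finally, Lemma~\ref{lemma: U-Prop} gives a state $S_u$ after which the bounds of every remaining (unguarded) variable also stay invariant, so Propagate, Propagate-Div, and Decide are not applied after $S_u$ either.

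It then remains to argue that the derivation halts within one more step past $S_u$. By the preceding phases, the state at $S_u$ is a search state $\SearchSt{M_u}{C_b}$ in which both $M_u$ and $C_b$ are frozen, and none of Decide, Propagate, Propagate-Div, Slack-Intro, Forget, Solve-Div-Left, Solve-Div-Right, Resolve-Cooper, Conflict, or Conflict-Div is applicable, since each of these would either change $M$ or $C$ or re-enable a rule already excluded above (and any conflict-analysis step that could still be triggered is bounded by Lemma~\ref{lemma: B-rules}). Hence the only rules that can still fire are Sat, Unsat, and Unsat-Div, each of which transitions to a final state $\SatSt{M_u}$ or $\UnsatSt$. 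Therefore the sequence $S_0 \CSArrow S_1 \CSArrow \ldots$ has length at most $u+1$ and is finite, which is the claim.

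The four phase lemmas carry essentially all of the combinatorial work, so the only genuinely new point in this proof is the closing case analysis at $S_u$: one must verify that no surviving rule application can re-enable a rule that an earlier phase has retired. I expect this to be the main (though modest) obstacle, and the two observations that make it go through are (i) that Lemma~\ref{lemma: B-rules} together with the strictly-two-layered strategy freezes the constraint set to $C_b$ after $S_b$, so that no new slack constraints, divisibility constraints, or potential conflicting cores can appear, and (ii) that Lemma~\ref{lemma: U-Prop} freezes all bounds after $S_u$, so that neither a propagation/decision nor a fresh Conflict(-Div) on a guarded constraint — which in any case is accounted for by Lemma~\ref{lemma: B-rules} — can occur. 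Given these, the residual rule set is exactly $\{$Sat, Unsat, Unsat-Div$\}$, and the theorem follows.
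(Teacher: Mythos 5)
Your proposal is correct and follows essentially the same route as the paper: the paper's own proof is precisely the chaining of Lemmas~\ref{lemma:prepphase}, \ref{lemma:saturation}, \ref{lemma: B-rules}, and~\ref{lemma: U-Prop} to reach a state $S_u$ after which only Sat, Unsat, and Unsat-Div are applicable, each leading to a final state. Your additional closing case analysis at $S_u$ only makes explicit what the paper leaves implicit.
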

\begin{proof}  
  By Lemmas \ref{lemma:prepphase}, \ref{lemma:saturation}, \ref{lemma: B-rules}, and \ref{lemma: U-Prop}, CUTSAT++ reaches a state $S_u$ after which only the rules Sat, Unsat, and Unsat-Div are applicable, which lead to a final state. 
  Therefore, CUTSAT++ does not diverge.
\end{proof}

\subsection{Frozen States}

Our CUTSAT++ algorithm never reaches a frozen state.
Let $x$ be the smallest unfixed variable with respect to $\prec$.
Whenever $x$ is guarded we can propagate a constraint $\pm x - b \leq 0 \in C$ and then fix $x$ by introducing a decision.
If we cannot propagate any bound for $x$, then $x$ is unguarded and stuck, and therefore Slack-Intro is applicable.
If we cannot fix $x$ by introducing a decision, then $x$ is unguarded and there is a conflict constraint.
Guarded conflict constraints are resolved via the Conflict(-Div) rules.
Unguarded conflict constraints are resolved via the strong conflict resolution rules.
Unless a final state is reached CUTSAT++ has always a rule applicable.

The proof outlined above works because CUTSAT++ encounters only unguarded conflict constraints that are either the result of multiple contradicting divisibility constraints resolvable by Solve-Div-Left and Solve-Div-Right, or expressible via a conflicting core.
Since conflicting cores are only defined over constraints and propagated bounds, we have to guarantee that CUTSAT++ never encounters an unguarded conflict constraint $I$ where $x = \topop(I)$ is fixed with a Decision.
We express this property with the following invariant fulfilled by every state visited by CUTSAT++:

\begin{definition}
  \label{definition:eager top-level propagated}
  A state $S = \SearchSt{M}{C}(\vdash I)$ is called \emph{eager top-level propagated} if 
for all unguarded variables $x$, all decisions $\gamma = \decBnd{x}{\bowtie}{b}$ in 
$M = \concseq{ M', \gamma, M''}$, and all constraints $J \in C$ with $\topop(J)=x$: 
(1) all other variables contained in $J$ are 
fixed in $M'$ and (2) $J$ is no conflict in $S$.
\end{definition}

\begin{lemma}
  \label{lemma: all states eager top-level propagated}
  If $S'$ is an eager top-level propagated state (Def.~\ref{definition:eager top-level propagated}), then any successor state $S = \SearchSt{M}{C}(\vdash I)$ reachable by CUTSAT++ is eager top-level propagated.
\end{lemma}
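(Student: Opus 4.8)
The plan is to prove this one-step preservation property by a case analysis on the CUTSAT++ rule that takes $S'$ to $S$; no induction on the derivation is needed, since the statement already assumes the invariant at $S'$ (and the initial state $\SearchSt{\emptyseq}{C_0}$ satisfies Definition~\ref{definition:eager top-level propagated} vacuously, the empty sequence containing no decided bound). Throughout I would use the standing facts that $M$ stays consistent, that a variable targeted by Propagate, Propagate-Div, or Backjump is not fixed immediately before the appended bound (by the $\improves$ precondition), and that a variable targeted by Decide is likewise unfixed beforehand.

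The rules that end in a final state (Sat, Unsat, Unsat-Div) make the claim vacuous. Conflict and Conflict-Div change neither $M$ nor $C$; Learn only adds a guarded constraint, whose top variable is therefore guarded and hence outside the scope of the invariant; Forget only removes a constraint; so each of these preserves the invariant immediately. For Decide applied to an unguarded variable $x$, conditions~(1) and~(2) for the freshly added decision $\gamma = \decBnd{x}{\bowtie}{b}$ are precisely the preconditions~2(a) and~2(b) of an eager top-level propagating strategy (Definition~\ref{def:eager top}); all earlier decisions are unaffected because $C$ is unchanged and $M$ only grows at the right end; Decide for a guarded variable is outside the scope. For Propagate, Propagate-Div, and Backjump one appends a propagated bound for some variable $w$ — Backjump additionally deletes a decided bound and a suffix of $M$, which only drops decisions to be checked and leaves every surviving decision's prefix unchanged. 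Such an append cannot turn a constraint $J$ with decided top variable $z=\topop(J)$ into a conflict: by the invariant at $S'$ all variables of $J$ other than $z$ were already fixed strictly below the decision for $z$, and $w$, not being fixed at the point of the append, cannot be one of them, so neither $\vars(J)$ nor the values of its members change.

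For Resolve and Skip-Decision one deletes the last bound $\gamma$ of $M$ in a conflict state; its variable $w$ is, by the same precondition reasoning, not fixed below $\gamma$ and hence not an already-fixed non-top variable of any constraint under a surviving decision, so deleting $\gamma$ leaves all relevant conflict statuses and all prefixes below surviving decisions intact. For Slack-Intro the stuck variable $x$ has no bound at all in $M$, so $M$ carries no decision for $x$; since the slack variable $x_S$ is the $\prec$-smallest unguarded variable, $x$ is the top variable of both $x - x_S \leq 0$ and $-x - x_S \leq 0$, so the invariant conditions for these are vacuous, while $-x_S \leq 0$ is either fresh (with $x_S$ undecided) or already in $C$ (covered by the invariant at $S'$), and the old constraints are untouched. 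For Resolve-Cooper and Solve-Div-Right the newly learned constraints have top variable either the fresh $k$, which is guarded by $R_k$ and hence outside the scope, or a variable $\succeq y$, where $y$ is the backjump target; since $M' = \prefix(M,y)$ retains no decided bound for a variable that is not $\prec y$, none of the fresh constraints falls under a surviving decision, and for the old constraints the prefix below each surviving decision is unchanged, so their conflict statuses are preserved.

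The delicate case, and where I expect the real work, is Solve-Div-Left, which does not backjump and whose precondition guarantees only that $I'_2$, the constraint carrying the strictly smaller top variable produced by $\divsolve$, is not a conflict. One still has to verify the invariant for $I'_1$, which keeps top variable $x$: when $x$ is fixed by a decision $\gamma_x$, the non-$x$ variables of $I'_1$ lie among the non-$x$ variables of $I_1$ and $I_2$, hence are fixed below $\gamma_x$ by the invariant at $S'$, and a direct computation with $d = \gcd(a_1 d_2, a_2 d_1) = c_1 a_1 d_2 + c_2 a_2 d_1$ shows that $I_1$ and $I_2$ being satisfied at $\gamma_x$ forces $I'_1$ to be satisfied, so $I'_1$ is not a conflict. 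Verifying condition~(1) for $I'_2$ when its top variable carries a decision is where the split between Solve-Div-Left and Solve-Div-Right matters: were $I'_2$ to expose a constraint under a decision that its variables do not justify, $I'_2$ would itself be a conflict, so Solve-Div-Right with its backjump to $\prefix(M,\topop(I'_2))$ would apply instead and clear that decision. Making this dichotomy precise, together with tracking where each variable of $I'_1$ and $I'_2$ becomes fixed in $M$, is the main obstacle; the remaining bookkeeping is routine.
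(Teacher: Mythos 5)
Your proof follows essentially the same route as the paper's: a case analysis over the rules, using the $\improves$ and Decide preconditions of the eager top-level propagating strategy for appended bounds, the backjump to $\prefix(M,y)$ for Resolve-Cooper and Solve-Div-Right, and the equivalence preservation of $\divsolve$ for Solve-Div-Left. The one point you flag as the remaining obstacle --- condition~(1) for $I'_2$ under a pre-existing decision in the Solve-Div-Left case --- is exactly the point the paper itself dispatches with ``it is easy to see,'' so your treatment is, if anything, more candid about where the delicacy lies.
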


\begin{proof}
Let $S'$ be an eager top-level propagated state and $S$ its successor, i.e., $S' \CSArrow S$.
We prove this Lemma with a case distinction on the rule leading to the above transition:
\begin{enumerate}
\item Let the applied rule be Propagate(-Div). Then $S' = \SearchSt{M'}{C'}$ and $S = \SearchSt{\concseq{M', \propBnd{x}{\bowtie}{b}{J}}}{C'}$.
      Let $J' \in C'$ be the constraint used for propagation, i.e., $J'$ fulfils the properties $\improves(J',x ,M')$, 
      $\bound(J' ,x ,M') = b$ and $J = \tight(J',x ,M')$ (or $J = \divderive(J',x,M')$).
      Let the  unguarded variable $y$ be fixed by a decided bound $\gamma$, i.e., $M' = \concseq{ M'', \gamma, M'''}$.
      Let $I \in C'$ be a constraint with $\topop(I)=y$. 
      Since $S'$ is eager top-level propagated, all variables in $I$ are fixed in $M'$ and $M''$.
      The variable $x$ is not fixed in $M'$, because the predicate $\improves(J',x,M')$ must be true for Propagate(-Div) to be applicable.
      Therefore, $x$ is not contained in $I$ (or $\topop(I)=y \prec x$) and $I$ is still no conflict in $S$.
      Furthermore, all variables in $I$ are still fixed in $\concseq{M', \propBnd{x}{\bowtie}{b}{J}}$.
      We conclude that $S$ is eager top-level propagated.
\item Let the applied rule be Decide. Then $S' = \SearchSt{M'}{C'}$ and $S = \SearchSt{\concseq{M', \decBnd{x}{\bowtie}{b}}}{C'}$.
      We will use the eager top-level propagating strategy (Def.~\ref{def:eager top}) to prove that $S$ is an eager top-level propagated successor state.
      We consider all  unguarded variables $y$ decided in $S'$ by a decided bound $\gamma$.
      Let $I \in C'$ be a constraint with $\topop(I)=y$.
      The bound $\gamma$ is part of $M'$, i.e., $M' = \concseq{ M'', \gamma, M'''}$.
      As $S'$ is eager top-level propagated, all other variables contained in $I$ are fixed in $M'$ and $M''$.
      Since $\lowerop(x, M') < \upper(x, M')$ is a condition of the Decide rule, the variable $x$ is not fixed in $M'$.
      Therefore, $x$ is not contained in $I$ ($\topop(I)=y \prec x$), and $I$ is still no conflict in $S$.
      Furthermore, all variables in $I$ are still fixed in $\concseq{M', \propBnd{x}{\bowtie}{b}{J}}$.
      Next, we prove that the newly decided variable $x$ does not violate that $S$ is eager top-level propagated.
      Considering Def.~\ref{def:eager top}.2(a) we see that Def.~\ref{definition:eager top-level propagated}.1 is fulfilled.
      Similarly, Def.~\ref{def:eager top}.2(b) enforces Def.~\ref{definition:eager top-level propagated}.2.
      We conclude that $S$ is eager top-level propagated.
\item Let the applied rule be Unsat(-Div) or Sat. Then the successor state $S$ is neither a search- or conflict-state. 
      The Lemma is thereby trivially fulfilled.
\item Let the applied rule be Forget. Then $S' = \SearchSt{M'}{C' \cup \{J\}}$ and $S = \SearchSt{M'}{C'}$.
      Therefore, any conflict $I \in C'$ and any decision in $S$ is also contained in $S'$.
      We conclude that $S$ is eager top-level propagated.
\item Let the applied rule be Slack-Intro. Then $S' = \SearchSt{M'}{C'}$, $x$ is a stuck variable in $S'$ and $S = \SearchSt{M'}{C' \cup \{-x_S \leq 0, x - x_S \leq 0, -x - x_S \leq 0\}}$.
      Either Slack-Intro was applied before and $-x_S \leq 0 \in C'$ or $x_S$ has $\upper(x_S,M) = \infty$, and $-x_S \leq 0$ is not a conflict in $S$.
      Since $x$ was stuck in $S'$, it is unfixed, and the top variable in the new constraints $\{x - x_S \leq 0, -x - x_S \leq 0\}$.
      We conclude that $S$ is eager top-level propagated.
\item Let the applied rule be Resolve-Cooper. Then $S' = \SearchSt{M'}{C'}$ and $S = \SearchSt{M}{C' \cup R_c \cup R_k}$.
      Notice that $M = \prefix(M', y)$ with $y = \min_{I \in R_c} \{\topop(I)\}$.
      Therefore, $M$ is the prefix of $M'$ without decisions in variables greater or equal to $y$.
      Since $y \preceq x$ for all $I \in R$ and $x = \topop(I)$, we deduce that any $I \in R$ that is a conflict has no decision for its top variable $x$ in $S$.
      Since $M$ is a prefix of $M'$, every conflict $I \in C'$ appearing in state $S$ also appears in state $S'$.
      Now it is easy to see that $S$ is eager top-level propagated because $S'$ was eager top-level propagated.
\item Let the applied rule be Solve-Div-Right. Then $S' = \SearchSt{M'}{C' \cup \{I_1, I_2\}}$ and $S = \SearchSt{M}{C' \cup  \{I'_1, I'_2\}}$.
      We notice that $M = \prefix(M', y)$ with $y = \topop(I'_2)$.
      Therefore, $M$ is the prefix of $M'$ without decisions in variables greater or equal to $y$, which includes especially the variable $x = \topop(I_1)$.
      We deduce that neither the top variable of $I'_1$ or $I'_2$ is fixed by a decision.
      Since $M$ is a prefix of $M'$, every conflict $I \in C'$ appearing in state $S$ also appears in state $S'$.
      Now it is easy to see that $S$ is eager top-level propagated because $S'$ was eager top-level propagated.
\item Let the applied rule be Solve-Div-Left. Then $S' = \SearchSt{M'}{C' \cup \{I_1, I_2\}}$ and $S = \SearchSt{M'}{C' \cup  \{I'_1, I'_2\}}$.
      Since the bound sequence is the same in both states, every conflict $I \in C'$ appearing in state $S$ also appears in state $S'$.
      By the definition of the Solve-Div-Left rule, $I'_2$ is no conflict in state $S$.
      Note that $\divsolve$ is an equivalence preserving transformation.
      Thus if $I'_1$ were a conflict in $S$, and $\topop(I'_1)=x$ fixed by a Decision, then $I_1$ or $I_2$ is a conflict in $S'$.
      Therefore, $I'_1$ is no conflict or $\topop(I'_1) = x$ is not decided by a Decision.
      Now it is easy to see that $S$ is eager top-level propagated because $S'$ was eager top-level propagated.
\item Let the applied rule be Conflict or Conflict-Div. Then $S' = \SearchSt{M'}{C'}$ and $S = \ConflictSt{M'}{C'}{I}$.
      It is easy to see that $S$ is eager top-level propagated because $S'$ is eager top-level propagated.
\item Let the applied rule be Resolve or Skip-Decision. Then $S' = \ConflictSt{\concseq{M, \gamma}}{C'}{J'}$ and $S = \ConflictSt{M}{C'}{J}$.
      Since $M$ is a prefix of $M'$, every conflict $I \in C'$ appearing in state $S$ also appears in state $S'$.
      Now it is easy to see that $S$ is eager top-level propagated because $S'$ was eager top-level propagated.
\item Let the applied rule be Learn. Then $S' = \ConflictSt{\concseq{M', \gamma}}{C'}{I}$ and $S = \ConflictSt{M'}{C' \cup {I}}{I}$.
      Since CUTSAT++ uses a two-layered strategy (Def.~\ref{definition:strictly-two-layered}), $I$ is a guarded constraint.
      Now it is easy to see that $S$ is eager top-level propagated because $S'$ was eager top-level propagated.
\item Let the applied rule be Backjump. Then $S' = \ConflictSt{\concseq{M', \gamma, M''}}{C'}{I}$ and $S = \SearchSt{\concseq{M', \gamma'}}{C'}$.
      Since CUTSAT++ uses a two-layered strategy (Def.~\ref{definition:strictly-two-layered}), $I$ is a guarded constraint.
      Now it is easy to see that $S$ is eager top-level propagated because $S'$ was eager top-level propagated.
\end{enumerate}
\end{proof}

Since the initial state $\SearchSt{\emptyseq}{C_0}$ fulfils the eager top-level propagated properties trivially, it is clear that CUTSAT++ produces only eager top-level states, except for the final states.
The eager top-level propagated property is so important, because it allows us to show that CUTSAT++ resolves any conflict $I$ it encounters.
In case the conflict is a guarded constraint this is done with the conflict rules.
Otherwise, the conflict $I$ is an unguarded constraint and CUTSAT++ simulates weak Cooper elimination with the strong conflict resolution rules.
First, we use Solve-Div-Left to simulate Phase I. This either ends with a call to Solve-Div-Right resolving the conflict or CUTSAT++ finds a conflicting core. Then the conflicting core is resolved with the rules Resolve-Cooper.

\begin{lemma}
  \label{lemma:conflict resolvable}
  Let $S = \SearchSt{M}{C}$ be a state reachable by CUTSAT++.
  Let $I \in C$ be a conflict in state $S$. Then state $S$ is not frozen.
\end{lemma}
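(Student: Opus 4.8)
The plan is to show that whenever a state $S = \SearchSt{M}{C}$ reachable by CUTSAT++ contains a conflict $I$, at least one rule is applicable, so $S$ is not frozen. I would split on whether $I$ is guarded or unguarded. If $I$ is guarded, then by Definition~\ref{definition:strictly-two-layered}.(3) the rules Conflict and Conflict-Div are available for guarded constraints, and one of them applies to $I$ directly (if $I$ is an inequality, Conflict applies since $\lowerop(I,M)>0$; if $I$ is a divisibility constraint, Conflict-Div applies, using the fact that a conflict divisibility constraint has all non-top variables fixed and a $\bound$ value outside the current interval for its top variable). So the substantive case is $I$ unguarded.

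For the unguarded case, let $x = \topop(I)$. By Lemma~\ref{lemma: all states eager top-level propagated} and the fact that the initial state is eager top-level propagated, $S$ is eager top-level propagated, so $x$ is not fixed by a decision in $M$ while $I$ is a conflict; more precisely, no decision for $x$ can sit on $M$ together with a conflict $I$ having $\topop(I)=x$. I would then do a sub-case analysis on the structure of $I$. If $I$ is an inequality $\pm ax + p \le 0$: since $I$ is unguarded with top variable $x$ and all smaller variables are (by the eager top-level propagating discipline) the only ones that could already be bounded, $I$ must be an inequality in which $x$ could in principle be an interval conflicting core partner — but more usefully, I would argue that either another inequality with top variable $x$ yields an interval conflicting core $(x,C')$ with $C' \subseteq C$, making Resolve-Cooper applicable (after checking its side conditions: all $z \prec x$ fixed, and no conflict $J$ with $\topop(J) \prec x$ — which can be arranged by first resolving any such smaller conflict, i.e. we may assume $x$ is the $\prec$-minimal top variable of a conflict), or $x$ is genuinely unbounded on one side, in which case the eager strategy would have propagated rather than left a conflict, giving a contradiction. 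If $I$ is a divisibility constraint $d \mid ax + p$ with $x=\topop(I)$ unguarded: if $C$ contains a second divisibility constraint with top variable $x$, then Solve-Div-Left or Solve-Div-Right is applicable (they require exactly top-variable $x$, $x$ unguarded, all other variables fixed — which holds since $I$ is a conflict and $S$ is eager top-level propagated); the output $I'_2$ of $\divsolve$ is either a conflict (Solve-Div-Right) or not (Solve-Div-Left), so one of the two always fires. If $I$ is the unique divisibility constraint with top variable $x$: then if $\gcd(a,d) \nmid \lowerop(p,M)$ the rule Unsat-Div applies; otherwise, by Lemma~\ref{lemma: cooper-div solution set} the solution set is nonempty and by Lemma~\ref{lemma: cooper-div equivalence} we get a diophantine conflicting core $(x,\{d\mid ax+p\})$ only when no $b_d$ works — and if some $b_d$ works then either $x$ is bounded (so Propagate-Div or Conflict-Div would have acted) or unbounded, contradicting that $I$ is a conflict. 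So in the remaining genuine case $(x,\{I\})$ is a diophantine conflicting core and Resolve-Cooper applies with $\cooper(x,\{I\}) = (\emptyset, \{\gcd(c,d)\mid s\})$.

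The main obstacle I anticipate is verifying the side conditions of Resolve-Cooper, in particular ``if $J \in C$ is a conflict, then $\topop(J) \not\prec x$.'' This requires choosing $x$ to be the $\prec$-minimal top variable among all conflict constraints in $C$ and then showing that for that $x$ the relevant conflicting core actually exists with $C' \subseteq C$ and that all $z \prec x$ are already fixed. The latter I would get by first handling any conflict whose top variable is smaller — but since we picked the minimal one, there is none; and the fixedness of all $z \prec x$ should follow from a separate invariant or from noting that if some $z \prec x$ were unfixed, then $z$ would be the smallest unfixed variable and the ``frozen state'' analysis preceding this lemma (guarded $z$: propagate $\pm z - b \le 0$ then Decide; unguarded stuck $z$: Slack-Intro) shows a rule is applicable anyway, so $S$ is not frozen. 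In fact this last observation short-circuits much of the argument: if any variable $z \prec x$ is unfixed, a rule is applicable by the standard argument, so we may assume all $z \prec x$ are fixed, and then the conflicting-core/Solve-Div machinery above applies. I would therefore structure the proof as: first dispose of the case where some variable below $\topop(I)$ is unfixed (rule applicable by the preceding paragraph's reasoning), then in the remaining case carry out the guarded/unguarded and inequality/divisibility sub-case analysis, in each sub-case exhibiting the applicable rule.
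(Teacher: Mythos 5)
Your proposal follows essentially the same route as the paper's proof: take the $\prec$-minimal top variable $x$ of a conflict, dispatch guarded conflicts to Conflict(-Div), argue that all variables below $x$ must be fixed (else Slack-Intro, Propagate, Solve-Div, Resolve-Cooper, or Decide fires), reduce to at most one divisibility constraint with top variable $x$ via Solve-Div-Left/Right, and then exhibit an interval, divisibility, or diophantine conflicting core for Resolve-Cooper, using the eager top-level propagated invariant to exclude decisions on $x$. One small slip: Unsat-Div requires $\gcd(d,a_1,\ldots,a_n)\nmid c$ for the \emph{constant} term and \emph{all} coefficients, so it need not apply when $\gcd(a,d)\nmid\lowerop(p,M)$ with $p$ non-constant --- but that situation is exactly your diophantine conflicting core, so your stated fallback to Resolve-Cooper already covers it.
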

\begin{proof}
  Assume for a contradiction that state $S$ is frozen.
  W.l.o.g. we assume that $x = \topop(I)$ is the smallest variable in our order that is top variable in a conflict $I' \in C$.
  If $x$ is a guarded variable then Conflict or Conflict-Div is applicable, which contradicts our initial assumption!
  Therefore, $x$ is an unguarded variable.
  Furthermore, all variables $y$ smaller than $x$ are fixed.
  Otherwise, we deduce for the smallest unfixed variable $y$ that either
  \begin{itemize}
  \item $y$ is stuck and Slack-Intro is applicable
  \item Propagate is applicable to a constraint $I'$ where $\topop(I') = y$
  \item $C$ contains at least two divisibility constraints $I_1, I_2$ that have $y$ as their top variable and Solve-Div-Left or Solve-Div-Right is applicable
  \item $S$ contains a diophantine conflicting core $(y,I_d)$ and Resolve-Cooper is applicable
  \item Decision is applicable to $y$ because all conditions in Def.~\ref{def:eager top}.2 are fulfilled
  \end{itemize}
  Since $S$ is eager top-level propagated and $I$ is a conflict with top variable $x$, we know that state $S$ contains no decision for $x$ (Def.~\ref{definition:eager top-level propagated} and Lemma~\ref{lemma: all states eager top-level propagated}).
  W.l.o.g. we assume that $C$ contains at most one divisibility constraint $I_d$ with $x$ as its top variable.
  Otherwise, Solve-Div-Left or Solve-Div-Right are applicable, which contradicts our initial assumption!
  Let $x \geq b_l$ be the strictest lower bound $b_l = \bound(x,I_l,M,{\geq})$ for an inequality $I_l \in C$ with top variable $x$
  or $-\infty$ if there is no inequality propagating a lower bound.
  Let $x \leq b_u$ be the strictest upper bound $b_u = \bound(x,I_u,M,{\leq})$ for an inequality $I_u \in C$ with top variable $x$
  or $\infty$ if there is no inequality propagating an upper bound.
  Since the strictly-two-layered strategy forbids the application of Forget to unguarded constraints,
  CUTSAT++ never removes an unguarded inequality.
  Furthermore, any bound $x \bowtie b$ propagated from a divisibility constraint requires another bound $x \bowtie b'$ propagated from an inequality.
  We deduce that $b_u \neq \infty$ if $\upper(x,M) \neq \infty$ and $b_l \neq -\infty$ if $\lowerop(x,M) \neq -\infty$.
  Next, we do a case distinction on whether the bounds $b_u$ and $b_l$ are finite:
  \begin{itemize}
    \item Let $b_u = \infty$ and $b_l = -\infty$. Then it holds for all inequalities $a x + p \leq 0$ that $\lowerop(a x + p) = -\infty$.
          Thus $I$ is no inequality.
          A divisibility constraint is only a conflict if $\lowerop(x,M) \neq -\infty$ and $\upper(x,M) \neq \infty$.
          This contradicts the assumption that $I$ is a conflict.
    \item Let $b_u = \infty$ and $b_l \in \mathbb{Z}$. Then it holds for all inequalities $a x + p \leq 0$ with $a < 0$ that $\lowerop(a x + p) = -\infty$.
          Thus $I$ is no inequality.
          A divisibility constraint is only a conflict if $\lowerop(x,M) \neq -\infty$ and $\upper(x,M) \neq \infty$.
          This contradicts the assumption that $I$ is a conflict.
    \item Let $b_l = -\infty$ and $b_u \in \mathbb{Z}$. Then it holds for all inequalities $a x + p \leq 0$ with $a > 0$ that $\lowerop(a x + p) = -\infty$.
          Thus $I$ is no inequality.
          A divisibility constraint is only a conflict if $\lowerop(x,M) \neq -\infty$ and $\upper(x,M) \neq \infty$.
          This contradicts the assumption that $I$ is a conflict.
    \item Let $b_u < b_l$. Then $(x,\{I_l,I_u\})$ is a conflicting core, and Resolve-Cooper is applicable.
          This contradicts the assumption that no rule is applicable.
    \item Let $\{b_l, \ldots, b_u\}\neq \emptyset$. Then $I$ is not an inequality.
          If $(x,\{I_l,I_u,I_d\})$ is a conflicting core, then Resolve-Cooper is applicable contradicting our initial assumption.
          Therefore, there exists a solution $b_d \in \{b_l, \ldots, b_u\}$ for $x$ satisfying $I_d$.
          Let $D$ be the set of divisibility constraints used to propagate a bound for $x$ in $M$. 
          All constraints $D' \subseteq D$ not contained in $C$, i.e., $D' = D \setminus C = D \setminus \{I_d\}$ were eliminated with $\divsolve$.
          It is easy to see that there exists a set of constraints $D^* = D^{**} \cup \{I_d\}$ contained in $C$ that implies satisfiability of $D$:\newline
          \centerline{$
            D^* = D^{**} \cup \{I_d\} \rightarrow D,
          $}
          and $D^{**}$ contains only variables $y$ smaller than $x$ (Proof the same as for Lemma~\ref{lemma:replacing resolvents}).
          In state $S$, the set of divisibility constraints $D^*$ is fixed, and satisfied under the partial assignment of $M$.
          Otherwise $S$ would contain a conflict $I' \in D^* \subseteq C$ with $\topop(I') \prec x$.
          This implies that the solution $b_d \in \{b_l, \ldots, b_u\}$ for $x$ that satisfies $I_d$ in $M$, also satisfies $D \cup \{I_d\}$.
          Furthermore, all propagated constraints are satisfied if $x$ is set to $b_d$:\newline
          \centerline{$
             \lowerop(x,M) \leq b_d \leq \upper(x,M).
          $}
          This contradicts the assumption that there exists a conflict $I$ with $\topop(I)=x$.
  \end{itemize}
\end{proof}

The remainder of the proof follows directly the proof outline from above:

\begin{theorem}
  \label{th:frozen}
  Let $S = \SearchSt{M}{C}$ be a state reachable by CUTSAT++.
  Then $S$ is not frozen.
\end{theorem}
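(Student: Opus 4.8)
The plan is to assume that the reachable state $S=\SearchSt{M}{C}$ is frozen and to exhibit an applicable rule, contradicting this. By Lemma~\ref{lemma: all states eager top-level propagated} and the fact that the initial state is trivially eager top-level propagated, $S$ satisfies Def.~\ref{definition:eager top-level propagated}; this invariant, reasonableness (Def.~\ref{def: reasonable strategy}), and the strictly-two-layered restrictions (Def.~\ref{definition:strictly-two-layered}) drive the case split below.

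First I would handle the case where $C$ contains a conflict under $M$: then Lemma~\ref{lemma:conflict resolvable} already gives that $S$ is not frozen. So assume $C$ has no conflict. If all variables of $C$ are fixed, then Unsat-Div being inapplicable forces every divisibility constraint to have its leading gcd divide its constant, so $\upsilon[M]$ satisfies $C$ and Sat applies. Otherwise let $x$ be the $\prec$-smallest unfixed variable; all $y\prec x$ are fixed. I would then argue along the lines of the Frozen-States outline and of Lemma~\ref{lemma:conflict resolvable}. If $x$ is guarded, the constraints $x-u_x\leq 0$ and $-x+l_x\leq 0$ (whose propagation has top priority by reasonableness) either enable Propagate or else pin $x$ between two finite bounds with a gap, enabling Decide. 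If $x$ is unguarded, then all constraints with top variable $x$ have their remaining variables fixed; two divisibility constraints with top variable $x$ would enable Solve-Div-Left or Solve-Div-Right, so at most one, $I_d$, remains; and, writing $b_l,b_u$ for the strictest bounds for $x$ derivable by Propagate from top-variable-$x$ inequalities ($\pm\infty$ if none), one checks the same subcases as in Lemma~\ref{lemma:conflict resolvable}: a strictly improving inequality gives Propagate; $b_u<b_l$ gives an interval conflicting core; a nonempty $[b_l,b_u]$ missing every solution of $I_d$ gives a divisibility conflicting core; an unbounded $x$ with unsolvable $I_d$ gives a diophantine conflicting core; each of these conflicting cores enables Resolve-Cooper; an unbounded $x$ with solvable $I_d$ enables Propagate-Div or Decide; and the leftover case, $x$ having no top-variable inequality and no $I_d$, makes $x$ --- and then, as one shows, every unfixed variable --- stuck, so $S$ is a stuck state and Slack-Intro applies. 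In every branch a rule is applicable, contradicting frozenness.

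The main obstacle I anticipate is the unguarded case. Unlike in Lemma~\ref{lemma:conflict resolvable}, here there is no conflict to contradict, so each conflicting-core subcase must instead be closed by verifying that Resolve-Cooper is genuinely applicable: its side condition that no conflict has top variable $\prec x$ is now free, but the shape requirements of Def.~\ref{definition:conflicting cores} and the fixedness of all $z\prec x$ still need checking, and the Propagate-Div and Decide subcases must be matched carefully against Def.~\ref{def:eager top}. The most delicate point is Slack-Intro: it fires only when the \emph{entire} state is stuck, not merely when $x$ is, so I would need to argue from frozenness (a propagating inequality would let Propagate fire), from the guarded-variable analysis above, from the minimality of $x$, and from the fact that every variable created after the initial state carries a bounding constraint, that no unfixed variable can fail to be stuck.
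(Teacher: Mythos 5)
Your proposal is correct and follows essentially the same route as the paper: assume $S$ is frozen, dispose of the conflict case via Lemma~\ref{lemma:conflict resolvable}, dispose of the all-fixed case via Sat, and then show that for the $\prec$-smallest unfixed variable one of Propagate(-Div), Decide, Solve-Div-Left/Right, Resolve-Cooper, or Slack-Intro must fire. The only (cosmetic) difference is that the paper closes the unguarded case backwards --- Decide can only be blocked by Def.~\ref{def:eager top}.2(b), which forces a would-be conflict and hence an applicable Propagate(-Div) --- whereas you enumerate the constraint configurations forwards; your version is if anything more explicit about the conflicting-core and Slack-Intro subcases than the paper's.
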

\begin{proof}
    Assume for a contradiction that $S = \SearchSt{M}{C}$ is a frozen state.
    It is easy to see that CUTSAT++ can propagate at least two bounds for every guarded variable and afterwards use a Decision to fix them.
    Therefore, we assume that all guarded variables are fixed.
    By Lemma~\ref{lemma:conflict resolvable}, there is no conflict in state $S$.
    Since there is no conflict, at least one variable is unfixed or rule Sat~\cite{CutChase} would be applicable.
    Therefore, there must exist a smallest unfixed and unguarded variable $x$.  
    With the Slack-Intro-rules CUTSAT++ introduces for all variables at least one lower or upper bound.
    Therefore, there exists a violation to the conditions in Def.~\ref{def:eager top}.2 or Decide would be applicable to $x$.
    Since $x$ is the smallest unfixed variable, condition Def.~\ref{def:eager top}.2(a) holds.
    Def.~\ref{def:eager top}.2(c) is also easy to satisfy by applications of Solve-Div-Left or Solve-Div-Right.
    Therefore, Def.~\ref{def:eager top}.2(b) is violated.
    This implies that there exists a constraint $I \in C$ that is a conflict in $S' = \SearchSt{\concseq{ M, \gamma }}{C}$, where $\gamma$ is a decision in $x$ and $x = \topop(I)$.
    However, by Lemma~\ref{lemma:conflict resolvable}, it is not possible that $I \in C$ is a conflict in $S$ or $S$ would not be frozen.
    Finally, $I$ is a conflict only in $S'$ and not $S$ if Propagate(-Div) is applicable to $I$.
    With Solve-Div-Left and Solve-Div-Right it is relatively easy to fulfil the conditions for Def.~\ref{def:eager top}.1 and therefore Propagate(-Div) is applicable.
    We conclude that CUTSAT++ has always one applicable rule, which is a contradiction to our assumption!
\end{proof} 

\subsection{Proof for Completeness}
\label{SSE: Completeness}

All CUTSAT++ rules are sound, i.e., if $\SearchSt{M_i}{C_i}(\vdash I_i) \CSArrow \SearchSt{M_j}{C_j}(\vdash I_j)$ then any satisfiable assignment $\upsilon$ for $C_j$ is a satisfiable assignment also for $C_i$.
The rule Resolve-Cooper is sound because of the Lemmas \ref{lemma: cooper-wc equivalence} and \ref{lemma: cooper-div equivalence}.
The soundness of Solve-Div-Left and Solve-Div-Right follows from the fact that $\divsolve$ is an equivalence preserving transformation.
The soundness proofs for all other rules are either trivial or given in~\cite{CutChase}.

Summarizing, CUTSAT++ is terminating, sound, and never reaches a frozen state.
This in combination with obvious properties of the rules Sat, Unsat, and Unsat-Div implies completeness:

\begin{theorem}
  \label{theorem: unbounded completeness}
  If CUTSAT++ starts from an initial state $\SearchSt{\emptyseq}{C_0}$ then
  it either terminates in the \emph{unsat} state and $C_0$ is unsatisfiable, or it terminates with $\langle \upsilon , \mbox{\emph{sat}} \rangle$ where $\upsilon$ is a satisfiable assignment for $C_0$.
\end{theorem}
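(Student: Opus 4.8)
The plan is to read this theorem as the usual \emph{termination $+$ no frozen states $+$ soundness} package. Theorem~\ref{theorem: unbounded termination} already guarantees that every run of CUTSAT++ halts, Theorem~\ref{th:frozen} (together with the standard conflict analysis inherited from CUTSAT) guarantees that it halts in one of the two final states, and the soundness of the rules guarantees that this final state reports the correct answer. No new idea is required; the only real work is setting up the soundness invariant in the precise form the two final-state cases need.

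Concretely, I would fix an arbitrary derivation $\SearchSt{\emptyseq}{C_0} = S_0 \CSArrow S_1 \CSArrow \cdots$. By Theorem~\ref{theorem: unbounded termination} it is finite and ends in a state $S_n$ to which no rule applies. This $S_n$ cannot be a search state, since by Theorem~\ref{th:frozen} every reachable search state is either final or has an applicable rule; and it cannot be a conflict state $\ConflictSt{M}{C}{I}$ either, since in a conflict state $I$ is always an entailed inequality and, exactly as in the conflict analysis of CUTSAT~\cite{CutChase}, one of Learn, Resolve, Skip-Decision, Backjump, Unsat is then applicable. Hence $S_n \in \{\SatSt{M}, \UnsatSt\}$, so CUTSAT++ terminates and returns an answer.

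Next I would establish, by induction along the derivation, the soundness invariant in the following strong form: for every reachable state $\SearchSt{M_i}{C_i}(\vdash I_i)$, (i) $C_i$ and $C_0$ are equisatisfiable modulo the variables introduced after $S_0$, and every model of $C_i$ restricts to a model of $C_0$; and (ii) if the state is a conflict state, then $C_i \vdash_{\mathbb{Z}} I_i$. Part~(i) follows rule by rule from the soundness discussion preceding the theorem --- Lemma~\ref{lemma:strong resolvent} for Resolve-Cooper, Lemmas~\ref{lemma: cooper-wc equivalence} and~\ref{lemma: cooper-div equivalence} and the equivalence-preserving property of $\divsolve$ for Solve-Div-Left and Solve-Div-Right, the entailedness side conditions of Learn and Forget, and the conservativity of Slack-Intro --- while part~(ii) is the standard conflict invariant of CUTSAT~\cite{CutChase}. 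With this in hand I would finish by a case split on $S_n$. If $S_n = \SatSt{M}$, the guard of Sat gives $\upsilon[M]$ satisfies $C_{n-1}$; since $\vars(C_0) \subseteq \vars(C_{n-1})$, part~(i) turns this into $\upsilon[M]$ satisfies $C_0$, so $\upsilon = \upsilon[M]$ is the claimed satisfying assignment. If $S_n = \UnsatSt$, the last rule is Unsat-Div --- in which case $C_{n-1}$ contains a divisibility constraint $d \mid a_1 x_1 + \cdots + a_n x_n + c$ with $\gcd(d, a_1, \ldots, a_n) \nmid c$, which has no integer solution --- or Unsat --- in which case $S_{n-1} = \ConflictSt{M}{C_{n-1}}{b \leq 0}$ with $b > 0$, so by~(ii) $C_{n-1} \vdash_{\mathbb{Z}} (b \leq 0)$ and $C_{n-1}$ is unsatisfiable. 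In both subcases part~(i) yields that $C_0$ is unsatisfiable, which completes the proof.

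The step I expect to take the most care is part~(i) of the invariant --- not because it is deep, but because of the bookkeeping. Resolve-Cooper and Slack-Intro enlarge the variable set, so equisatisfiability must be read modulo projection onto $\vars(C_0)$, with the fresh $k$ existentially removed via Lemma~\ref{lemma:strong resolvent} and the slack variable $x_S$ simply chosen large enough; and Forget deletes constraints, so one must use its side condition $C \vdash_{\mathbb{Z}} J$ to see that deletion changes neither direction of satisfiability. A secondary point is the claim that a conflict state is never frozen, which relies on the conflict constraint always being an entailed inequality and on completeness of CUTSAT's conflict analysis~\cite{CutChase}. Once these points are checked the remaining argument is immediate.
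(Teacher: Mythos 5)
Your proposal is correct and follows essentially the same route as the paper's proof: termination via Theorem~\ref{theorem: unbounded termination}, absence of frozen states via Theorem~\ref{th:frozen}, and rule-by-rule soundness (Lemmas~\ref{lemma: cooper-wc equivalence}, \ref{lemma: cooper-div equivalence} and the soundness results inherited from CUTSAT) combined in a final case split on Sat, Unsat, and Unsat-Div. You merely spell out in more detail what the paper leaves implicit, namely the inductive soundness invariant (equisatisfiability modulo fresh variables plus entailment of the conflict constraint) and the observation that conflict states are never frozen.
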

\begin{proof}
   By Theorem \ref{theorem: unbounded termination}, CUTSAT++ is terminating.
   By \cite{CutChase}, and the Lemmas \ref{lemma: cooper-wc equivalence} and \ref{lemma: cooper-div equivalence}, CUTSAT++ is sound.
   By Theorem \ref{th:frozen}, CUTSAT++ never reaches a frozen state.
   Since CUTSAT++ is terminating and never reaches a frozen state, every application of CUTSAT++ ends via the rules Sat, Unsat, or Unsat-Div in one of the final states.
   The rule Sat is only applicable in a state $\SearchSt{M}{C}$ where $\upsilon[M ]$ satisfies $C$ and because of soundness also $C_0$.
   The rules Unsat and Unsat-Div are only applicable to states $\SearchSt{M}{C}(\vdash I)$ where the constraint set $C$ contains a trivially unsatisfiable constraint.
   In this case, it follows from the soundness of the CUTSAT++ rules that $C_0$ is unsatisfiable.
\end{proof}

  \section{Conclusion and Future Work}
\label{SE: Conclusion and Future Work}

The starting point of our work was an implementation of CUTSAT~\cite{CutChase} as a theory solver for hierarchic superposition~\cite{FietzkeWeidenbach12}.
In that course, we observed divergence for some of our problems.
The analysis of those divergences led to the development of the CUTSAT++ algorithm presented in this paper, which is a substantial extension of CUTSAT by means of the weak Cooper elimination describe in Section \ref{SE: Weak-Cooper-Elimination}.

As a next step, we plan to develop a prototypical implementation of CUTSAT++, to test its efficiency on benchmark problems.
Depending on the outcome, we consider integrating CUTSAT++ as a theory solver for hierarchic superposition modulo linear integer arithmetic~\cite{FietzkeWeidenbach12}.

Finally, we point at some possible improvements of CUTSAT++.
We see great potential in the development of constraint reduction techniques from (weak) Cooper elimination~\cite{Cooper:72a}.
For practical applicability such reduction techniques might be crucial.
The choice of the variable order $\prec$ has considerable impact on the efficiency of CUTSAT++.
It might be possibly to derive suitable orders via the analysis of the problem structure.
We might benefit from results and experiences of research in quantifier elimination with variable elimination orders.

  
\bibliographystyle{plain}
\bibliography{paper-bib}

\end{document}